\newcommand{\sign}{\text{sign}}
\newcolumntype{L}[1]{>{\RaggedRight\arraybackslash}p{#1}}
\newtheorem{remark}{Remark}
\newtheorem{lemma}{Lemma}
\definecolor{darkgreen}{RGB}{0, 150, 0}
\newcommand{\ubar}[1]{\underaccent{\bar}{#1}}
\def\mydefb#1{\expandafter\def\csname b#1\endcsname{\bm{#1}}}
\def\mydefallb#1{\ifx#1\mydefallb\else\mydefb#1\expandafter\mydefallb\fi}
\def\mydefgreek#1{\expandafter\def\csname b#1\endcsname{\text{\boldmath$\mathbf{\csname #1\endcsname}$}}}
\def\mydefallgreek#1{\ifx\mydefallgreek#1\else\mydefgreek{#1}%
	\lowercase{\mydefgreek{#1}}\expandafter\mydefallgreek\fi}
\newcommand{\Tau}{\mathrm{T}}
\newcommand{\bff}{\bm f}
\newcommand{\C}{\text{\normalfont C}}
\newcommand{\D}{\text{\normalfont D}}
\newcommand{\Tend}{n}
\newcommand{\Iend}{n}
\newcommand\blfootnote[1]{%
  \begingroup
  \renewcommand\thefootnote{}\footnote{#1}%
  \addtocounter{footnote}{-1}%
  \endgroup
}
\title{A computational method for type I error rate control in power-maximizing response-adaptive randomization}
\author{Stef Baas\thanks{Corresponding author} \\
	MRC Biostatistics Unit,
	University~of~Cambridge, Cambridge, United Kingdom\\\phantom{.}\\
	{\bf Lukas Pin} \\
	{MRC Biostatistics Unit}, {University of Cambridge}, {Cambridge}, {United Kingdom},\\\phantom{.}\\
	{\bf Sof\'ia S. Villar}\\
	{MRC Biostatistics Unit}, {University of Cambridge}, {Cambridge}, {United Kingdom}\\\phantom{.}\\
	{\bf William F. Rosenberger}\\
	{Department of Statistics}, 
{George Mason University}, {Arlington}, {Virginia}, {United States of America}
}
\date{}
\begin{document}

\maketitle
\begin{abstract}
	Maximizing statistical power in experimental design often involves imbalanced treatment allocation, but several challenges hinder its practical adoption: (1) the misconception that equal allocation always maximizes power,
(2) when only targeting maximum power, more than half the participants
may be expected to obtain inferior treatment, and (3) response-adaptive
randomization (RAR) targeting maximum statistical power may inflate type
I error rates substantially. Recent work identified issue (3) and proposed a
novel allocation procedure combined with the asymptotic score test. Instead,
the current research focuses on finite-sample guarantees. First, we analyze
the power for traditional power-maximizing RAR procedures under exact
tests, including a novel generalization of Boschloo's test. Second, we evaluate
constrained Markov decision process (CMDP) RAR procedures under exact
tests. These procedures target maximum average power under constraints
on pointwise and average type I error rates, with averages taken across the
parametric space. A combination of the unconditional exact test and the
CMDP procedure protecting allocations to the superior arm gives the best
performance, providing substantial power gains over equal allocation while allocating more participants in expectation to the superior treatment. Future
research could focus on the randomization test, in which CMDP procedures
exhibited lower power compared to other examined RAR procedures.
	\blfootnote{\scriptsize\newline {\bf Abbreviations}: RAR: response-adaptive randomization, CMDP: constrained Markov decision process, CMDP-P: CMDP maximizing power, CMDP-BP: CMDP maximizing power constraining patient benefit, DBCD: doubly adaptive biased coin design.}
	\hspace{-1.5mm}
\end{abstract}

\keywords{Boschloo's test \and Conditional exact test \and Constrained Markov decision processes \and  Neyman allocation \\\and Sequential experiments\and Unconditional exact test
 }

\section{Introduction}

A two-armed randomized controlled trial evaluates a potential causal effect between administering an intervention~(instead of a control treatment) to trial participants and an outcome variable of interest.
Randomized controlled trials traditionally keep the target proportion of participants randomized to each treatment fixed during the trial. 
  In this setting, the inferential properties of standard statistical tests are often well understood, while such designs can take less effort to implement in practice in comparison to alternatives.

A common misconception in the literature is that two-armed randomized controlled trials with a fixed equal proportion of participants allocated to each treatment~(i.e., equal allocation) lead to maximal statistical power~\citep{pin202511powerfulunequalallocation}. This statement does not hold in general;~\citet{azrielmandelrinott2011} showed that equal allocation optimizes power for binary outcomes,  in one~(the Pitman) asymptotic testing framework, whereas unbalanced allocation is optimal in another~(the Bahadur) asymptotic testing framework. 
\citet{Neyman1934} showed that the treatment group sizes minimizing the variance of the difference in means -- hence maximizing power of the Wald test in the normal outcomes model and by a normal approximation in many other statistical models -- are proportional to the standard deviations of the respective outcomes. 
\citet{Pin2024} showed that when changing the measure of interest, e.g., from the mean difference to the log odds ratio, the optimal proportion is almost never an equal proportion, while its functional form differs substantially from the optimal proportion for the mean difference.

The allocation proportions maximizing power often depend on unknown model parameters. 
To this end, several procedures have been proposed in the literature to learn and target the optimal allocation proportion.
These procedures can be classified as {\it response-adaptive randomization}~(RAR) procedures, which sequentially adjust the allocation probabilities of participants to treatments based on the trial history of observed participant outcomes and
treatment allocations. 
Power-maximizing RAR procedures are RAR procedures that maximize a metric related to statistical power.
Frequentist examples of power-maximizing RAR procedures are sequential estimation procedures targeting the proportion of~\citet{Neyman1934} based on the doubly adaptive biased coin design~\citep[DBCD,][]{eisele1994dabcd}, the efficient randomized design~\citep{hu2009ERADE}, or the mass-weighted urn design introduced in~\citet{ZHAO2015209}.
\citet{giovagnoli2017bopt} provided an overview and extension of Bayesian optimal designs for binary responses and showed that the allocated proportion under the $A$-optimal Bayesian RAR procedure~(using the allocation probability that minimizes a utility function describing the variance of the treatment effect estimator) converges to the proportion in~\citet{Neyman1934}, and is hence a power-maximizing RAR procedure.

An often overlooked aspect in the literature of power-maximizing RAR procedures is the finite-sample type I error rate. \citet{baldi_2018_wald} indicated for general RAR procedures that, although asymptotically the type I error rate is controlled by the classical Wald test, its inferential properties are compromised for smaller sample sizes. \citet{pin2025revisitingoptimalallocationsbinary} showed that for a specific RAR procedure targeting the proportions in~\citet{Neyman1934}, the type I error rate can become as high as 82.2\% for a trial size of~50 participants. 
Such a degree of type I error rate inflation is detrimental from a regulatory perspective, as type I error rate control is a minimal and first-line requirement for any clinical trial.
To control the type I error rate inflation, \citet{pin2025revisitingoptimalallocationsbinary} derived two novel optimal allocation proportions based on more robust score tests instead of the (unpooled) Wald test. They showed through simulations that these novel target proportions led to a better type I error rate control. 

In the current paper, %
we evaluate approaches with finite-sample guarantees for two-arm clinical trials with binary outcomes. In particular, we investigate two broad classes of alternative solutions to control the type I error rate for power-maximizing RAR procedures.
First, we investigate the use of alternative statistical tests. As part of this strategy, we consider conditional and unconditional exact tests for~(power-maximizing) RAR designs.
Exact tests are statistical tests ensuring a~(finite-sample) type I error rate bounded by a nominal significance level. Conditional exact tests provide this guarantee for several different reference sets~(e.g., all sets of trial realizations with a given total number of successes), while unconditional tests only provide this guarantee when the reference set just contains all trial realizations.
We construct exact tests through the network algorithm introduced for randomization tests in~\citet{mehta1988constructing} and first applied to construct exact tests~(based on the population model) for RAR procedures in~\citet{wei1990statistical}. We use the implementation of the network algorithm in~\citet{baas2025exact}, shown to be computationally tractable for trials with up to~$1,000$ participants, where previously the computational limit was around 100 participants. 
As part of this strategy, we develop and introduce a novel generalization of Boschloo's exact test~\citep{boschloo1970raised} for data collected using an RAR procedure. This exact test increases the unconditional power of the conditional exact test, which was often shown to already yield highest power out of all comparators for RAR designs in~\citet{baas2025exact}~(where only conditional and unconditional exact tests were considered). 

Our second solution approach is to consider the {\it constrained Markov decision process}~(CMDP) approach introduced in~\citet{baas2024CMDP} to construct novel (average) power-maximizing RAR procedures that control the type I error rate both pointwise and on average. Such RAR procedures are novel, as~\citet{baas2024CMDP} only considered CMDP procedures maximizing expected treatment outcomes under constraints on the average power and average type I error rate. In this work, we combine this approach with the exact testing strategy to balance power and type I error trade-offs.
It is hypothesized that the use of exact tests for power-maximizing CMDP designs leads to a lower degree of power deflation than when they are applied to traditional power-maximizing procedures. This is because the CMDP designs ensure that the type I error rate will already be close to the nominal level under the asymptotic Wald test, hence the performance will be similar under exact tests.   
While~\citet{baas2025exactstatisticaltestsusing} show that tests optimizing an unweighted average power measure can provide almost uniform power gains over standard tests in equal allocation designs, we note that, similar to the approach in~\citet{baas2025exactstatisticaltestsusing}, informative priors~(or weighted averages) can be used to tune the performance of the test to the specific trial parameters deemed realistic for a given trial application.

The paper is structured as follows: \autoref{subsect:model} introduces the considered clinical trial model,  exact statistical tests~(including our generalized version of Boschloo's test in~\autoref{sect:generalized_boschloo}), and the operating characteristics. \autoref{sect:RA_procedures} lists the~(power-maximizing) RAR procedures considered in the paper.
\autoref{sect:numerical_results} provides a numerical comparison of the considered RAR procedures and exact tests, as well as results from the randomization test. \autoref{sect:conclusion_discussion} concludes the paper and provides topics of future research.

\section{Exact tests for a two-arm response-adaptive design with binary outcomes}\label{subsect:model}
\subsection{Model:  Two-arm response-adaptive design with binary outcomes}
In this section, we describe the mathematical model underlying our numerical results.
Consider a clinical trial comparing two treatments with unknown outcome distributions: a control~(C) treatment and a developmental~(D, i.e., experimental) treatment. Participants are enrolled sequentially and assigned to one of the treatments using a {\it response-adaptive randomization}~(RAR) procedure. This procedure updates the allocation probabilities based on the current trial history. After each participant is assigned a treatment, a binary outcome --sampled from a Bernoulli distribution-- is observed and incorporated into the trial history before assigning the next participant.

We now formalize this trial description. Let~$\btheta=(\theta_\C,\theta_\D)\in(0,1)^2$ 
be a tuple containing (unknown) success probabilities. Throughout this paper, the same ordering convention (i.e., first~C then~D) will be used when forming tuples from variables for the control and developmental treatment.
For a fixed trial size of~$n\in\mathbb{N}$ participants, define~$\bY_{a} = (Y_{a,1}, Y_{a,1}, \dots, Y_{a,n})$
to be a sequence of independent Bernoulli random variables for each treatment arm~$a\in\{\C,\D\}$, hence, independently,~$\mathbb{P}_\btheta(Y_{a,t}=1)=\theta_a$ for~$a\in\{\C,\D\}$. Each~$Y_{a,t}$ denotes a potential outcome for trial participant~$t$ under treatment~$a$.
The set of possible trial histories of treatment allocations and outcomes is defined as~$\mathcal{H}=\bigcup_{t=0}^n\mathcal{H}_t$ where~$\mathcal{H}_0=\{()\}$ only contains the empty tuple and~$$
     \mathcal{H}_t=\{(a_1,y_{1},a_2, y_{2},\dots, a_t,y_{t}): y_{w}\in\{0,1\},\,a_w\in\{\C,\D\},\;\forall w\leq t\} \label{defn:histories}
$$ denotes the set of potential trial histories up to participant~$t.$ 

An RAR procedure~$\pi:\mathcal{H}\mapsto [0,1]$ is a mapping, taking as input the available trial history and producing as output the probability that the next participant is allocated to treatment~C. It is assumed that the trial is fully sequential, i.e., trial~participants are allocated to treatment one after another, and the corresponding treatment outcome for each participant is observed before the next participant needs to be allocated. Letting~$\bH_0=()$, we recursively define the trial history as\mbox{~$\bH_t=(A_1,Y_{A_1,1},A_2,Y_{A_2,2},\dots, A_t, Y_{A_t,t})$}, where the distribution of each~(random) \hbox{allocation~$A_{t}\in\{\C,\D\}$} is defined according to~$\mathbb{P}(A_t=\C\mid \bH_{t-1})=\pi(\bH_{t-1})$ for~$t=1,\dots, n$. We denote the probability distribution over trial histories under parameter vector~$\btheta$ and RAR procedure~$\pi$ by~$\mathbb{P}^\pi_\btheta$.

Let~$S_{a,t}$ denote the (random) number of successes and~$N_{a,t}$ denote the~(random) number of allocations for treatment arm~$a$ after allocating participant~$t$, i.e., for all~$a\in\{\C,\D\}, \;t\in\{1,\dots,n\}$~$$S_{a,t}= \sum_{u=1}^t Y_{a,u}\mathbb{I}(A_u=a),\quad N_{a,t} = \sum_{u=1}^t \mathbb{I}(A_u = a),$$ where~$\mathbb{I}$ is the indicator function. 
We now introduce a variable representing the trial \mbox{state,~$\bX_t = (\bS_{t},\bN_{t})$}. which summarizes the trial information through the successes and allocations for each arm up to allocating participant~$t$. 
Letting~$\bX_0=((0,0),(0,0))$ be the initial state, the state space for~$\bX=(\bX_t)_{t=0}^n$ equals~$\mathcal{X} = \cup_{t=0}^n\mathcal{X}_t$ where for all~$t$ we define:$$ \mathcal{X}_t=\{((s'_{\C},s'_{\D}),(n'_\C,n'_\D)): \bs',\bn'\in \{0,\dots, t\}^2,\,\bs'\leq \bn,\;\textstyle \sum_{a\in\{\C,\D\}}n_{a}=t \}.$$
By, e.g.,~\citet[Equation (1)]{yi2013exact}, it follows that
   for all states~$\bx_t \in\mathcal{X}_t$\begin{equation}\mathbb{P}_\btheta^\pi(\bX_t=\bx_t)= g_t^\pi(\bx_t)\prod_{a\in\{\C,\D\}}\theta_a^{s_a(\bx_t)}(1-\theta_a)^{n_{a}(\bx_t)-s_{a}(\bx_t)}\label{expression_likelihood},\end{equation}
   where~$n_{a}$, $s_{a}$ record the treatment group sizes and successes encoded in~$\bx_t$. The coefficients~$g_t^\pi(\bx_t)$, representing the sum of allocation path probabilities leading up to a state~$\bx_t$,  can be calculated using a network algorithm~\citep{mehta1988constructing}. An important thing of note is that network algorithms usually compute the probabilities~\eqref{expression_likelihood}, whereas~\citet{baas2025exact} uses it to compute the parameter-free component~$g_t^\pi$.
   \citet{baas2025exact} additionally provided a computationally efficient algorithm to compute~$g_t^\pi(\bx_t)$ for trial sizes up to~$1,000$ participants.  

\subsection{Asymptotic and exact Wald tests}~\label{sect:tests}
We consider conditional and unconditional exact tests for testing $$H_0:\theta_\C=\theta_\D\text{ versus }H_1:\theta_\C\neq\theta_\D$$
using the Wald statistic~$\Tau(\bX_{\Tend})$which, assuming~$\min_a n_a(\bX_\Iend)>0$, is defined as:
\begin{equation}\Tau(\bX_{\Tend})=\begin{cases} \frac{\hat{\theta}_\D(\bX_{\Tend})-\hat{\theta}_\C(\bX_{\Tend})}{\sqrt{\frac{\hat{\theta}_\C(\bX_{\Tend})(1-\hat{\theta}_\C(\bX_{\Tend}))}{{n}_{\C}(\bX_{\Tend})} + \frac{\hat{\theta}_\D(\bX_{\Tend})(1-\hat{\theta}_\D(\bX_{\Tend}))}{{n}_{\D}(\bX_{\Tend})}}},&\text{if~$\exists a\;\text{s.t.}\; \hat{\theta}_a(\bX_\Iend)\in(0,1),$}\\
\sign(\hat{\theta}_\D(\bX_\Iend)-\hat{\theta}_\C(\bX_\Iend))\cdot\infty,&\text{else},
\end{cases}\label{defn:WS}\end{equation}
where~$\hat{\theta}_a(\bX_{\Tend}) = s_{a}(\bX_{\Tend})/{n}_{a}(\bX_{\Tend})$ for all~$a\in\{\C,\D\}$ and we use the convention~$0\cdot\infty = 0 $.

\citet{Yi_Wang_2011} showed that, under several specific conditions for the RAR procedure, the restricted maximum likelihood estimator under~$H_0$ exists and satisfies a central limit theorem, in which case~T$(\bX_{\Iend})$ converges to a standard normal random variable.  This result asymptotically justifies the two-sided asymptotic Wald test that rejects~$H_0$ when~$|\Tau(\bX_{\Iend})|\geq z_{1-\alpha/2}$ for~$z_{1-\alpha/2}$ the inverse cumulative standard normal distribution function \mbox{at~$1-\alpha/2$}
and a nominal significance level~$\alpha\in[0,1]$. 

\begin{remark}[Burn-in period]
In~\eqref{defn:WS} it is assumed that~$\min_a n_a(\bX_\Iend)>0$, and, in the current paper, this is enforced through a {\it burn-in period} in which~$b
$ participants are allocated treatment in a way such \mbox{that~$N_{\C,2b}=N_{\D,2b}=b.$} 
Several allocation methods can be used to ensure a balanced assignment of participants to treatment arms during the burn-in period. In practice, this choice is particularly important when the burn-in length is small, as it is then crucial to minimize the probability of treatment imbalances.
Commonly used procedures in this setting include the truncated binomial design, big stick design, permuted block design, and the random allocation rule, each with its own benefits and challenges~\citep{berger2021roadmap}.
Despite the practical relevance of choosing an appropriate burn-in allocation method, it is important to note that, for our evaluation, the specific procedure used does not impact our results. 
Under the model of~\autoref{subsect:model}, the distribution of~$\bX_{\Iend}$ is invariant to the specific allocation procedure used during the burn-in phase, as long as balanced treatment group sizes are achieved. Hence, we will not make an assumption on the procedure used to enforce balanced allocation during the burn-in period.
\end{remark}

\subsubsection{Conditional and unconditional exact tests}

This section describes conditional and unconditional exact tests for clinical trials employing an RAR procedure. Such tests were, to the best of our knowledge, first considered in~\citet{wei1990statistical}. The conditional exact test we consider is not based on the treatment group sizes and hence coincides with the test suggested by a reviewer of~\citet{begg1990inferences}. 

We first describe the conditional exact test for RAR designs based on total successes. 
For equal allocation, the conditional exact test reduces to Fisher's exact test~\citep{Fisher1934method}. 
In this test, the nuisance parameter~$\theta = \theta_\C=\theta_\D$ under the null hypothesis is eliminated by conditioning on the total sum of successes in the trial~$s(\bX_n)=\sum_{a\in\{\C,\D\}}s_a(\bX_\Iend)$.
Following, e.g.,~\citet{baas2025exact}, for~$\btheta$ such that~$\theta_\C=\theta_\D$:
\begin{equation}\mathbb{P}_\btheta^\pi(\bX_n=\bx_n\mid s(\bX_n) = s') = \binom{\Iend}{s'}^{-1}g_{{\Iend}}^\pi(\bx_{\Iend})\cdot \mathbb{I}(s(\bx_n)=s'),\label{cond_dist}\end{equation}
as the right-hand side above does not depend on~$\btheta$, one can calculate the left and right conditional two-sided critical values~(both of level~$\alpha/2$) based on~$\Tau(\bX_n)$ given~$s(\bX_n) = s'$, denoted~$\bar{c}(s')$ and~$\ubar{c}(s')$, respectively. The resulting test is conditionally exact for each total sum of successes, i.e., for all possible total sums of successes~$s'$ we have~$$\mathbb{P}(\Tau(\bX_n)\notin (\ubar{c}(s'), \bar{c}(s')) \mid s(\bX_n)=s')\leq \alpha.$$

Next, we consider the unconditional exact test for RAR designs, extending Barnard's unconditional exact test~\citep{barnard1945new}. In comparison to the conditional exact test, this test only uses two critical values, ensuring a maximum unconditional type I error rate below the nominal significance level~$\alpha$, i.e., it determines the upper critical value~$\bar{c}$ such that 

       $$\max_{\substack{\btheta \in [0,1]^2,\\\theta_\C = \theta_\D}} \sum_{\bx_{\Iend}  : \,\Tau(\bx_{\Iend}) \geq \bar{c}} \mathbb{P}^\pi_{\btheta}(\bX_{\Iend} = \bx_{\Iend}) \leq \alpha/2.$$
    The lower critical value~$\ubar{c}$ of the unconditional exact test is defined similarly using the left tail of the distribution.
An efficient algorithm to compute~$\bar{c}$ and~$\ubar{c}$ with a desired precision is provided in~\citet{baas2025exact}. The resulting test is unconditionally exact, meaning~$\mathbb{P}_\btheta^{\pi}(\Tau(\bX_n)\notin (\ubar{c}, \bar{c}))\leq\alpha$ for all~$\btheta$ such that~$\theta_\C=\theta_\D$, but not necessarily conditionally exact based on the total sum of successes, i.e., it may be that for some~$s'$ and~$\btheta$ such that~$\theta_\C=\theta_\D$ we have~$\mathbb{P}_\btheta^{\pi}(\Tau(\bX_n)\notin (\ubar{c}, \bar{c})\mid s(\bX_n)=s')>\alpha$.

\subsubsection{Generalized Boschloo test}\label{sect:generalized_boschloo}

The generalized version of Boschloo's test
uses, in the vein of~\citep{boschloo1970raised},  the p-values corresponding to the conditional exact Wald test as test statistics in the unconditional exact test to raise the unconditional maximum type I error rate to~$\alpha$.
Unlike Boschloo's test, the generalized version of Boschloo's test depends on a test statistic, as we only condition on the sum of successes. Following~\eqref{cond_dist}, the conditional p-value for the Wald test based on the sum of total successes is given by
$$\Tau_{\text{GB}}(\bx_n) = \sum_{\substack{\bx'_n:s(\bx'_n)=s(\bx_n),\\ |\Tau(\bx'_n)|\geq |\Tau(\bx_n)|}} \binom{n}{s(\bx'_n)}^{-1}g^\pi(\bx'_n).$$
In the generalized version of Boschloo's test, these p-values are used as a test statistic in the one-sided unconditional exact test, where we only reject if~$\Tau_{\text{GB}}$~(i.e., the p-value) is low.

Let~$\varphi(((s'_\C,s'_\D),(n'_\C,n'_\D)) = ((s'_\D,s'_\C),(n'_\D,n'_\C))$ swap the successes and treatment group sizes per arm for \mbox{all~$((s'_\C,s'_\D),(n'_\C,n'_\D))\in\mathcal{X}_n$.} Let~$\Tau'$ be a antisymmetric test statistic if-and-only-if~$\Tau'(\varphi(\bx_n))=-\Tau'(\bx_n)$ for \mbox{all~$\bx_n\in\mathcal{X}_n$}, it can be verified that the Wald statistic~$\Tau$ as defined in~\eqref{defn:WS} satisfies this property.
Let~$\pi$ be a symmetric RAR procedure  \mbox{when~$g^\pi_n(\bx_n)=g^\pi_n(\varphi(\bx_n))$} for all~$\bx_n\in\mathcal{X}_n$.

The next lemma provides conditions under which the generalized version of Boschloo's test is uniformly more powerful~(unconditionally) than the conditional exact test.
\begin{lemma}
Let~$\ubar{c}_{\mathrm{GB}}$ be the (left) unconditional exact critical value for significance level~$\alpha$ based on~$\Tau_{\mathrm{GB}}$.
    If~$\pi$ is a symmetric RAR procedure, then:
    $$\Tau(\bX_{\Iend})\geq \bar{c}(s(\bX_{\Iend}))\text{ or  }\Tau(\bX_{\Iend})\leq \ubar{c}(s(\bX_{\Iend}))\implies \Tau_{\mathrm{GB}}(\bX_n)\leq \ubar{c}_{\mathrm{GB}}.$$
\end{lemma}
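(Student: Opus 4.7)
The plan is to prove the contrapositive-free chain in two stages: first show that conditional exact rejection forces $\Tau_{\mathrm{GB}}(\bX_n)\leq \alpha$, and second show that $\ubar{c}_{\mathrm{GB}}\geq \alpha$, so the hypothesis gives $\Tau_{\mathrm{GB}}(\bX_n)\leq \alpha \leq \ubar{c}_{\mathrm{GB}}$.

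For the first stage, I would exploit symmetry to argue that, under $H_0$, the conditional law of $\Tau(\bX_n)$ given $s(\bX_n)=s'$ is symmetric about $0$. Concretely, the map $\varphi$ is an involution on $\{\bx_n\in\mathcal{X}_n:s(\bx_n)=s'\}$; by antisymmetry of $\Tau$, symmetry $g_n^\pi(\varphi(\bx_n))=g_n^\pi(\bx_n)$, and invariance of the likelihood factor in~\eqref{expression_likelihood} under $\varphi$ when $\theta_\C=\theta_\D$, the conditional probability of $\{\Tau\geq c\}$ and $\{\Tau\leq -c\}$ given $s(\bX_n)=s'$ coincide for every $c$. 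This gives $\ubar{c}(s')=-\bar{c}(s')$, so conditional rejection is equivalent to $|\Tau(\bX_n)|\geq \bar{c}(s')$. Combining this with the defining property $\mathbb{P}(\Tau\geq \bar{c}(s')\mid s(\bX_n)=s')\leq \alpha/2$ and the symmetry, I obtain
$$\Tau_{\mathrm{GB}}(\bX_n)=2\,\mathbb{P}\bigl(\Tau(\bX_n)\geq |\Tau(\bX_n)|\,\bigm|\, s(\bX_n)\bigr)\leq \alpha.$$

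For the second stage, the key observation is that $\Tau_{\mathrm{GB}}$ is a conditional p-value for the two-sided test statistic $|\Tau|$ given $s(\bX_n)$, and by~\eqref{cond_dist} its conditional law under $H_0$ does not depend on $\btheta$. Hence the usual super-uniformity of p-values on a finite support yields $\mathbb{P}(\Tau_{\mathrm{GB}}(\bX_n)\leq \alpha\mid s(\bX_n)=s')\leq \alpha$ for every $s'$. Taking an expectation over $s(\bX_n)$ under any $\btheta$ with $\theta_\C=\theta_\D$ gives $\mathbb{P}_\btheta^\pi(\Tau_{\mathrm{GB}}(\bX_n)\leq \alpha)\leq \alpha$. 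Since $\ubar{c}_{\mathrm{GB}}$ is by definition the largest threshold for which this unconditional type I error rate constraint holds, it follows that $\ubar{c}_{\mathrm{GB}}\geq \alpha$. Chaining the two stages completes the argument.

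I expect the main obstacle to be bookkeeping around ties and the precise definition of the discrete critical values $\bar{c}(s')$, $\ubar{c}(s')$, and $\ubar{c}_{\mathrm{GB}}$: whether rejection is on $\Tau\geq \bar{c}$ or $\Tau>\bar{c}$, and the analogous convention for $\ubar{c}_{\mathrm{GB}}$, affects whether the chain $\Tau_{\mathrm{GB}}\leq \alpha \leq \ubar{c}_{\mathrm{GB}}$ can be traversed with non-strict inequalities. I would fix the conventions at the start (i.e., $\bar{c}(s')=\min\{c:\mathbb{P}(\Tau\geq c\mid s)\leq \alpha/2\}$ and $\ubar{c}_{\mathrm{GB}}=\sup\{c:\max_\btheta \mathbb{P}_\btheta^\pi(\Tau_{\mathrm{GB}}\leq c)\leq \alpha\}$) and verify that the symmetry argument yields $\ubar{c}(s')=-\bar{c}(s')$ exactly, so that no slack is lost. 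Everything else is either an application of the law of total probability or an algebraic rearrangement using the structural assumptions on $\pi$ and $\Tau$.
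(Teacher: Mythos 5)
Your proposal is correct and follows essentially the same two-stage route as the paper: symmetry of $\pi$ and antisymmetry of $\Tau$ give $\ubar{c}(s')=-\bar{c}(s')$, so conditional rejection forces $\Tau_{\mathrm{GB}}(\bX_n)\leq\alpha$, and then $\ubar{c}_{\mathrm{GB}}\geq\alpha$ closes the chain. The only (minor) difference is in the last step, where you invoke super-uniformity of the conditional p-value $\Tau_{\mathrm{GB}}$ to show the level-$\alpha$ p-value test is unconditionally valid, while the paper gets the same conclusion by identifying the event $\{\Tau_{\mathrm{GB}}\leq\alpha\}$ with the rejection region of the conditional exact test; these are the same fact, and your explicit attention to the tie/convention issues at the discrete critical values is a point the paper leaves implicit.
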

\begin{proof}
 We first show that the conditional Wald test coincides with the test rejecting when T$_{\text{GB}}\leq \alpha$~(which holds if~$\pi$ is symmetric). As~$s(\bx_n)=s(\varphi(\bx_n))$, we have that for every total sum of successes~$s'$, by the properties of~$\Tau$ and~$\pi$:
 $$\sum_{\substack{\bx_{\Iend}: s(\bx_n)=s',\\\Tau(\bx_{\Iend})\geq c}} g_{{\Iend}}^\pi(\bx_{\Iend}) = \sum_{\substack{\bx_{\Iend}: s(\bx_n)=s',\\-\Tau(\varphi(\bx_{\Iend}))\geq c}} g_{{\Iend}}^\pi(\varphi(\bx_{\Iend}))=\sum_{\substack{\bx'_{\Iend}: s(\bx_n')=s',\\-\Tau(\bx'_{\Iend})\geq c}} g_{{\Iend}}^\pi(\bx_{\Iend}'). $$
 Hence~$\ubar{c}(s')=-\bar{c}(s')$ as both critical values are based on a level~$\alpha/2$, hence the conditional exact test rejects if-and-only-if~$|\Tau(\bx_{\Iend})|\geq \bar{c}(s(\bx_{\Iend})),$ i.e.,
\begin{align*}
    &|\Tau(\bx_{\Iend})|\geq \bar{c}(s(\bx_n))\iff \Tau_{\text{GB}}=\sum_{\substack{\bx'_n:s(\bx'_n)=s(\bx_n),\\ |\Tau(\bx'_n)|\geq |\Tau(\bx_n)|}} \binom{n}{s(\bx'_n)}^{-1}g^\pi(\bx'_n)\leq \alpha.
\end{align*}
As the test rejecting when~$\Tau_{\text{GB}}\leq\alpha$ equals the conditional exact test, we have~$\ubar{c}_{\text{GB}}\geq \alpha$ and hence
\begin{align*}
 \Tau(\bX_{\Iend})\geq \bar{c}(s(\bX_{\Iend}))\text{ or  }\Tau(\bX_{\Iend})\leq \ubar{c}(s(\bX_{\Iend})\iff& |\Tau(\bX_{\Iend})|\geq \bar{c}(s(\bX_n))\\\implies& \Tau_{\text{GB}}(\bX_n)\leq \ubar{c}_{\mathrm{GB}}.
\end{align*}
\end{proof}

\subsection{Operating characteristics and their exact calculation}\label{OCscalc}
The likelihood~\eqref{expression_likelihood} allows for the calculation of exact operating characteristics, such as the power or type I error rate.
Any operating characteristic can be written as~$\mathbb{E}_{\btheta}^\pi[f(\bX_{\Iend})]$ for a function~$f:\mathcal{X}_{\Iend}\mapsto\mathbb{R}$.
The operating characteristics considered in the current paper will be:
\begin{itemize}
     \item {\bf Rejection rate (type I error rate or power)}:\\ This operating characteristic is calculated as $\mathbb{E}_{\btheta}^\pi[f(\bX_{\Iend})]$ where~$$f(\bx_{\Iend})=\mathbb{I}\Big(\Tau(\bx_{\Iend})\leq \ubar{c}(\bx_n)\text{ or }\Tau(\bx_{\Iend})\geq \bar{c}(\bx_n)\Big)$$ for a pair of critical values~$\ubar{c},\bar{c}$, which may depend on~$\bx_n$ through a summary of~$\bx_n$~(e.g., the total sum of successes in the conditional exact test). 
    If~$\btheta$ satisfies~$H_0$, the rejection rate equals the type I error rate, whereas the rejection rate equals statistical power when~$\theta_\C\neq \theta_\D$. For a nominal significance level~$\alpha$, it is desired to have the type I error rate bounded by~$\alpha$, while higher power is better.
    \item {\bf Patient benefit~(expected
proportion of allocations on the superior arm)}:\\ This operating characteristic equals~$$
\left[\sum_{a\in\{\C,\D\}}\mathbb{E}^\pi_\btheta[n_a(\bX_{\Iend})/\Iend]\mathbb{I}(\theta_a=\max_a\theta_a)\right]-1/2\mathbb{I}(\theta_\C=\theta_\D).$$ The patient benefit represents the proportion of participants on the superior arm. Other measures, such as the expected number of successes~$\mathbb{E}_{\btheta}^\pi[s(\bX_n)]$ could also be used to denote patient benefit, but an advantage of using the proportion is that we should aim to have a patient benefit higher than an absolute value of~$50\%$. 
\end{itemize}

We have from the likelihood~\eqref{expression_likelihood} that
\begin{equation}\mathbb{E}_{\btheta}^\pi[f(\bX_{\Iend})]=\sum_{\bx_{\Iend}\in\mathcal{X}_{\Iend}}f(\bx_{\Iend})g^\pi_{\Iend}(\bx_{\Iend})\prod_{a\in\{\C,\D\}}\theta_a^{s_a(\bx_{\Iend})}(1-\theta_a)^{n_{a}(\bx_{\Iend})-s_a(\bx_{\Iend})},\label{eqn:OCcalc}\end{equation}
which can be written as~$(\bff\circ\bg_{\Iend}^ \pi)^\top\bp_{\btheta}$ with~$\circ$ the Hadamard product and~$\bp_{\btheta}$ containing the product-terms in the above expression. Hence, we can \mbox{store~$\bff\circ\bg_{\Iend}^\pi$} and then only have to take the inner product with~$\bp_\btheta$ for different vectors~$\btheta$ when we want to calculate~\eqref{eqn:OCcalc} for different parameters~$\btheta$.

\section{Response-adaptive randomization procedures}\label{sect:RA_procedures}
In this section, we list the power-maximizing RAR procedures considered in this paper, which are RAR procedures that maximize a metric related to power. 
We will first list response-adaptive procedures known from the literature~(including modifications to protect patient benefit), after which we introduce our considered novel power-maximizing {\it constrained Markov decision process}~(CMDP) procedures. 
\subsection{Response-adaptive randomization procedures known from the literature}\label{NA_description}%
\begin{itemize}
    \item {\bf DBCD Neyman allocation}:\\
      Under Neyman allocation~\citep{Neyman1934} the expected proportion of  participants randomized to the control group  equals
\begin{align}
	\rho(\btheta) = \frac{\sqrt{\theta_\C(1-\theta_\C)}}{\sqrt{\theta_\C(1-\theta_\C)}+\sqrt{\theta_\D(1-\theta_\D)}}\label{opt_prop}.
\end{align}
As indicated in~\citet{Hu01092003,rosenberger2004}, the allocation proportion~\eqref{opt_prop} maximizes the~(asymptotic) power function for the asymptotic Wald test described at the start of~\autoref{sect:tests}.
The parameters~$\btheta$ are unknown, and in this paper the proportion~\eqref{opt_prop} is targeted through a {\it doubly adaptive biased coin design}~\citep[DBCD,][]{eisele1994dabcd, hu2006theory}, based on the sequential estimation procedure for~$\theta_a$ in Example~2 of~\citet{hu2009ERADE}, which is a flexible targeting procedure which, in comparison to, e.g., efficient randomized-adaptive designs~\citep{hu2009ERADE}, satisfies the property that the randomization probability to control converges almost surely to~\eqref{opt_prop}, resulting in stability for large samples. This power-maximizing RAR procedure is referred to as DBCD Neyman allocation. Examples of other RAR procedures to target optimal proportions, such as~$\eqref{opt_prop}$
 are the efficient randomized design~\citep{hu2009ERADE} and the mass-weighted urn design~\citep{ZHAO2015209}.
\item {\bf Tempered DBCD Neyman allocation}:\label{sect:TNA}
Let~$\pi_\text{NA}(\bx_t)$ be the probability of randomizing a participant to the control group based on a state~$\bx_t$ under DBCD Neyman allocation.
Depending on~$\bx_t$ it can be the case that~$\pi_\text{NA}(\bx_t)>0.5$ while~$\tilde{\theta}_\C(\bx_t)< \tilde{\theta}_\D(\bx_t)$ or~$\pi_\text{NA}(\bx_t)<0.5$ while~$\tilde{\theta}_\C(\bx_t)> \tilde{\theta}_\D(\bx_t)$, i.e., the probability of allocating to the inferior treatment~(based on the estimates~$\tilde{\theta}_a(\bx_t)$) is higher than~50\%. In~\citet{zhangrosenberger}, this problem was identified in a different setting, where it was suggested to let the target allocation proportion  be~$50\%$ in this case, i.e., in the  tempered DBCD Neyman allocation procedure, we have
$$\pi_\text{TNA}(\bx_t) = \begin{cases}
    \pi_{\text{NA}}(\bx_t),\quad&\text{if~$  \pi_{\text{NA}}(\bx_t)>0.5\text{ and }\tilde{\theta}_\C(\bx_t)> \tilde{\theta}_\D(\bx_t)$}\text{ or }\\
    &\text{if~$  \pi_{\text{NA}}(\bx_t) < 0.5\text{ and }\tilde{\theta}_\D(\bx_t)> \tilde{\theta}_\C(\bx_t)$,}\\
    1/2,&\text{else.}
\end{cases}$$
The tempered DBCD Neyman allocation procedure is hence a power-maximizing RAR procedure under the constraint that the allocation probability does not lead to more participants being allocated inferior treatment on average.
\item {\bf Equal allocation}: In case of equal allocation, assuming~$\Iend$ is even, we have:
    $$g_n^{\pi_{\text{EA} }}(\bx_{\Iend}) = \binom{n_\C(\bx_n)}{s_\C(\bx_n)}\binom{n_\D(\bx_n)}{s_\D(\bx_n)}\mathbb{I}(n_\C(\bx_{\Iend})/\Iend=n_\D(\bx_{\Iend})/\Iend=1/2),$$
    i.e., the function~$g_n^{\pi_{\text{EA}}}$ concentrates on states where the proportion of participants allocated to either treatment group is 1/2, and equals the product of the number of possible outcome paths length~$n/2$ having~$s_a(\bx_n)$ successes for each arm~$a$. Similar as for the burn-in period, the actual procedure~$\pi_{\text{EA}}$ corresponding to~$g^{\pi_{\text{EA}}}$ is irrelevant for the analysis at hand (as we only look at the distribution over final states), and can, e.g, correspond to a truncated binomial, big stick, or permuted block design~\citep{berger2021roadmap}. 
    As shown in~\citet{azrielmandelrinott2011}, equal allocation is a power-maximizing randomization procedure under the Pitman asymptotic testing framework.
    
        \item {\bf Bayesian RAR}: We consider a Bayesian RAR design following~\citet{thompson1933likelihood,THALL2007859}. 
        For a prior probability measure~$\mathbb{Q}$ on~$[0,1]^2$, we have
        \begin{align*}
             {\pi}_\text{B}(\bx_t)&=\frac{\mathbb{Q}(\theta_\C\geq\theta_\D\mid \bX_t=\bx_t)^{t/(2\Iend)}}{\mathbb{Q}(\theta_\C\geq\theta_\D\mid \bX_t=\bx_t)^{t/(2\Iend)} + \mathbb{Q}(\theta_\C<\theta_\D\mid \bX_t=\bx_t)^{t/(2\Iend)}}.
        \end{align*}
        Hence, the probability~$\pi_{\text{B}}(\bx_{\Iend})$ equals the power-tuned posterior probability that the control treatment is better, based on the prior distribution~$\mathbb{Q}$~(usually the uniform distribution for~$\btheta$). This randomization procedure targets high expected treatment outcomes. The probabilities~$\mathbb{Q}(\theta_\C\geq\theta_\D\mid \bX_t=\bx_t)$ can be calculated efficiently based on Algorithm~1 in~\citet{kaddaj2025thompsonulamgaussmulticriteria}.
        Bayesian RAR is not a power-maximizing RAR procedure, and is included as a comparator RAR procedure as it is commonly applied in RAR trials~\citep{Pin2025}. 
\end{itemize}

\subsection{Constrained Markov decision process procedures}\label{CMDP_P}
For a prior probability measure~$\mathbb{Q}$ for~$\btheta$ with support~$[0,1]^2$, let a Bayesian average operating characteristic equal~$\int_{[0,1]^2}\mathbb{E}_\btheta^\pi[f(\bX_{\Iend})]\mathbb{Q}(d\btheta)$.
The goal of the {\it constrained Markov decision process}~(CMDP) approach~\citep{baas2024CMDP} is to find an RAR procedure that sequentially randomizes participants to treatment to maximize a Bayesian average operating characteristic, while ensuring control of other Bayesian average operating characteristics. For a fixed maximum randomized allocation rate~$p\in[0.5,1]$ the randomization probabilities~$\pi(\bx_t)$ are restricted to the set~$\{1-p,1/2,p\}$, e.g., when~$p=0.95$ participants can only be assigned to a treatment with probability $5\%$, $50\%$ or~$95\%$.
In~\citet{baas2024CMDP}, CMDP procedures are found through an optimization approach that uses a cutting plane algorithm applying backward recursion in every iteration. 

The CMDP procedures constructed in this paper are novel as~\citet{baas2024CMDP} (i) only considered maximizing expected treatment outcomes and did not consider maximizing power,~(ii) did not have point-wise constraints (i.e., constraints for specific parameter values), and~(iii) used an uncorrected version of Fisher's exact test to analyze the data, whereas in this paper exact Wald tests are constructed based on CMDP procedures.

The novel {\it CMDP-P} (CMDP-power) procedure targets maximum Bayesian average power~(hence it is a power-maximizing RAR procedure), while the constraints of the procedure relate to type I error rate control.  A CMDP-P procedure is an approximately optimal policy~\citep[found through the procedure in][Section~3.3]{baas2024CMDP} for the following optimization problem:
\begin{subequations}\label{cmdp_p}
    \begin{align}
    &\underset{\pi}{\max}\,\mathbb{P}_1
^{\pi}\left(|\Tau(\bX_n)|\geq z_{1-\alpha/2}\right)\label{obj_CMDP_p}\\
    &\textnormal{s.t.}\nonumber\\
    &\mathbb{P}^\pi_0(|\Tau(\bX_n)|\geq z_{1-\alpha/2})\leq \alpha',\label{avg_typeI_CMDP_P}\\
&\mathbb{P}^\pi_{\btheta_0}(|\Tau(\bX_n)|\geq z_{1-\alpha/2})\leq \alpha'',\quad \forall\btheta_0\in\hat{\Theta}_0.\label{PW_typeI_CMDP_P}
\end{align}
\label{CMDP_POWER}
\end{subequations}
In the above,~$\hat{\Theta}_0\subset\{\btheta_0\in[0,1]^2:\theta_{0,\C}=\theta_{0,\D}\}=:\Theta_0$ has finite cardinality and, following~\autoref{expression_likelihood}, we have
\begin{align}
    \mathbb{P}^\pi_1(\bX_n=\bx_n) &= \int_{[0,1]^2}\hspace{-3mm}\mathbb{P}_{\btheta}^\pi(\bX_n=\bx_n)d\btheta =g_n^{\pi}(\bx_n)\prod_{a\in\{\C,\D\}}\hspace{-3mm}\text{B}(\tilde{s}_a(\bx_n), \tilde{f}_a(\bx_n)),\label{defP1}\\
      \mathbb{P}^\pi_0(\bX_n=\bx_n) &= \int_{0}^1\mathbb{P}_{(\theta,\theta)}^\pi(\bX_n=\bx_n)d\theta =g_n^{\pi}(\bx_n)\text{B}(\tilde{s}(\bx_n), \tilde{f}(\bx_n)),\label{defP0}
\end{align}
where~$\tilde{s}_a=s_a(\bx_n)+1,\tilde{f}_a(\bx_n)=n_a(\bx_n)-s_a(\bx_n)+1$ equal one plus the number of treatment successes and failures in arm~$a$,~$\tilde{s}(\bx_n)=1+\sum_as_a(\bx_n)$, \mbox{$\tilde{f}(\bx_n)=1+\sum_af_a(\bx_n)$} equal one plus the total amount of successes and failures, and~$\text{B}$ denotes the beta function.

The objective function in~\eqref{obj_CMDP_p} equals the Bayesian average power of the Wald test under the uniform prior~$\mathbb{Q}$ on~$[0,1]^2$, which assumes all success rates to be equally likely a priori.
The measure represents power as the null hypothesis~$H_0$ has zero measure under the integral in~\eqref{defP1}. The constraint~\eqref{avg_typeI_CMDP_P} enforces the Bayesian average type I error rate over the set~$\Theta_0$, assuming a uniform prior, to be less than~$\alpha'$, where it is recommended to set~$\alpha'\leq\alpha$. The constraint~\eqref{PW_typeI_CMDP_P} ensures point-wise type I error rate control for all \mbox{points~$\theta_0\in\hat{\Theta}_0,$} where it is necessary that~$\alpha''\leq\alpha.$
A novelty in the current paper, not considered in~\citet{baas2024CMDP}, is that the set of RAR procedures~$\pi$ that we optimize over is restricted to RAR procedures with a burn-in period~$b.$ This is enforced through a restriction of the action space for the first~$2b$ decision epochs.
We note that in practical situations the priors used in~\eqref{CMDP_POWER} as well as the values in~$\hat{\Theta}_0$ (used in~\eqref{PW_typeI_CMDP_P}) 
may also be based on historical information about the treatment success rates.

In order to protect patient benefit in the CMDP framework, we define the {\it CMDP-BP} (CMDP-benefit-and-power) procedure.
The CMDP-BP procedure has the same objective and constraints as the CMDP-P procedure~(and is hence a power-maximizing RAR procedure), with the additional constraint that the average patient benefit should at least be higher than 0.5 for several subsets of the parameter range.  
Choosing lower and upper interval \mbox{endpoints~$\ell_{a,\sigma}\leq u_{a,\sigma}\in[0,1]$} for all~$a\in\{\C,\D\},\sigma\in\mathcal{S}\subset\mathbb{N}$ such that we \hbox{have~$[\ell_{\C,\sigma},u_{\C,\sigma}]\cap[\ell_{\D,\sigma},u_{\D,\sigma}]=\emptyset$,} defining~$ \mathbb{P}^\pi_\sigma(\bX_n=\bx_n)$ as the Bayesian average probability of observing~$\bx_n$ under an uniform prior \hbox{over~$[\ell_{\C,\sigma},u_{\C,\sigma}]\times[\ell_{\D,\sigma},u_{\D,\sigma}]$,} and letting~$F_{\text{B}(\tilde{s}, \tilde{f})}$ denote the cumulative Beta distribution function for parameters~$\tilde{s}, \tilde{f}$, we have
\begin{align*}
\mathbb{P}^\pi_\sigma(\bX_n=\bx_n)&=\frac{\int_{\ell_{\D,\sigma}}^{u_{\D,\sigma}}\int_{\ell_{\C,\sigma}}^{u_{\C,\sigma}}\mathbb{P}_{\btheta}^\pi(\bX_n=\bx_n)d\theta_\C d\theta_\D}{\prod_{a\in\{\C,\D\}}(u_{a,\sigma}-\ell_{a,\sigma})}\\ &=g_n^{\pi}(\bx_n)\prod_{a\in\{\C,\D\}}\frac{F_{\text{B}(\tilde{s}_a(\bx_n), \tilde{f}_{a}(\bx_n))}(u_{a,\sigma})-F_{\text{B}(\tilde{s}_a(\bx_n), \tilde{f}_{a}(\bx_n))}(\ell_{a,\sigma})}{(u_{a,\sigma}-\ell_{a,\sigma})/\text{B}(\tilde{s}_a(\bx_n), \tilde{f}_{a}(\bx_n))},
\end{align*}
A CMDP-BP procedure is an approximately optimal policy for the optimization problem, obtained by adding the following constraint to~\eqref{cmdp_p}:
    \begin{align*}
&\mathbb{E}^\pi_{\sigma}[N_{\arg\max_{a\in\{\C,\D\}} \theta_a}/n]\geq 1/2,\quad \forall \sigma\in\mathcal{S}.\label{allocfraccons}
\end{align*}
Note that the restriction~$[\ell_{\C,\sigma},u_{\C,\sigma}]\cap[\ell_{\D,\sigma},u_{\D,\sigma}]=\emptyset$ enforces that one of the arms always has a higher success rate in every rectangle considered.

\section{Numerical results}\label{sect:numerical_results}
This section reports the main numerical results of the paper, while more detailed results are deferred to the Appendix. All appendix tables are summarized in~\autoref{summary_table}. For generating the results, we set the significance levels~$\ubar{\alpha}=\bar{\alpha}=0.025$, evaluate trial sizes~$\Iend\in\{50,250\}$ (representing a small, e.g., early-phase, trial size and moderate, e.g., confirmatory, trial size, respectively), evaluate the performance of the conditional exact, unconditional exact, asymptotic, and generalized version of Boschloo's Wald tests, set the burn-in period length equal to~$6$ participants per arm as in~\citet{pin2025revisitingoptimalallocationsbinary}, set the tuning parameter~$
\gamma$ to~$2$ for the DBCD procedure~\citep[as suggested in][]{rosenberger2004}, set~$p=0.95$ as in~\citet{baas2024CMDP}, \mbox{$\alpha'=0.045$}, \mbox{$\alpha''=0.05$}, \mbox{$\hat{\Theta}_0=\{0.00, 0.05,\dots, 1.00\}$} for the CMDP procedures, while choosing the interval endpoints as~$$(\ell_{a,\sigma}, u_{a,\sigma}) \in\{(0, 0.05),(0.05, 0.1), (0.1, 0.25), (0.25, 0.5),(0.5, 0.75), (0.75, 1.0)\}$$ for the CMDP-BP procedure.

\subsection{Response-adaptive randomization  targeting optimal proportions}\label{sect:eval_proportion_RA}
We first describe the results for (more traditional) RAR procedures based on targeting an optimal proportion~(either in terms of power or expected outcomes), which are the DBCD Neyman allocation, tempered DBCD Neyman allocation, and Bayesian RAR procedures. 

\subsubsection{DBCD Neyman allocation: Type I error rates}
The top row of~\autoref{fig:NA_plots} shows type I error rate profiles under the different tests for DBCD Neyman allocation~(\autoref{NA_description}). The asymptotic Wald test shows substantial type I error rate inflation under DBCD Neyman allocation for null success rates close to the endpoints of the parameter range, reaching values around~15\% for both considered trial sizes. \autoref{fig:simulation_paths} shows a potential cause for this behavior: the variance in the allocation proportions is higher under large and low success rates, causing high variance of the test statistic under the null hypothesis. This could be explained by the higher sensitivity of the target proportion to small estimated treatment effects for these parameters, combined with an increased variance through employing a RAR procedure. For instance, for~$\hat{\theta}_\C=0.50,\;\hat{\theta}_\D=0.52$ we have~$\rho(\hat{\btheta})=0.50$ whereas for~$\hat{\theta}_\C=0.97,\;\hat{\theta}_\D=0.99$ we have~$\rho(\hat{\btheta})=0.63$.

The unconditional exact test, which adjusts the critical values based on the type I error rate function for the asymptotic test, shows a similar type I error rate profile shape as the asymptotic test and is conservative for null success rates around~50\%. On the contrary, the conditional exact and the generalized version of Boschloo's test~(henceforth referred to simply as Boschloo's test) show a less conservative type I error rate profile, which is often close to the significance level, especially for the larger trial horizon~$n=250.$ A similar result for the conditional exact test was found for RAR procedures targeting high expected treatment outcomes in~\citet{baas2025exact}.

\subsubsection{DBCD Neyman allocation: Power}
The middle row of~\autoref{fig:NA_plots} shows differences in power under  DBCD Neyman allocation and equal allocation~(subtracted). Each open marker and vertical line corresponds to a point such \mbox{that~$\theta_\C=\theta_\D=\tilde{\theta}$,} while each point on the respective curves starting from an open marker corresponds to a parameter vector such that~$\theta_\D\geq \theta_\C=\tilde{\theta}$. When~$\theta_\D$ becomes larger than the point~$\tilde{\theta}'$ corresponding to a new open marker, a new curve starts, and the visibility of the old curve is decreased. 

As expected, the asymptotic Wald test has higher power under DBCD Neyman allocation than under equal allocation. The unconditional exact test, on the other hand, has roughly uniformly lower power than the unconditional exact test under equal allocation. This brings the added value of using  DBCD Neyman allocation to optimize power into question, as the unconditional exact test is often the most powerful exact test under equal allocation~\citep[][see also~\autoref{summary_table}]{mehrotra2003cautionary}. The conditional exact test often shows higher power under DBCD Neyman allocation compared to equal allocation, which is likely due to the conservativeness of the conditional exact test~(i.e., Fisher's exact test) for equal allocation. In contrast, Boschloo's test shows a slightly worse performance in terms of power for DBCD Neyman allocation when compared to equal allocation.

\begin{figure}[h!]
\centering \includegraphics[width = .85\textwidth]{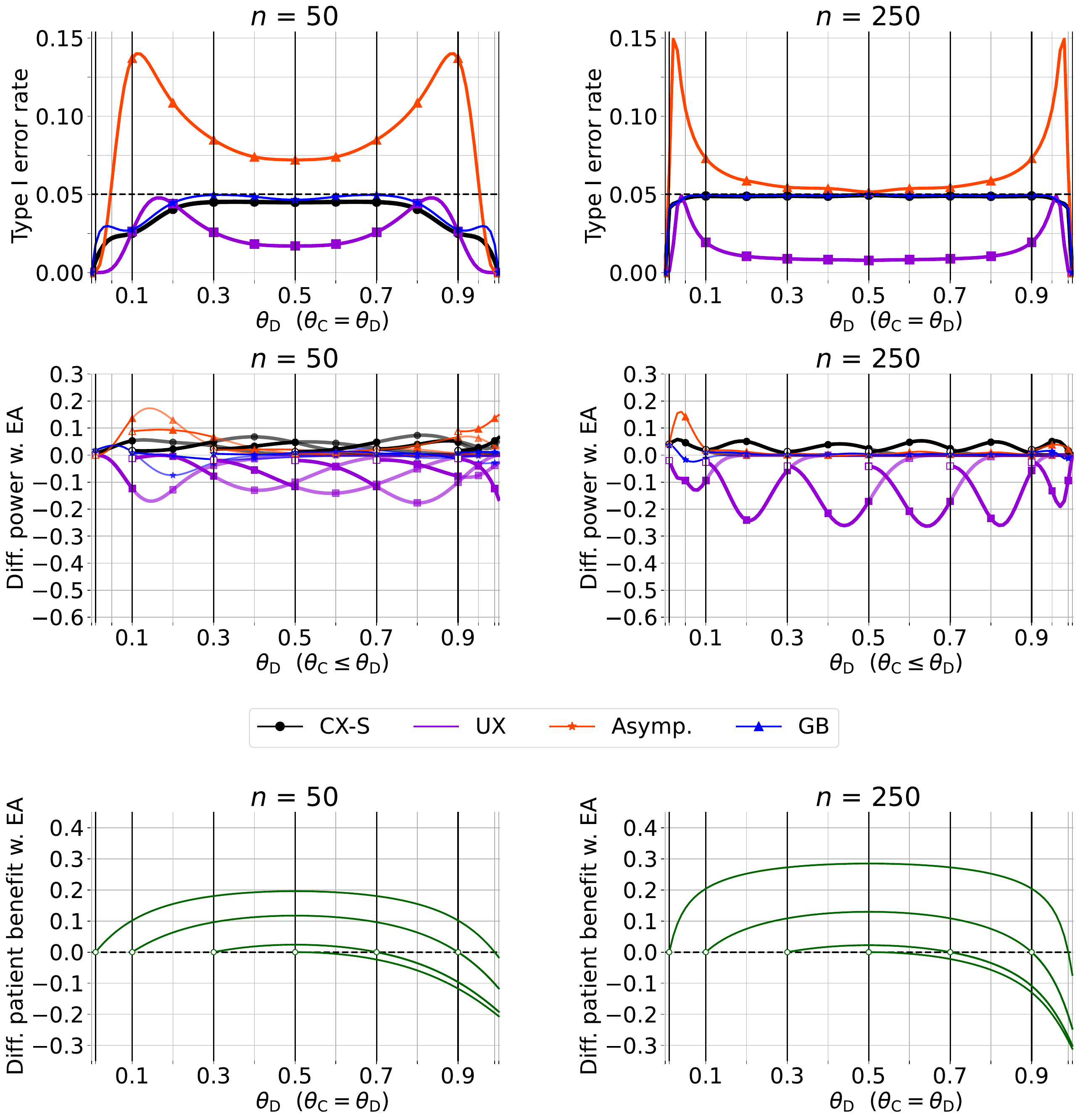}
    \caption{Results for doubly adaptive biased coin design Neyman allocation; type I error rate profiles~(top), power difference curves with respect to the same test under equal allocation~(EA, middle), and patient benefit difference curves with respect to equal allocation~(EA, bottom) for trial sizes~$n\in\{50,250\}$. The evaluated tests are the conditional exact, unconditional exact~(UX), asymptotic~(Asymp.), and generalized version of Boschloo's~(GB) Wald tests. For all tests, the significance level was set to~$0.05$.}
    \label{fig:NA_plots}
\end{figure}

\subsubsection{DBCD Neyman allocation: Patient benefit}
The bottom row of~\autoref{fig:NA_plots} shows the difference in patient benefit under DBCD Neyman allocation compared to equal allocation~(subtracted). Each marker point corresponds \hbox{to~$\theta_\C=\theta_\D=\tilde{\theta}$,} while each respective curve corresponds to a parameter vector such that~$\theta_\D\geq \theta_\C=\tilde{\theta}$. The bottom row of~\autoref{fig:NA_plots} shows that the patient benefit under DBCD Neyman allocation becomes lower than under equal allocation whenever~$\theta_\C$ is closer to~$50\%$ than~$\theta_\D$, an issue also identified in~\citet{hu2006theory}.

\subsubsection{DBCD Neyman allocation: Additional results}
\autoref{tab_RR_neyman} shows type I error rate and power values under both Neyman~(four leftmost columns) and equal allocation~(four rightmost columns).  
Perhaps surprisingly, the asymptotic Wald test under equal allocation shows type I error rate inflation, albeit to a lesser extent than under DBCD Neyman allocation. The highest power for DBCD Neyman allocation is found under Boschloo's test, whereas the highest power for equal allocation is found under the unconditional exact test. DBCD Neyman allocation with  Boschloo's test outperforms the unconditional exact test for equal allocation in several cases; however, the power gains are lower than the power losses, in particular for the case~$\theta_\C=0.01$, where a power loss of around~$20\%$ is seen for~$n=50$ \mbox{and~$\theta_\D=0.2$} and a power loss of around~$10\%$ is seen for~$n=250$ \hbox{and~$\theta_\D=0.05$.} The highest power gain under Boschloo's test and DBCD Neyman allocation is around 2\% for $n=50$ and around~$0.3\%$ for~$\Iend=250.$

\subsubsection{Tempered DBCD Neyman allocation and Bayesian RAR}

\autoref{RR_TNA}, \autoref{RR_BRAR},  \autoref{fig:TNA_plots}, and \autoref{fig:TW_Bayesian RAR_plots} provide the detailed results for tempered DBCD Neyman allocation and Bayesian RAR.

\autoref{fig:TNA_plots} shows that the type I error rate profile looks similar to that of DBCD Neyman allocation for a common success rate lower than~0.5, while being more similar to that of equal allocation for common success rates higher than~0.5. An explanation is that the highest estimated success rate is often closer to 0.5 in the former setting. 
Power for the asymptotic Wald test under tempered DBCD Neyman allocation is about the same as the power for equal allocation for~$\theta_\C\geq 0.5$ and~$\theta_\D\geq \theta_\C$. As there is now only one type I error peak, the unconditional exact test yields lower power than for DBCD Neyman allocation. The conditional exact and Boschloo's Wald test show an improvement over Neyman and equal allocation. \autoref{RR_TNA} shows that Boschloo's test shows highest power for the tempered DBCD Neyman allocation procedure, and power values are often higher than for DBCD Neyman allocation, indicating that an increase in patient benefit does not need to stand in the way of an increase in power. 

\autoref{fig:TW_Bayesian RAR_plots} shows an asymmetry in the type I error rate profile for Bayesian RAR that is, curiously, opposite to asymmetries presented in~\citet{baas2024CMDP,Tang2025}. This is likely due to the change in the Wald statistic definition, which is based on an Agresti-Caffo correction in the papers listed above. Another thing of note is that the conditional exact test already often shows power improvements over equal allocation for the Bayesian RAR procedure. \autoref{RR_BRAR} again shows that Boschloo's test performs best for the Bayesian RAR procedure.

\subsection{Procedures based on constrained Markov decision processes}

 In this section we look at the CMDP-P and CMDP-BP procedures~(\autoref{CMDP_P}), where we expect that the asymptotic Wald test more closely matches its exact variants as the average and pointwise type I error rates ensure that the type I error rates are often bounded by the nominal significance level, while aiming for maximum average power is expected to raise the type I error rate close to the nominal significance level.

\subsubsection{CMDP-P: Type I error rates}
 The top row of~\autoref{fig:CMDP_P_plots} shows the type I error rate profiles under the different tests for the CMDP-P RAR procedure. As expected, the type I error rate of the asymptotic Wald test is often close to the nominal significance level and below the nominal significance level for all considered parameter values. 
 Due to this,  the type I error rate profile for the unconditional exact  Wald test under CMDP-P is roughly the same as that for the asymptotic Wald test. The conditional exact test is more conservative than the asymptotic and unconditional exact  Wald test, whereas Boschloo's test is substantially less conservative than the conditional exact test for~$\Iend=50$ whereas for~$\Iend=250$ no difference is observed between the two tests.

\begin{figure}[h!]
	\centering
	\includegraphics[width = .85\textwidth]{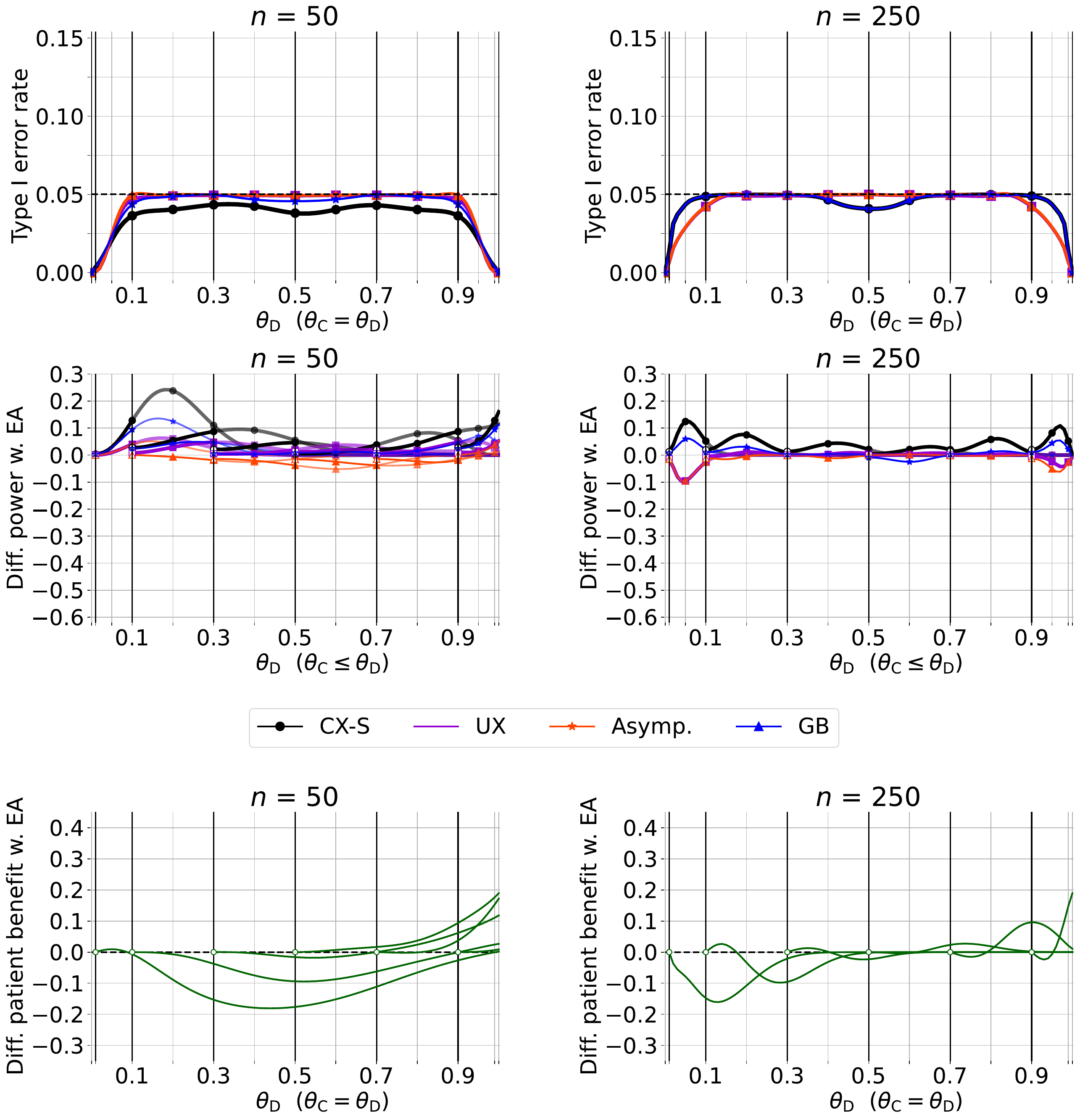}
	\caption{Constrained Markov decision process targeting high power; type I error rate profiles~(top), power difference curves compared to the same test under equal allocation~(subtracted, middle), and patient benefit difference curves with respect to equal allocation~(subtracted, bottom) for trial sizes~$n\in\{50,250\}$. We consider the conditional exact, unconditional exact~(UX), asymptotic~(Asymp.), and generalized version of Boschloo's~(GB) Wald tests. For all tests, the significance level was set to~$0.05$.}
	\label{fig:CMDP_P_plots}
\end{figure}

\subsubsection{CMDP-P: Power}
The middle row of~\autoref{fig:CMDP_P_plots} shows that for~$n=50$ the CMDP-P procedure roughly uniformly increases power for all considered exact tests as compared to equal allocation, where the largest differences are seen for the conditional exact test and Boschloo's test.  The lower power for the asymptotic Wald test as compared to equal allocation could be explained by the fact that the asymptotic Wald test shows type I error rate inflation under the equal allocation procedure, while it does not show type I error rate inflation under the CMDP-P procedure.
For~$n=250$, lower power gains and power losses are observed for the unconditional exact Wald test and Boschloo's test.
A major positive finding is that, in comparison to DBCD Neyman allocation, outperformance is now seen for the unconditional exact  Wald test under CMDP-P in comparison to equal allocation, where the major gains are seen for~$n=50$. 

\subsection{CMDP-P: Patient benefit}
The bottom row of~\autoref{fig:CMDP_P_plots} shows the difference in patient benefit under the CMDP-P procedure compared to the equal allocation procedure (subtracted). In comparison to DBCD Neyman allocation, the patient benefit differences for CMDP-P are less structured and show a less explainable behaviour. For~$n=50$, the patient benefit is often less than~$50\%.$

\subsection{CMDP-P: Additional results}
\autoref{tab:RR_CMDPP} shows power values for CMDP-P. For this RAR procedure, the exact test with highest power is the unconditional exact Wald test. For~$n=50$, the CMDP-P procedure uniformly shows higher power than equal allocation under this test. For~$n=250$ lower power values than under equal allocation are found.
\subsubsection{Results for CMDP-BP}

 \autoref{fig:CMDP-EP_plots} and~\autoref{tab:RR_CMDP_EP} provide results for the CMDP-BP procedure. \autoref{fig:CMDP-EP_plots}  shows that the type I error rate profiles and power differences do not differ much from those for CMDP-P, while the bottom row shows there are substantial patient benefit gains over CMDP-P. For~$\Iend=250$, the patient benefit goes to about~$95\%$ (the highest possible value for maximum randomized allocation rate~$p=0.95$). Comparing \autoref{fig:TNA_plots} and~\autoref{fig:CMDP-EP_plots}, CMDP-BP is not far off in terms 

\clearpage
\begin{table}[p]
\thispagestyle{empty} 
    \centering
      \caption{Highest power values for each response-adaptive procedure, the superscripts denote which exact Wald test(s) resulted in the highest power. The highest~(lowest) power in each row is overlined~(underlined) and indicated in bold font.}
    \footnotesize\label{tab:my_label}\begin{tabular}{lllllllll}
\toprule
 &  &  & NA & TNA & CMDP-P & CMDP-BP & BRAR & EA \\
$\Iend$ & $\theta_\text{C}$ & $\theta_{\text{D}}$ &  &  &  &  &  &  \\
\midrule
\multirow[t]{26}{*}{50} & \multirow[t]{5}{*}{0.01} & 0.10 & $8.93^{b}$ & $\uline{\textcolor{red}{\bf 8.88}}^{b}$ & $\overline{\textcolor{darkgreen}{\bf 22.57}}^{u}$ & $22.14^{u}$ & $15.87^{b}$ & $18.58^{u}$ \\
 &  & 0.20 & $51.34^{u}$ & $\uline{\textcolor{red}{\bf 51.33}}^{u}$ & $70.14^{u}$ & $\overline{\textcolor{darkgreen}{\bf 70.52}}^{u}$ & $54.96^{b}$ & $64.17^{u}$ \\
 &  & 0.30 & $85.43^{u}$ & $\uline{\textcolor{red}{\bf 85.4}}^{u}$ & $91.98^{u}$ & $\overline{\textcolor{darkgreen}{\bf 92.65}}^{u}$ & $86.3^{b}$ & $89.35^{u}$ \\
 &  & 0.40 & $97.02^{b}$ & $\uline{\textcolor{red}{\bf 97.01}}^{b}$ & $98.28^{b}$ & $\overline{\textcolor{darkgreen}{\bf 98.61}}^{u}$ & $97.34^{b}$ & $97.92^{u}$ \\
 &  & 0.50 & $99.72^{b}$ & $99.72^{b}$ & $99.7^{u}$ & $\overline{\textcolor{darkgreen}{\bf 99.84}}^{b}$ & $\uline{\textcolor{red}{\bf 99.68}}^{b}$ & $99.78^{u}$ \\
\cline{2-9}
 & \multirow[t]{5}{*}{0.10} & 0.25 & $21.47^{c}$ & $\uline{\textcolor{red}{\bf 21.19}}^{c}$ & $28.78^{u}$ & $\overline{\textcolor{darkgreen}{\bf 29.04}}^{u}$ & $22.59^{c}$ & $24.65^{u}$ \\
 &  & 0.40 & $67.58^{b}$ & $67.43^{b}$ & $72.65^{u}$ & $\overline{\textcolor{darkgreen}{\bf 72.73}}^{u}$ & $\uline{\textcolor{red}{\bf 64.02}}^{b}$ & $69.03^{u}$ \\
 &  & 0.50 & $88.8^{b}$ & $88.66^{b}$ & $\overline{\textcolor{darkgreen}{\bf 90.88}}^{u}$ & $90.87^{u}$ & $\uline{\textcolor{red}{\bf 85.65}}^{b}$ & $89.55^{u}$ \\
 &  & 0.60 & $97.64^{b}$ & $97.59^{b}$ & $\overline{\textcolor{darkgreen}{\bf 98.09}}^{u}$ & $98.08^{u}$ & $\uline{\textcolor{red}{\bf 96.26}}^{b}$ & $97.88^{u,b}$ \\
 &  & 0.70 & $99.74^{b}$ & $99.74^{b}$ & $\overline{\textcolor{darkgreen}{\bf 99.79}}^{u}$ & $\overline{\textcolor{darkgreen}{\bf 99.79}}^{u}$ & $\uline{\textcolor{red}{\bf 99.43}}^{b}$ & $99.76^{u,b}$ \\
\cline{2-9}
 & \multirow[t]{5}{*}{0.30} & 0.45 & $17.61^{c}$ & $17.86^{c}$ & $\overline{\textcolor{darkgreen}{\bf 19.28}}^{u}$ & $19.1^{u}$ & $\uline{\textcolor{red}{\bf 16.14}}^{c}$ & $18.2^{u}$ \\
 &  & 0.60 & $55.74^{b}$ & $56.21^{b}$ & $\overline{\textcolor{darkgreen}{\bf 57.13}}^{u}$ & $56.97^{u}$ & $\uline{\textcolor{red}{\bf 51.57}}^{b}$ & $53.5^{u,b}$ \\
 &  & 0.70 & $81.13^{b}$ & $81.42^{b}$ & $\overline{\textcolor{darkgreen}{\bf 82.11}}^{u}$ & $82.06^{u}$ & $\uline{\textcolor{red}{\bf 77.14}}^{b}$ & $79.3^{u,b}$ \\
 &  & 0.80 & $95.81^{b}$ & $95.84^{b}$ & $96.17^{u}$ & $\overline{\textcolor{darkgreen}{\bf 96.18}}^{u}$ & $\uline{\textcolor{red}{\bf 93.74}}^{b}$ & $95.39^{u,b}$ \\
 &  & 0.90 & $99.74^{b}$ & $99.74^{b}$ & $\overline{\textcolor{darkgreen}{\bf 99.79}}^{u}$ & $\overline{\textcolor{darkgreen}{\bf 99.79}}^{u}$ & $\uline{\textcolor{red}{\bf 99.31}}^{b}$ & $99.76^{u,b}$ \\
\cline{2-9}
 & \multirow[t]{5}{*}{0.50} & 0.65 & $17.18^{c}$ & $17.74^{c}$ & $18.56^{u}$ & $\overline{\textcolor{darkgreen}{\bf 18.58}}^{u}$ & $16.71^{c}$ & $\uline{\textcolor{red}{\bf 16.56}}^{u}$ \\
 &  & 0.70 & $29.03^{b}$ & $29.2^{b}$ & $30.15^{u}$ & $\overline{\textcolor{darkgreen}{\bf 30.21}}^{u}$ & $\uline{\textcolor{red}{\bf 27.22}}^{b}$ & $28.18^{u,b}$ \\
 &  & 0.80 & $59.78^{b}$ & $60.5^{b}$ & $61.99^{u}$ & $\overline{\textcolor{darkgreen}{\bf 62.05}}^{u}$ & $\uline{\textcolor{red}{\bf 56.81}}^{b}$ & $60.64^{u,b}$ \\
 &  & 0.90 & $88.8^{b}$ & $89.76^{b}$ & $90.87^{u}$ & $\overline{\textcolor{darkgreen}{\bf 90.88}}^{u}$ & $\uline{\textcolor{red}{\bf 86.49}}^{b}$ & $89.55^{u}$ \\
 &  & 0.95 & $96.98^{b}$ & $97.51^{b}$ & $\overline{\textcolor{darkgreen}{\bf 97.8}}^{u}$ & $97.79^{u}$ & $\uline{\textcolor{red}{\bf 95.87}}^{b}$ & $97.32^{u}$ \\
\cline{2-9}
 & \multirow[t]{4}{*}{0.70} & 0.80 & $11.59^{b}$ & $12.26^{b}$ & $\overline{\textcolor{darkgreen}{\bf 12.8}}^{u}$ & $\overline{\textcolor{darkgreen}{\bf 12.8}}^{u}$ & $11.91^{b}$ & $\uline{\textcolor{red}{\bf 11.17}}^{u}$ \\
 &  & 0.90 & $\uline{\textcolor{red}{\bf 37.05}}^{b}$ & $42.05^{b}$ & $\overline{\textcolor{darkgreen}{\bf 43.76}}^{u}$ & $43.75^{u}$ & $40.19^{b}$ & $39.05^{u}$ \\
 &  & 0.95 & $\uline{\textcolor{red}{\bf 59.58}}^{b}$ & $68.02^{b}$ & $69.61^{b}$ & $\overline{\textcolor{darkgreen}{\bf 69.63}}^{b}$ & $65.73^{b}$ & $64.19^{u}$ \\
 &  & 0.99 & $\uline{\textcolor{red}{\bf 85.43}}^{u}$ & $91.01^{b}$ & $\overline{\textcolor{darkgreen}{\bf 92.01}}^{u}$ & $92.0^{u}$ & $89.09^{b}$ & $89.35^{u}$ \\
\cline{2-9}
 & \multirow[t]{2}{*}{0.90} & 0.95 & $\uline{\textcolor{red}{\bf 4.34}}^{b}$ & $8.5^{b}$ & $8.66^{u}$ & $8.67^{u}$ & $\overline{\textcolor{darkgreen}{\bf 9.05}}^{b}$ & $7.21^{u}$ \\
 &  & 0.99 & $\uline{\textcolor{red}{\bf 8.93}}^{b}$ & $21.63^{b}$ & $22.57^{u}$ & $22.57^{u}$ & $\overline{\textcolor{darkgreen}{\bf 23.79}}^{b}$ & $18.58^{u}$ \\
\cline{1-9} \cline{2-9}
\multirow[t]{26}{*}{250} & \multirow[t]{5}{*}{0.01} & 0.05 & $36.04^{u}$ & $\uline{\textcolor{red}{\bf 36.03}}^{u}$ & $42.23^{c,b}$ & $42.45^{c,b}$ & $39.71^{b}$ & $\overline{\textcolor{darkgreen}{\bf 45.48}}^{u}$ \\
 &  & 0.10 & $89.27^{b}$ & $\uline{\textcolor{red}{\bf 89.25}}^{b}$ & $92.64^{c,b}$ & $92.72^{c,b}$ & $89.55^{b}$ & $\overline{\textcolor{darkgreen}{\bf 92.97}}^{u}$ \\
 &  & 0.13 & $97.98^{c}$ & $97.97^{c}$ & $98.79^{c}$ & $98.82^{c}$ & $\uline{\textcolor{red}{\bf 97.78}}^{c}$ & $\overline{\textcolor{darkgreen}{\bf 98.83}}^{u}$ \\
 &  & 0.17 & $99.88^{c}$ & $99.88^{c}$ & $\overline{\textcolor{darkgreen}{\bf 99.93}}^{c}$ & $\overline{\textcolor{darkgreen}{\bf 99.93}}^{c}$ & $\uline{\textcolor{red}{\bf 99.83}}^{c}$ & $\overline{\textcolor{darkgreen}{\bf 99.93}}^{u}$ \\
 &  & 0.20 & $\overline{\textcolor{darkgreen}{\bf 99.99}}^{c,b}$ & $\overline{\textcolor{darkgreen}{\bf 99.99}}^{c,b}$ & $\overline{\textcolor{darkgreen}{\bf 99.99}}^{c,u,b}$ & $\overline{\textcolor{darkgreen}{\bf 99.99}}^{c,u,b}$ & $\uline{\textcolor{red}{\bf 99.98}}^{c,b}$ & $\overline{\textcolor{darkgreen}{\bf 99.99}}^{u,b}$ \\
\cline{2-9}
 & \multirow[t]{5}{*}{0.10} & 0.20 & $58.97^{b}$ & $58.89^{b}$ & $61.3^{b}$ & $\overline{\textcolor{darkgreen}{\bf 61.33}}^{b}$ & $\uline{\textcolor{red}{\bf 53.81}}^{b}$ & $60.13^{u}$ \\
 &  & 0.25 & $88.1^{c}$ & $88.08^{c}$ & $89.38^{u}$ & $\overline{\textcolor{darkgreen}{\bf 89.42}}^{u}$ & $\uline{\textcolor{red}{\bf 83.93}}^{c}$ & $88.73^{u}$ \\
 &  & 0.30 & $98.15^{b}$ & $98.14^{b}$ & $\overline{\textcolor{darkgreen}{\bf 98.39}}^{b}$ & $\overline{\textcolor{darkgreen}{\bf 98.39}}^{u,b}$ & $\uline{\textcolor{red}{\bf 96.78}}^{b}$ & $98.26^{u}$ \\
 &  & 0.35 & $99.84^{c}$ & $99.84^{c}$ & $\overline{\textcolor{darkgreen}{\bf 99.86}}^{c,u}$ & $\overline{\textcolor{darkgreen}{\bf 99.86}}^{c,u}$ & $\uline{\textcolor{red}{\bf 99.61}}^{c}$ & $99.85^{u}$ \\
 &  & 0.40 & $\overline{\textcolor{darkgreen}{\bf 99.99}}^{c,b}$ & $\overline{\textcolor{darkgreen}{\bf 99.99}}^{c,b}$ & $\overline{\textcolor{darkgreen}{\bf 99.99}}^{c,u,b}$ & $\overline{\textcolor{darkgreen}{\bf 99.99}}^{c,u,b}$ & $\uline{\textcolor{red}{\bf 99.97}}^{c,b}$ & $\overline{\textcolor{darkgreen}{\bf 99.99}}^{c,u,b}$ \\
\cline{2-9}
 & \multirow[t]{5}{*}{0.30} & 0.40 & $37.85^{b}$ & $37.76^{b}$ & $38.07^{u}$ & $\overline{\textcolor{darkgreen}{\bf 38.14}}^{u}$ & $\uline{\textcolor{red}{\bf 33.82}}^{b}$ & $37.78^{u}$ \\
 &  & 0.50 & $90.05^{b}$ & $90.03^{b}$ & $90.24^{u}$ & $\overline{\textcolor{darkgreen}{\bf 90.27}}^{u}$ & $\uline{\textcolor{red}{\bf 86.0}}^{b}$ & $89.7^{u,b}$ \\
 &  & 0.55 & $98.13^{c}$ & $98.12^{c}$ & $\overline{\textcolor{darkgreen}{\bf 98.19}}^{u}$ & $\overline{\textcolor{darkgreen}{\bf 98.19}}^{u}$ & $\uline{\textcolor{red}{\bf 96.54}}^{c}$ & $98.08^{u}$ \\
 &  & 0.60 & $99.81^{c,b}$ & $99.81^{c,b}$ & $\overline{\textcolor{darkgreen}{\bf 99.82}}^{u}$ & $\overline{\textcolor{darkgreen}{\bf 99.82}}^{u}$ & $\uline{\textcolor{red}{\bf 99.49}}^{b}$ & $99.81^{u,b}$ \\
 &  & 0.65 & $\overline{\textcolor{darkgreen}{\bf 99.99}}^{c}$ & $\overline{\textcolor{darkgreen}{\bf 99.99}}^{c}$ & $\overline{\textcolor{darkgreen}{\bf 99.99}}^{c,u}$ & $\overline{\textcolor{darkgreen}{\bf 99.99}}^{c,u}$ & $\uline{\textcolor{red}{\bf 99.95}}^{c}$ & $\overline{\textcolor{darkgreen}{\bf 99.99}}^{u}$ \\
\cline{2-9}
 & \multirow[t]{5}{*}{0.50} & 0.60 & $35.4^{b}$ & $\overline{\textcolor{darkgreen}{\bf 35.48}}^{b}$ & $\overline{\textcolor{darkgreen}{\bf 35.48}}^{u}$ & $35.45^{u}$ & $\uline{\textcolor{red}{\bf 32.05}}^{b}$ & $35.17^{u,b}$ \\
 &  & 0.70 & $90.05^{b}$ & $90.1^{b}$ & $\overline{\textcolor{darkgreen}{\bf 90.23}}^{u}$ & $90.17^{u}$ & $\uline{\textcolor{red}{\bf 86.24}}^{b}$ & $89.7^{u,b}$ \\
 &  & 0.75 & $98.51^{c}$ & $98.52^{c}$ & $\overline{\textcolor{darkgreen}{\bf 98.57}}^{u}$ & $98.56^{u}$ & $\uline{\textcolor{red}{\bf 97.16}}^{c}$ & $98.48^{u}$ \\
 &  & 0.80 & $\overline{\textcolor{darkgreen}{\bf 99.92}}^{b}$ & $\overline{\textcolor{darkgreen}{\bf 99.92}}^{b}$ & $\overline{\textcolor{darkgreen}{\bf 99.92}}^{c,u,b}$ & $\overline{\textcolor{darkgreen}{\bf 99.92}}^{c,u,b}$ & $\uline{\textcolor{red}{\bf 99.72}}^{b}$ & $\overline{\textcolor{darkgreen}{\bf 99.92}}^{u,b}$ \\
 &  & 0.85 & $\overline{\textcolor{darkgreen}{\bf 100.0}}^{c}$ & $\overline{\textcolor{darkgreen}{\bf 100.0}}^{c}$ & $\overline{\textcolor{darkgreen}{\bf 100.0}}^{c,u}$ & $\overline{\textcolor{darkgreen}{\bf 100.0}}^{c,u}$ & $\uline{\textcolor{red}{\bf 99.99}}^{c}$ & $\overline{\textcolor{darkgreen}{\bf 100.0}}^{c,u}$ \\
\cline{2-9}
 & \multirow[t]{4}{*}{0.70} & 0.80 & $44.15^{b}$ & $44.62^{b}$ & $\overline{\textcolor{darkgreen}{\bf 45.06}}^{b}$ & $45.05^{b}$ & $\uline{\textcolor{red}{\bf 41.1}}^{b}$ & $44.64^{u}$ \\
 &  & 0.85 & $81.16^{c}$ & $81.65^{c}$ & $\overline{\textcolor{darkgreen}{\bf 82.25}}^{c}$ & $82.21^{c}$ & $\uline{\textcolor{red}{\bf 77.25}}^{c}$ & $81.69^{u}$ \\
 &  & 0.90 & $98.15^{b}$ & $98.27^{b}$ & $98.39^{c,b}$ & $\overline{\textcolor{darkgreen}{\bf 98.4}}^{b}$ & $\uline{\textcolor{red}{\bf 96.81}}^{b}$ & $98.26^{u}$ \\
 &  & 0.95 & $\overline{\textcolor{darkgreen}{\bf 99.99}}^{c,b}$ & $\overline{\textcolor{darkgreen}{\bf 99.99}}^{c,b}$ & $\overline{\textcolor{darkgreen}{\bf 99.99}}^{c,u,b}$ & $\overline{\textcolor{darkgreen}{\bf 99.99}}^{c,u,b}$ & $\uline{\textcolor{red}{\bf 99.93}}^{b}$ & $\overline{\textcolor{darkgreen}{\bf 99.99}}^{u,b}$ \\
\cline{2-9}
 & \multirow[t]{2}{*}{0.90} & 0.95 & $\uline{\textcolor{red}{\bf 28.65}}^{b}$ & $\overline{\textcolor{darkgreen}{\bf 32.05}}^{b}$ & $31.67^{b}$ & $31.76^{b}$ & $30.79^{b}$ & $31.02^{u}$ \\
 &  & 0.99 & $\uline{\textcolor{red}{\bf 89.27}}^{b}$ & $\overline{\textcolor{darkgreen}{\bf 93.4}}^{b}$ & $92.64^{c,b}$ & $92.74^{b}$ & $91.21^{b}$ & $92.97^{u}$ \\
\cline{1-9} \cline{2-9}
\bottomrule
\end{tabular}

\\{$^{c}$ conditional exact test, $^{u}$ unconditional exact  test, $^{b}$ Boschloo's test.\\
NA: doubly adaptive biased coin design Neyman allocation, TNA: tempered doubly adaptive biased coin design Neyman allocation, CMDP-P: constrained Markov decision process optimizing average power, CMDP-BP: protected constrained Markov decision process optimizing average power, Bayesian RAR: Bayesian response-adaptive randomization, EA: equal allocation}\label{summary_table}
\end{table}
\clearpage

of patient benefit from Bayesian RAR while showing power comparable to equal allocation for~$n=250$. \autoref{tab:RR_CMDP_EP} shows that CMDP-BP shows no substantial loss in power in comparison to the CMDP-P procedure.

\subsection{Summary table}\label{sect:summary_table}
\autoref{summary_table} summarizes the tables in the appendix and displays the highest power attained for each RAR procedure across exact tests. The respective test under which this power is attained is denoted using a superscript. 

 The table shows that the highest power values are attained for the CMDP-BP procedure, with a highest gain over equal allocation of around~$6.41\%$ for the parameter configuration~$(0.01,0.2)$ for~$\Iend=50$ under the unconditional exact test, and a highest gain around~0.57\%  for the parameter configuration~$(0.3,0.5)$ for~$\Iend=250$ under the unconditional exact test, also noting the loss of~-3.03\%  for the parameter configuration~$(0.01,0.05)$. We note that, next to power gains, this procedure was also able to increase patient benefit over equal allocation, sometimes reaching patient benefit values that are also seen under the Bayesian RAR design, only targeting high patient benefit. For this reason, one might prefer the CMDP-BP design over equal allocation. 
\FloatBarrier
\subsection{Results for randomization-based Wald test}
\autoref{tab:randomization_test} shows simulation-based unconditional type I error rates and power values for the randomization-based test~\citep{SIMON2011767} based on the Wald statistic. 
A substantially different relative performance of the designs as in~\autoref{summary_table} is seen. The CMDP-based procedures often provide lowest power, while the tempered  DBCD Neyman allocation procedure often provides highest power, followed by equal allocation and DBCD Neyman allocation, respectively. An explanation could be that the CMDP procedures have allocation probabilities in the set~$\{0.05, 0.5, 0.95\}$. Hence, possible allocation paths for CMDP-based procedures might concentrate probabilistically, leading to low power under the randomization test. Instead, the other RAR procedures have a lower degree of probabilistic clustering in the allocation paths as they have a wider range of potential allocation probabilities, or often have allocation probabilities close to 50\%.  Comparing power values under the randomization-based Wald test with the maximum power values in~\autoref{summary_table}, the randomization-based test often gives lower power for the CMDP-based and Bayesian RAR procedures, 
For the DBCD Neyman allocation procedures and equal allocation, the randomization-based Wald test mainly gives lower power for small or large control success rates under~$n = 50$.

\FloatBarrier
\section{Conclusion and discussion}\label{sect:conclusion_discussion}

We have investigated solutions for type I error rate inflation under power-maximizing {\it response-adaptive randomization}~(RAR) procedures, which is a problem recently identified in~\citet{pin2025revisitingoptimalallocationsbinary} for RAR procedures targeting maximum statistical power. This problem is of particular concern for the use of RAR designs in confirmatory trial settings. In this paper, we focused on two-arm clinical trials with binary outcomes.

As a first step, exact tests such as the conditional exact test~(based on total successes) and unconditional exact test, as well as a novel generalized version of Boschloo's test, were considered in combination with the {\it doubly adaptive biased coin design}~(DBCD) Neyman allocation. This idea alone did not lead to a satisfying result, as the unconditional exact test, which out of all considered exact tests almost uniformly provided highest power under equal allocation, showed a very conservative type I error rate profile under DBCD Neyman allocation, and low power as a result. This is mainly due to the type I error rate profile for the asymptotic Wald test under DBCD Neyman allocation being very peaked. An improvement was found in terms of power for the conditional exact test, but this is somewhat of an empty victory for DBCD Neyman allocation as the conditional exact test for equal allocation coincides with Fisher's exact test, which is known to be very conservative. For instance, power improvements for the conditional exact test were even~(often) seen for the Bayesian RAR procedure solely targeting high expected treatment outcomes. 
For DBCD Neyman allocation and Bayesian RAR procedures, our newly developed generalized version of Boschloo's test often showed highest power.

A second considered step is to directly incorporate the type I error rate constraints when maximizing power by the RAR procedure through the {\it constrained Markov decision process}~(CMDP) framework considered in~\citet{baas2024CMDP}. 
This approach yields uniformly higher power for the unconditional exact Wald test than under the equal allocation design for a trial size of~50 participants, whereas for a trial size of~250 participants, the power gains are smaller and less uniform. 
A benefit of the CMDP procedure is that it can be used for any test, not only the Wald test~\citep[e.g.,][considered Fisher's exact test instead]{baas2024CMDP}. Another benefit is that the extension to multiple treatment arms might be more readily available.

We also considered versions of DBCD Neyman allocation and CMDP
procedures that aimed to protect the expected proportion of participants allocated to the superior arm at a minimum of~50\%. As expected, these procedures led to a uniform increase in expected treatment outcomes compared to their non-protected counterparts, but curiously, they also showed comparable or higher power for most considered parameters. %

The evaluation in this paper shows that CMDP procedures show higher power than equal allocation for the unconditional exact Wald test under a trial size of~50 participants. For a trial size of~250 participants, the power gains under CMDP were smaller; however, patient benefit values under the CMDP procedure, maximizing power and constraining patient benefit, are often higher than under equal allocation. Hence, we suggest using this design in small to moderate trial sizes when the model of~\autoref{subsect:model} can reasonably be assumed. 
In case the model in~\autoref{subsect:model} cannot be assumed, one might use a randomization-based test, and our results show that the CMDP procedures showed lowest power under this test, while DBCD Neyman allocation and equal allocation showed higher power. As tempered DBCD Neyman allocation showed high power values as well as uniform gains in patient benefit over DBCD Neyman allocation and equal allocation, we suggest using tempered DBCD Neyman allocation when using a randomization-based test.

While we aimed for a general comparison of RAR procedures in this paper, a number of specific choices were made, listed at the start of~\autoref{sect:numerical_results}, and these choices could warrant further investigation in specific trial settings. For instance,~\citet{pin2025revisitingoptimalallocationsbinary} showed that the type I error rate inflation under DBCD Neyman allocation is exacerbated when decreasing the burn-in value. The best choice of the burn-in length may depend on~(a priori beliefs surrounding) the specific trial parameters at hand, and could possibly be determined using the evaluation procedure described in~\citet{Tang2025}. One further thing of note is that we mainly focused on the Wald test for simple differences in this paper. We note that our evaluation can also be directly applied to other treatment effect formulations~(e.g., the odds ratio) by changing the DBCD Neyman allocation procedure as explained in~\citet{Pin2024}, and by reformulating the CMDP procedures using the Wald test for the treatment effect formulation of interest.  Our conclusions likely do not carry over to multi-arm trials, e.g., in multi-arm settings with fixed control allocation proportions, the type I error rate inflation is likely not as substantial as in the two-arm setting, leaving more room for power gains.

As Algorithm~1 in~\citet{baas2024CMDP} does not solve the CMDP problem exactly~(but often yields a procedure having the desired properties) and can have a relatively long computation time, future research could focus on alternative and more computationally tractable solution methods for the CMDP problem. Another potential future research topic is obtaining exact instead of~(Bayesian) average guarantees for CMDP procedures, which could help provide uniform power improvements, even for higher trial sizes.
Future research could also investigate improving the performance of CMDP procedures for randomization-based tests.
An additional topic of future research is modeling the test outcomes as decisions made by the CMDP problem, in an approach similar to the one in~\citet{baas2025exactstatisticaltestsusing},  thereby removing the dependence on a specific test known from the literature and improving the average power gains of the CMDP-P procedure.  
Lastly, the curves showing the differences in expected proportion of participants allocated to the superior arm for the CMDP procedures did not show an interpretable pattern, and future research could investigate the asymptotic allocation proportions for CMDP procedures. To conclude, we found that the protected CMDP procedure often outperforms equal allocation under the unconditional exact test, while also showing benefits in terms of expected treatment outcomes, especially for small trial sizes~(e.g., $50$ participants). For larger trial sizes~(around~$250$), the power gains are smaller, but several promising future research directions could lead to CMDP procedures outperforming equal allocation in such settings.

\bibliographystyle{elsarticle-harv} 
\bibliography{main}

\begin{appendix}

\section{Allocation proportion plots for DBCD Neyman allocation}
\begin{figure}[h!]
    \centering
    \includegraphics[width=\linewidth]{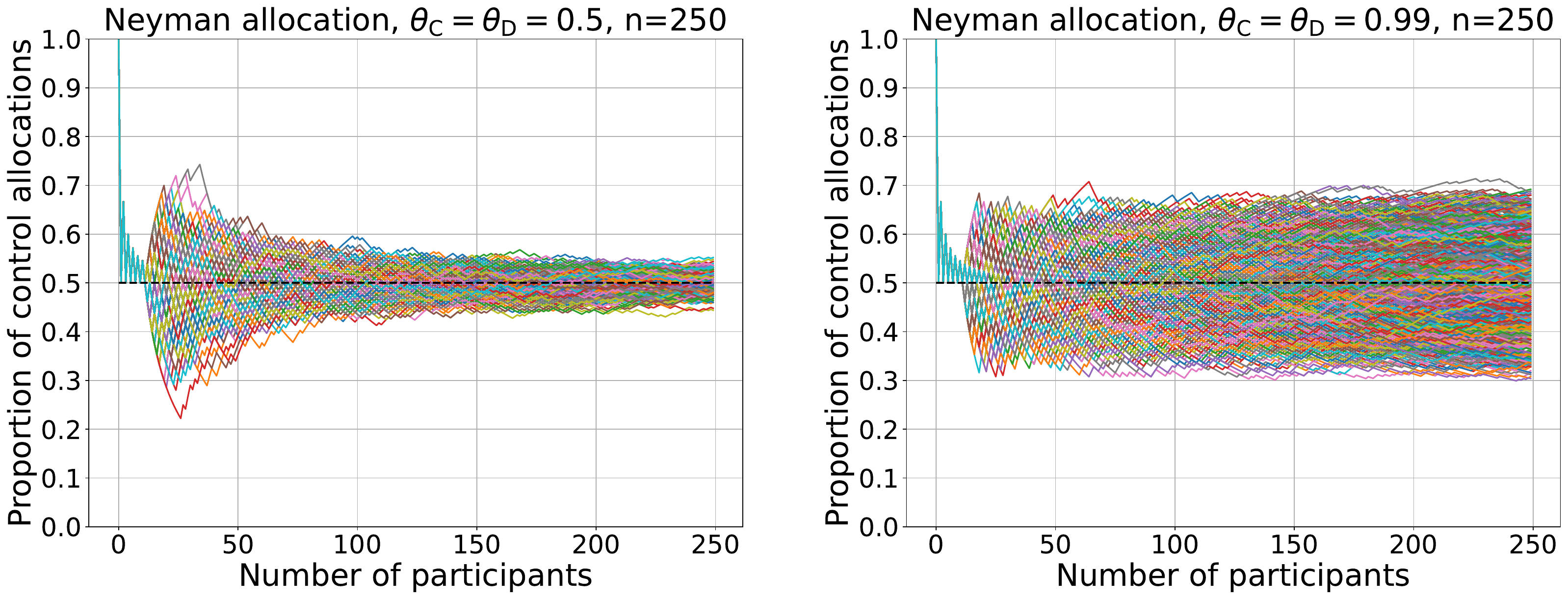}
    \caption{Simulation paths of the running proportion of participants allocated to the control treatment under doubly adaptive biased coin design Neyman allocation~(burn-in 6 participants per arm) in case~$\theta_{\C}=\theta_{\D}=0.5$~(left) and~$\theta_{\C}=\theta_{\D}=0.99$~(right). The initial period of 12 participants where the allocation proportion fluctuates between~$0.5$ and a value higher than~$0.5$ corresponds to the burn-in period~(with even/uneven allocation to control/developmental treatment). }
    \label{fig:simulation_paths}
\end{figure}

\FloatBarrier

\begin{table}[]
\vspace{-5mm}
\section{Rejection rate tables}
  \centering
  \caption{Rejection rates~$(\%)$ for  the doubly adaptive biased coin design Neyman allocation procedure~(burn-in equal to~6 participants per arm). The bold values indicate highest power or type I error inflation in the row, depending on the parameter configuration considered. The underlined values indicate higher power than under that test for the equal allocation procedure. The twice underlined values indicate higher power than the test with highest power for the equal allocation procedure. Both upper and lower significance levels were set to~$2.5\%$.}\label{tab_RR_neyman}
\scriptsize \begin{tabular}{lllrrrrrrrr}
\toprule
 &  &  & \multicolumn{4}{c}{DBCD Neyman allocation} & \multicolumn{4}{c}{Equal allocation} \\
 &  &  & CX-S & UX & Asymp. & GB & CX-S & UX & Asymp. & GB \\
$\Iend$ & $\theta_\text{C}$ & $\theta_{\text{D}}$ &  &  &  &  &  &  &  &  \\
\midrule
\multirow[t]{32}{*}{50} & \multirow[t]{6}{*}{0.01} & 0.01 & 0.74 & 0.00 & 0.05 & 1.80 & 0.00 & 0.02 & 0.02 & 0.00 \\
 &  & 0.10 & \uline{8.05} & 6.19 & 32.66 & \uline{\textcolor{darkgreen}{\bf8.93}} & 2.79 & \textcolor{darkgreen}{\bf18.58} & 19.05 & 7.81 \\
 &  & 0.20 & \uline{38.98} & \textcolor{darkgreen}{\bf51.34} & 80.32 & 42.14 & 34.25 & \textcolor{darkgreen}{\bf64.17} & 67.38 & 49.64 \\
 &  & 0.30 & \uline{79.85} & \textcolor{darkgreen}{\bf85.43} & 95.60 & 82.05 & 76.47 & \textcolor{darkgreen}{\bf89.35} & 92.26 & 84.89 \\
 &  & 0.40 & \uline{96.63} & 96.51 & 99.35 & \textcolor{darkgreen}{\bf97.02} & 95.50 & \textcolor{darkgreen}{\bf97.92} & 98.82 & 97.37 \\
 &  & 0.50 & \uline{99.68} & 99.47 & 99.95 & \textcolor{darkgreen}{\bf99.72} & 99.53 & \textcolor{darkgreen}{\bf99.78} & 99.89 & 99.75 \\
\cline{2-11}
 & \multirow[t]{6}{*}{0.10} & 0.10 & 2.51 & 2.63 & \textcolor{red}{\bf 13.71} & 2.75 & 0.88 & 3.90 & 4.87 & 1.91 \\
 &  & 0.25 & \uline{21.47} & 21.22 & 38.97 & \textcolor{darkgreen}{\bf22.73} & 18.04 & \textcolor{darkgreen}{\bf24.65} & 30.66 & 23.79 \\
 &  & 0.40 & \uline{65.79} & 56.03 & 78.24 & \textcolor{darkgreen}{\bf67.58} & 59.05 & \textcolor{darkgreen}{\bf69.03} & 75.60 & 68.98 \\
 &  & 0.50 & \uline{87.88} & 79.51 & 93.18 & \textcolor{darkgreen}{\bf88.80} & 83.26 & \textcolor{darkgreen}{\bf89.55} & 92.49 & 89.54 \\
 &  & 0.60 & \uline{97.41} & 93.77 & 98.68 & \textcolor{darkgreen}{\bf97.64} & 95.95 & \textcolor{darkgreen}{\bf97.88} & 98.52 & \textcolor{darkgreen}{\bf97.88} \\
 &  & 0.70 & \uline{99.71} & 98.98 & 99.87 & \textcolor{darkgreen}{\bf99.74} & 99.53 & \textcolor{darkgreen}{\bf99.76} & 99.85 & \textcolor{darkgreen}{\bf99.76} \\
\cline{2-11}
 & \multirow[t]{6}{*}{0.30} & 0.30 & 4.51 & 2.58 & \textcolor{red}{\bf 8.49} & 4.96 & 2.29 & 4.48 & \textcolor{red}{\bf 6.42} & 4.48 \\
 &  & 0.45 & \uline{17.61} & 9.53 & 24.25 & \uuline{\textcolor{darkgreen}{\bf18.52}} & 13.46 & \textcolor{darkgreen}{\bf18.20} & 22.15 & \textcolor{darkgreen}{\bf18.20} \\
 &  & 0.60 & \uuline{54.96} & 39.39 & 63.64 & \uuline{\textcolor{darkgreen}{\bf55.74}} & 50.59 & \textcolor{darkgreen}{\bf53.50} & 62.36 & \textcolor{darkgreen}{\bf53.50} \\
 &  & 0.70 & \uuline{80.78} & 68.49 & 86.53 & \uuline{\textcolor{darkgreen}{\bf81.13}} & 78.22 & \textcolor{darkgreen}{\bf79.30} & 85.96 & \textcolor{darkgreen}{\bf79.30} \\
 &  & 0.80 & \uuline{95.62} & 90.32 & 97.40 & \uuline{\textcolor{darkgreen}{\bf95.81}} & 94.61 & \textcolor{darkgreen}{\bf95.39} & 97.25 & \textcolor{darkgreen}{\bf95.39} \\
 &  & 0.90 & \uline{99.71} & 98.98 & 99.87 & \textcolor{darkgreen}{\bf99.74} & 99.53 & \textcolor{darkgreen}{\bf99.76} & 99.85 & \textcolor{darkgreen}{\bf99.76} \\
\cline{2-11}
 & \multirow[t]{6}{*}{0.50} & 0.50 & 4.49 & 1.70 & \textcolor{red}{\bf 7.21} & 4.65 & 3.28 & 3.73 & \textcolor{red}{\bf 6.51} & 3.73 \\
 &  & 0.65 & \uuline{17.18} & 9.05 & 23.34 & \uuline{\textcolor{darkgreen}{\bf17.82}} & 13.92 & \textcolor{darkgreen}{\bf16.56} & 21.82 & \textcolor{darkgreen}{\bf16.56} \\
 &  & 0.70 & \uline{28.00} & 16.47 & 35.96 & \uuline{\textcolor{darkgreen}{\bf29.03}} & 23.25 & \textcolor{darkgreen}{\bf28.18} & 33.97 & \textcolor{darkgreen}{\bf28.18} \\
 &  & 0.80 & \uline{58.31} & 42.93 & 67.67 & \textcolor{darkgreen}{\bf59.78} & 51.04 & \textcolor{darkgreen}{\bf60.64} & 65.60 & \textcolor{darkgreen}{\bf60.64} \\
 &  & 0.90 & \uline{87.88} & 79.51 & 93.18 & \textcolor{darkgreen}{\bf88.80} & 83.26 & \textcolor{darkgreen}{\bf89.55} & 92.49 & 89.54 \\
 &  & 0.95 & \uline{96.59} & 93.74 & 98.67 & \textcolor{darkgreen}{\bf96.98} & 95.15 & \textcolor{darkgreen}{\bf97.32} & 98.35 & 97.31 \\
\cline{2-11}
 & \multirow[t]{5}{*}{0.70} & 0.70 & 4.51 & 2.58 & \textcolor{red}{\bf 8.49} & 4.96 & 2.29 & 4.48 & \textcolor{red}{\bf 6.42} & 4.48 \\
 &  & 0.80 & \uline{10.76} & 7.76 & 18.86 & \uuline{\textcolor{darkgreen}{\bf11.59}} & 6.87 & \textcolor{darkgreen}{\bf11.17} & 15.26 & 11.15 \\
 &  & 0.90 & \uline{35.46} & 31.26 & 52.92 & \textcolor{darkgreen}{\bf37.05} & 30.49 & \textcolor{darkgreen}{\bf39.05} & 46.29 & 38.64 \\
 &  & 0.95 & \uline{57.82} & 56.60 & 77.34 & \textcolor{darkgreen}{\bf59.58} & 54.92 & \textcolor{darkgreen}{\bf64.19} & 71.05 & 62.60 \\
 &  & 0.99 & \uline{79.85} & \textcolor{darkgreen}{\bf85.43} & 95.60 & 82.05 & 76.47 & \textcolor{darkgreen}{\bf89.35} & 92.26 & 84.89 \\
\cline{2-11}
 & \multirow[t]{3}{*}{0.90} & 0.90 & 2.51 & 2.63 & \textcolor{red}{\bf 13.71} & 2.75 & 0.88 & 3.90 & 4.87 & 1.91 \\
 &  & 0.95 & \uline{3.90} & 3.21 & 17.77 & \uline{\textcolor{darkgreen}{\bf4.34}} & 1.30 & \textcolor{darkgreen}{\bf7.21} & 8.10 & 3.17 \\
 &  & 0.99 & \uline{8.05} & 6.19 & 32.66 & \uline{\textcolor{darkgreen}{\bf8.93}} & 2.79 & \textcolor{darkgreen}{\bf18.58} & 19.05 & 7.81 \\
\cline{1-11} \cline{2-11}
\multirow[t]{32}{*}{250} & \multirow[t]{6}{*}{0.01} & 0.01 & 4.13 & 0.10 & \textcolor{red}{\bf 6.40} & 4.13 & 0.10 & 2.15 & 2.15 & 0.50 \\
 &  & 0.05 & \uline{34.40} & \textcolor{darkgreen}{\bf36.04} & 59.66 & 34.52 & 29.83 & \textcolor{darkgreen}{\bf45.48} & 45.49 & 36.25 \\
 &  & 0.10 & \uline{89.02} & 83.57 & 95.25 & \textcolor{darkgreen}{\bf89.27} & 87.46 & \textcolor{darkgreen}{\bf92.97} & 93.05 & 90.49 \\
 &  & 0.13 & \uline{97.98} & 95.55 & 99.23 & \textcolor{darkgreen}{\bf98.03} & 97.42 & \textcolor{darkgreen}{\bf98.83} & 98.88 & 98.25 \\
 &  & 0.17 & \uline{\textcolor{darkgreen}{\bf99.88}} & 99.54 & 99.96 & \textcolor{darkgreen}{\bf99.88} & 99.82 & \textcolor{darkgreen}{\bf99.93} & 99.94 & 99.88 \\
 &  & 0.20 & \uline{\textcolor{darkgreen}{\bf99.99}} & 99.94 & 100.00 & \textcolor{darkgreen}{\bf99.99} & 99.98 & \textcolor{darkgreen}{\bf99.99} & 100.00 & \textcolor{darkgreen}{\bf99.99} \\
\cline{2-11}
 & \multirow[t]{6}{*}{0.10} & 0.10 & 4.90 & 1.94 & \textcolor{red}{\bf 7.30} & 4.96 & 2.93 & 4.61 & \textcolor{red}{\bf 5.43} & 3.92 \\
 &  & 0.20 & \uline{58.85} & 36.04 & 63.14 & \uline{\textcolor{darkgreen}{\bf58.97}} & 53.82 & \textcolor{darkgreen}{\bf60.13} & 61.93 & 58.40 \\
 &  & 0.25 & \uline{88.10} & 71.32 & 89.90 & \uline{\textcolor{darkgreen}{\bf88.16}} & 85.48 & \textcolor{darkgreen}{\bf88.73} & 89.46 & 88.00 \\
 &  & 0.30 & \uline{98.14} & 92.52 & 98.49 & \uline{\textcolor{darkgreen}{\bf98.15}} & 97.50 & \textcolor{darkgreen}{\bf98.26} & 98.40 & 98.12 \\
 &  & 0.35 & \uline{\textcolor{darkgreen}{\bf99.84}} & 98.90 & 99.88 & \textcolor{darkgreen}{\bf99.84} & 99.76 & \textcolor{darkgreen}{\bf99.85} & 99.87 & 99.84 \\
 &  & 0.40 & \textcolor{darkgreen}{\bf99.99} & 99.91 & 99.99 & \textcolor{darkgreen}{\bf99.99} & \textcolor{darkgreen}{\bf99.99} & \textcolor{darkgreen}{\bf99.99} & 99.99 & \textcolor{darkgreen}{\bf99.99} \\
\cline{2-11}
 & \multirow[t]{6}{*}{0.30} & 0.30 & 4.90 & 0.88 & \textcolor{red}{\bf 5.46} & 4.94 & 3.62 & 4.94 & \textcolor{red}{\bf 5.15} & 4.80 \\
 &  & 0.40 & \uline{37.68} & 16.25 & 39.33 & \uuline{\textcolor{darkgreen}{\bf37.85}} & 33.84 & \textcolor{darkgreen}{\bf37.78} & 39.29 & 37.74 \\
 &  & 0.50 & \uuline{89.99} & 72.56 & 90.70 & \uuline{\textcolor{darkgreen}{\bf90.05}} & 87.62 & \textcolor{darkgreen}{\bf89.70} & 90.42 & \textcolor{darkgreen}{\bf89.70} \\
 &  & 0.55 & \uuline{98.13} & 91.83 & 98.29 & \uuline{\textcolor{darkgreen}{\bf98.14}} & 97.43 & \textcolor{darkgreen}{\bf98.08} & 98.16 & \textcolor{darkgreen}{\bf98.08} \\
 &  & 0.60 & \uline{\textcolor{darkgreen}{\bf99.81}} & 98.61 & 99.83 & \textcolor{darkgreen}{\bf99.81} & 99.72 & \textcolor{darkgreen}{\bf99.81} & 99.81 & \textcolor{darkgreen}{\bf99.81} \\
 &  & 0.65 & \uline{\textcolor{darkgreen}{\bf99.99}} & 99.87 & 99.99 & \textcolor{darkgreen}{\bf99.99} & 99.98 & \textcolor{darkgreen}{\bf99.99} & 99.99 & \textcolor{darkgreen}{\bf99.99} \\
\cline{2-11}
 & \multirow[t]{6}{*}{0.50} & 0.50 & 4.95 & 0.79 & \textcolor{red}{\bf 5.16} & 4.99 & 3.67 & 4.97 & 4.97 & 4.97 \\
 &  & 0.60 & \uuline{35.30} & 14.38 & 36.38 & \uuline{\textcolor{darkgreen}{\bf35.40}} & 30.59 & \textcolor{darkgreen}{\bf35.17} & 35.31 & \textcolor{darkgreen}{\bf35.17} \\
 &  & 0.70 & \uuline{89.99} & 72.56 & 90.70 & \uuline{\textcolor{darkgreen}{\bf90.05}} & 87.62 & \textcolor{darkgreen}{\bf89.70} & 90.42 & \textcolor{darkgreen}{\bf89.70} \\
 &  & 0.75 & \uuline{98.51} & 93.20 & 98.67 & \uuline{\textcolor{darkgreen}{\bf98.53}} & 98.06 & \textcolor{darkgreen}{\bf98.48} & 98.67 & \textcolor{darkgreen}{\bf98.48} \\
 &  & 0.80 & \uline{99.91} & 99.29 & 99.93 & \textcolor{darkgreen}{\bf99.92} & 99.88 & \textcolor{darkgreen}{\bf99.92} & 99.93 & \textcolor{darkgreen}{\bf99.92} \\
 &  & 0.85 & \textcolor{darkgreen}{\bf100.00} & 99.98 & 100.00 & \textcolor{darkgreen}{\bf100.00} & \textcolor{darkgreen}{\bf100.00} & \textcolor{darkgreen}{\bf100.00} & 100.00 & \textcolor{darkgreen}{\bf100.00} \\
\cline{2-11}
 & \multirow[t]{5}{*}{0.70} & 0.70 & 4.90 & 0.88 & \textcolor{red}{\bf 5.46} & 4.94 & 3.62 & 4.94 & \textcolor{red}{\bf 5.15} & 4.80 \\
 &  & 0.80 & \uline{43.99} & 21.19 & 46.39 & \uline{\textcolor{darkgreen}{\bf44.15}} & 39.24 & \textcolor{darkgreen}{\bf44.64} & 45.44 & 43.87 \\
 &  & 0.85 & \uline{81.16} & 59.31 & 83.03 & \uline{\textcolor{darkgreen}{\bf81.23}} & 77.67 & \textcolor{darkgreen}{\bf81.69} & 82.38 & 81.02 \\
 &  & 0.90 & \uline{98.14} & 92.52 & 98.49 & \uline{\textcolor{darkgreen}{\bf98.15}} & 97.50 & \textcolor{darkgreen}{\bf98.26} & 98.40 & 98.12 \\
 &  & 0.95 & \uline{\textcolor{darkgreen}{\bf99.99}} & 99.87 & 99.99 & \textcolor{darkgreen}{\bf99.99} & 99.98 & \textcolor{darkgreen}{\bf99.99} & 99.99 & \textcolor{darkgreen}{\bf99.99} \\
\cline{2-11}
 & \multirow[t]{3}{*}{0.90} & 0.90 & 4.90 & 1.94 & \textcolor{red}{\bf 7.30} & 4.96 & 2.93 & 4.61 & \textcolor{red}{\bf 5.43} & 3.92 \\
 &  & 0.95 & \uline{28.49} & 17.76 & 37.59 & \uline{\textcolor{darkgreen}{\bf28.65}} & 23.49 & \textcolor{darkgreen}{\bf31.02} & 33.80 & 27.26 \\
 &  & 0.99 & \uline{89.02} & 83.57 & 95.25 & \textcolor{darkgreen}{\bf89.27} & 87.46 & \textcolor{darkgreen}{\bf92.97} & 93.05 & 90.49 \\
\cline{1-11} \cline{2-11}
\bottomrule
\end{tabular}

CX-S: conditional exact tests based on total successes, UX: unconditional exact test, Asymp.: asymptotic test, GB: generalized version of Boschloo's test
\end{table}

\begin{figure}
	\centering \includegraphics[width = .85\textwidth]{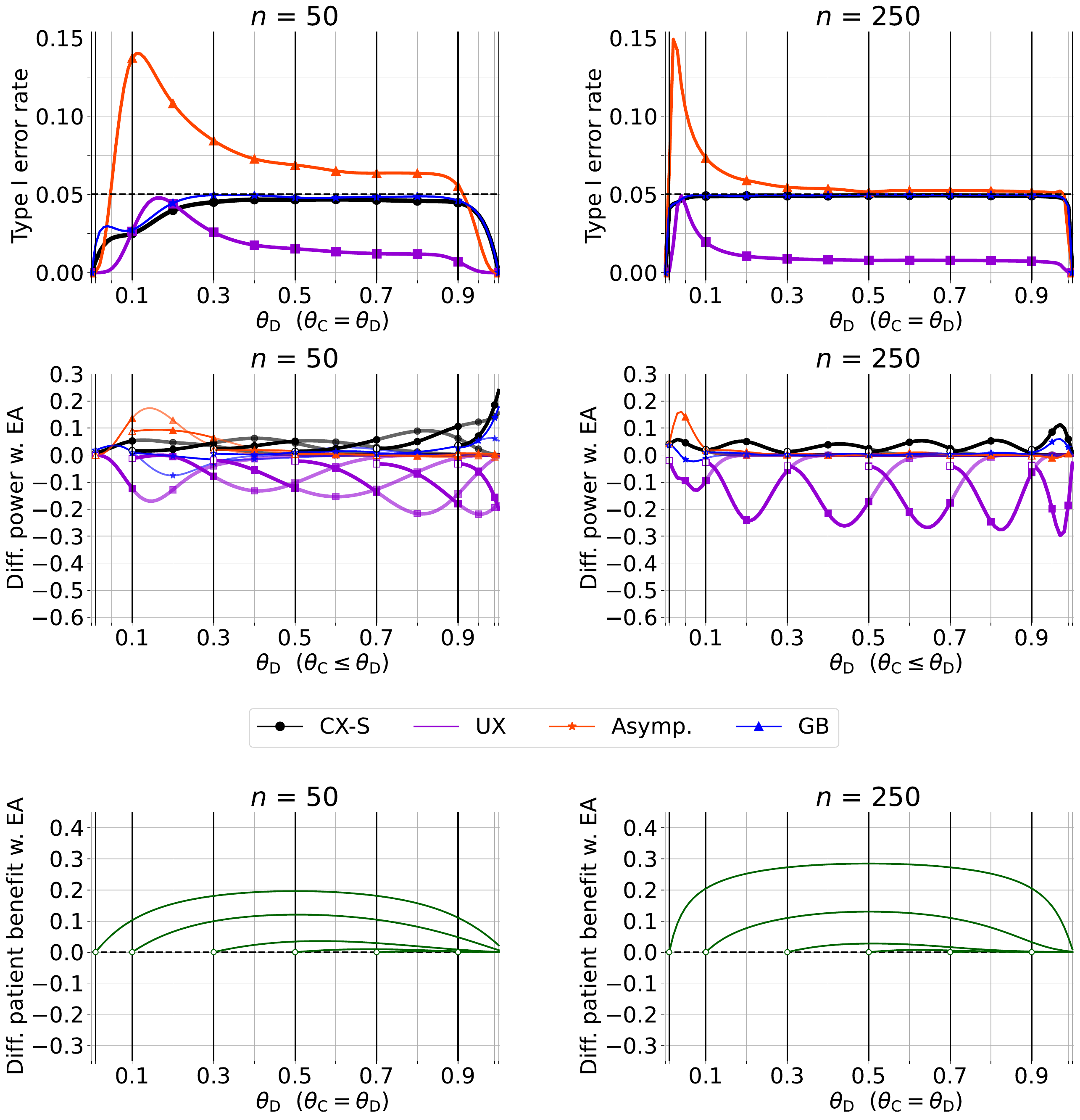}
	\caption{Tempered doubly adaptive biased coin design Neyman allocation; type I error rate profiles~(top), power difference curves compared to the same test under equal allocation~(subtracted, middle), and patient benefit difference curves with respect to equal allocation~(subtracted, bottom) for trial sizes~$n\in\{50,250\}$. We consider the conditional exact, unconditional exact~(UX), asymptotic~(Asymp.), and generalized version of Boschloo's~(GB) Wald tests. For all tests, the significance level was set to~$0.05$.}
	\label{fig:TNA_plots}
\end{figure}

\begin{table}[]
    \centering
 \caption{Rejection rates~$(\%)$ for the tempered doubly adaptive biased coin design Neyman allocation procedure~(burn-in equal to~6 participants per arm). The bold values indicate highest power or type I error inflation, depending on the parameter configuration considered. The underlined values indicate higher power than under that test for the equal allocation procedure. The twice underlined values indicate higher power than the test with highest power for the equal allocation procedure. Both upper and lower significance levels were set to~$2.5\%$.}\label{RR_TNA}
\scriptsize \begin{tabular}{lllrrrrrrrr}
\toprule
 &  &  & \multicolumn{4}{c}{Tempered DBCD Neyman allocation} & \multicolumn{4}{c}{Equal allocation} \\
 &  &  & CX-S & UX & Asymp. & GB & CX-S & UX & Asymp. & GB \\
$\Iend$ & $\theta_\text{C}$ & $\theta_{\text{D}}$ &  &  &  &  &  &  &  &  \\
\midrule
\multirow[t]{32}{*}{50} & \multirow[t]{6}{*}{0.01} & 0.01 & 0.74 & 0.00 & 0.05 & 1.80 & 0.00 & 0.02 & 0.02 & 0.00 \\
 &  & 0.10 & \uline{8.00} & 6.19 & 32.71 & \uline{\textcolor{darkgreen}{\bf8.88}} & 2.79 & \textcolor{darkgreen}{\bf18.58} & 19.05 & 7.81 \\
 &  & 0.20 & \uline{38.90} & \textcolor{darkgreen}{\bf51.33} & 80.30 & 42.07 & 34.25 & \textcolor{darkgreen}{\bf64.17} & 67.38 & 49.64 \\
 &  & 0.30 & \uline{79.79} & \textcolor{darkgreen}{\bf85.40} & 95.56 & 82.02 & 76.47 & \textcolor{darkgreen}{\bf89.35} & 92.26 & 84.89 \\
 &  & 0.40 & \uline{96.59} & 96.49 & 99.33 & \textcolor{darkgreen}{\bf97.01} & 95.50 & \textcolor{darkgreen}{\bf97.92} & 98.82 & 97.37 \\
 &  & 0.50 & \uline{99.67} & 99.46 & 99.94 & \textcolor{darkgreen}{\bf99.72} & 99.53 & \textcolor{darkgreen}{\bf99.78} & 99.89 & 99.75 \\
\cline{2-11}
 & \multirow[t]{6}{*}{0.10} & 0.10 & 2.50 & 2.63 & \textcolor{red}{\bf 13.73} & 2.74 & 0.88 & 3.90 & 4.87 & 1.91 \\
 &  & 0.25 & \uline{21.19} & 21.18 & 38.81 & \textcolor{darkgreen}{\bf22.68} & 18.04 & \textcolor{darkgreen}{\bf24.65} & 30.66 & 23.79 \\
 &  & 0.40 & \uline{65.26} & 55.82 & 77.99 & \textcolor{darkgreen}{\bf67.43} & 59.05 & \textcolor{darkgreen}{\bf69.03} & 75.60 & 68.98 \\
 &  & 0.50 & \uline{87.68} & 79.22 & 93.01 & \textcolor{darkgreen}{\bf88.66} & 83.26 & \textcolor{darkgreen}{\bf89.55} & 92.49 & 89.54 \\
 &  & 0.60 & \uline{97.40} & 93.56 & 98.62 & \textcolor{darkgreen}{\bf97.59} & 95.95 & \textcolor{darkgreen}{\bf97.88} & 98.52 & \textcolor{darkgreen}{\bf97.88} \\
 &  & 0.70 & \uline{99.72} & 98.91 & 99.86 & \textcolor{darkgreen}{\bf99.74} & 99.53 & \textcolor{darkgreen}{\bf99.76} & 99.85 & \textcolor{darkgreen}{\bf99.76} \\
\cline{2-11}
 & \multirow[t]{6}{*}{0.30} & 0.30 & 4.51 & 2.57 & \textcolor{red}{\bf 8.44} & 4.96 & 2.29 & 4.48 & \textcolor{red}{\bf 6.42} & 4.48 \\
 &  & 0.45 & \uline{17.86} & 9.30 & 23.91 & \uuline{\textcolor{darkgreen}{\bf18.59}} & 13.46 & \textcolor{darkgreen}{\bf18.20} & 22.15 & \textcolor{darkgreen}{\bf18.20} \\
 &  & 0.60 & \uuline{55.40} & 38.09 & 62.84 & \uuline{\textcolor{darkgreen}{\bf56.21}} & 50.59 & \textcolor{darkgreen}{\bf53.50} & 62.36 & \textcolor{darkgreen}{\bf53.50} \\
 &  & 0.70 & \uuline{81.10} & 66.64 & 85.85 & \uuline{\textcolor{darkgreen}{\bf81.42}} & 78.22 & \textcolor{darkgreen}{\bf79.30} & 85.96 & \textcolor{darkgreen}{\bf79.30} \\
 &  & 0.80 & \uuline{95.79} & 89.13 & 97.10 & \uuline{\textcolor{darkgreen}{\bf95.84}} & 94.61 & \textcolor{darkgreen}{\bf95.39} & 97.25 & \textcolor{darkgreen}{\bf95.39} \\
 &  & 0.90 & \uline{99.73} & 98.70 & 99.83 & \textcolor{darkgreen}{\bf99.74} & 99.53 & \textcolor{darkgreen}{\bf99.76} & 99.85 & \textcolor{darkgreen}{\bf99.76} \\
\cline{2-11}
 & \multirow[t]{6}{*}{0.50} & 0.50 & 4.66 & 1.52 & \textcolor{red}{\bf 6.88} & 4.80 & 3.28 & 3.73 & \textcolor{red}{\bf 6.51} & 3.73 \\
 &  & 0.65 & \uuline{17.74} & 7.95 & 22.06 & \uuline{\textcolor{darkgreen}{\bf17.94}} & 13.92 & \textcolor{darkgreen}{\bf16.56} & 21.82 & \textcolor{darkgreen}{\bf16.56} \\
 &  & 0.70 & \uuline{28.90} & 14.59 & 34.07 & \uuline{\textcolor{darkgreen}{\bf29.20}} & 23.25 & \textcolor{darkgreen}{\bf28.18} & 33.97 & \textcolor{darkgreen}{\bf28.18} \\
 &  & 0.80 & \uline{59.92} & 38.98 & 65.15 & \textcolor{darkgreen}{\bf60.50} & 51.04 & \textcolor{darkgreen}{\bf60.64} & 65.60 & \textcolor{darkgreen}{\bf60.64} \\
 &  & 0.90 & \uline{89.41} & 75.17 & 91.84 & \uuline{\textcolor{darkgreen}{\bf89.76}} & 83.26 & \textcolor{darkgreen}{\bf89.55} & 92.49 & 89.54 \\
 &  & 0.95 & \uuline{97.38} & 90.94 & 98.15 & \uuline{\textcolor{darkgreen}{\bf97.51}} & 95.15 & \textcolor{darkgreen}{\bf97.32} & 98.35 & 97.31 \\
\cline{2-11}
 & \multirow[t]{5}{*}{0.70} & 0.70 & 4.64 & 1.22 & \textcolor{red}{\bf 6.36} & 4.84 & 2.29 & 4.48 & \textcolor{red}{\bf 6.42} & 4.48 \\
 &  & 0.80 & \uuline{11.83} & 4.21 & 14.97 & \uuline{\textcolor{darkgreen}{\bf12.26}} & 6.87 & \textcolor{darkgreen}{\bf11.17} & 15.26 & 11.15 \\
 &  & 0.90 & \uuline{41.07} & 21.03 & 46.60 & \uuline{\textcolor{darkgreen}{\bf42.05}} & 30.49 & \textcolor{darkgreen}{\bf39.05} & 46.29 & 38.64 \\
 &  & 0.95 & \uuline{67.13} & 42.28 & 71.52 & \uuline{\textcolor{darkgreen}{\bf68.02}} & 54.92 & \textcolor{darkgreen}{\bf64.19} & 71.05 & 62.60 \\
 &  & 0.99 & \uuline{90.70} & 69.92 & 91.51 & \uuline{\textcolor{darkgreen}{\bf91.01}} & 76.47 & \textcolor{darkgreen}{\bf89.35} & 92.26 & 84.89 \\
\cline{2-11}
 & \multirow[t]{3}{*}{0.90} & 0.90 & 4.46 & 0.71 & \textcolor{red}{\bf 5.54} & 4.61 & 0.88 & 3.90 & 4.87 & 1.91 \\
 &  & 0.95 & \uuline{8.33} & 1.18 & 8.68 & \uuline{\textcolor{darkgreen}{\bf8.50}} & 1.30 & \textcolor{darkgreen}{\bf7.21} & 8.10 & 3.17 \\
 &  & 0.99 & \uuline{21.32} & 2.92 & 19.40 & \uuline{\textcolor{darkgreen}{\bf21.63}} & 2.79 & \textcolor{darkgreen}{\bf18.58} & 19.05 & 7.81 \\
\cline{1-11} \cline{2-11}
\multirow[t]{32}{*}{250} & \multirow[t]{6}{*}{0.01} & 0.01 & 4.13 & 0.10 & \textcolor{red}{\bf 6.40} & 4.13 & 0.10 & 2.15 & 2.15 & 0.50 \\
 &  & 0.05 & \uline{34.37} & \textcolor{darkgreen}{\bf36.03} & 59.61 & 34.62 & 29.83 & \textcolor{darkgreen}{\bf45.48} & 45.49 & 36.25 \\
 &  & 0.10 & \uline{88.99} & 83.53 & 95.22 & \textcolor{darkgreen}{\bf89.25} & 87.46 & \textcolor{darkgreen}{\bf92.97} & 93.05 & 90.49 \\
 &  & 0.13 & \uline{97.97} & 95.53 & 99.22 & \textcolor{darkgreen}{\bf98.01} & 97.42 & \textcolor{darkgreen}{\bf98.83} & 98.88 & 98.25 \\
 &  & 0.17 & \uline{\textcolor{darkgreen}{\bf99.88}} & 99.53 & 99.96 & \textcolor{darkgreen}{\bf99.88} & 99.82 & \textcolor{darkgreen}{\bf99.93} & 99.94 & 99.88 \\
 &  & 0.20 & \uline{\textcolor{darkgreen}{\bf99.99}} & 99.94 & 100.00 & \textcolor{darkgreen}{\bf99.99} & 99.98 & \textcolor{darkgreen}{\bf99.99} & 100.00 & \textcolor{darkgreen}{\bf99.99} \\
\cline{2-11}
 & \multirow[t]{6}{*}{0.10} & 0.10 & 4.90 & 1.95 & \textcolor{red}{\bf 7.34} & 4.94 & 2.93 & 4.61 & \textcolor{red}{\bf 5.43} & 3.92 \\
 &  & 0.20 & \uline{58.80} & 36.03 & 63.11 & \uline{\textcolor{darkgreen}{\bf58.89}} & 53.82 & \textcolor{darkgreen}{\bf60.13} & 61.93 & 58.40 \\
 &  & 0.25 & \uline{88.08} & 71.29 & 89.87 & \uline{\textcolor{darkgreen}{\bf88.12}} & 85.48 & \textcolor{darkgreen}{\bf88.73} & 89.46 & 88.00 \\
 &  & 0.30 & \uline{98.13} & 92.50 & 98.49 & \uline{\textcolor{darkgreen}{\bf98.14}} & 97.50 & \textcolor{darkgreen}{\bf98.26} & 98.40 & 98.12 \\
 &  & 0.35 & \uline{\textcolor{darkgreen}{\bf99.84}} & 98.90 & 99.88 & \textcolor{darkgreen}{\bf99.84} & 99.76 & \textcolor{darkgreen}{\bf99.85} & 99.87 & 99.84 \\
 &  & 0.40 & \textcolor{darkgreen}{\bf99.99} & 99.91 & 99.99 & \textcolor{darkgreen}{\bf99.99} & \textcolor{darkgreen}{\bf99.99} & \textcolor{darkgreen}{\bf99.99} & 99.99 & \textcolor{darkgreen}{\bf99.99} \\
\cline{2-11}
 & \multirow[t]{6}{*}{0.30} & 0.30 & 4.91 & 0.88 & \textcolor{red}{\bf 5.46} & 4.94 & 3.62 & 4.94 & \textcolor{red}{\bf 5.15} & 4.80 \\
 &  & 0.40 & \uline{37.64} & 16.20 & 39.24 & \uline{\textcolor{darkgreen}{\bf37.76}} & 33.84 & \textcolor{darkgreen}{\bf37.78} & 39.29 & 37.74 \\
 &  & 0.50 & \uuline{89.98} & 72.41 & 90.63 & \uuline{\textcolor{darkgreen}{\bf90.03}} & 87.62 & \textcolor{darkgreen}{\bf89.70} & 90.42 & \textcolor{darkgreen}{\bf89.70} \\
 &  & 0.55 & \uuline{98.12} & 91.75 & 98.27 & \uuline{\textcolor{darkgreen}{\bf98.13}} & 97.43 & \textcolor{darkgreen}{\bf98.08} & 98.16 & \textcolor{darkgreen}{\bf98.08} \\
 &  & 0.60 & \uline{\textcolor{darkgreen}{\bf99.81}} & 98.58 & 99.82 & \textcolor{darkgreen}{\bf99.81} & 99.72 & \textcolor{darkgreen}{\bf99.81} & 99.81 & \textcolor{darkgreen}{\bf99.81} \\
 &  & 0.65 & \uline{\textcolor{darkgreen}{\bf99.99}} & 99.87 & 99.99 & \textcolor{darkgreen}{\bf99.99} & 99.98 & \textcolor{darkgreen}{\bf99.99} & 99.99 & \textcolor{darkgreen}{\bf99.99} \\
\cline{2-11}
 & \multirow[t]{6}{*}{0.50} & 0.50 & 4.94 & 0.78 & \textcolor{red}{\bf 5.16} & 4.99 & 3.67 & 4.97 & 4.97 & 4.97 \\
 &  & 0.60 & \uuline{35.31} & 14.12 & 36.23 & \uuline{\textcolor{darkgreen}{\bf35.48}} & 30.59 & \textcolor{darkgreen}{\bf35.17} & 35.31 & \textcolor{darkgreen}{\bf35.17} \\
 &  & 0.70 & \uuline{90.05} & 72.08 & 90.52 & \uuline{\textcolor{darkgreen}{\bf90.10}} & 87.62 & \textcolor{darkgreen}{\bf89.70} & 90.42 & \textcolor{darkgreen}{\bf89.70} \\
 &  & 0.75 & \uuline{98.52} & 92.97 & 98.62 & \uuline{\textcolor{darkgreen}{\bf98.53}} & 98.06 & \textcolor{darkgreen}{\bf98.48} & 98.67 & \textcolor{darkgreen}{\bf98.48} \\
 &  & 0.80 & \uline{99.91} & 99.24 & 99.92 & \textcolor{darkgreen}{\bf99.92} & 99.88 & \textcolor{darkgreen}{\bf99.92} & 99.93 & \textcolor{darkgreen}{\bf99.92} \\
 &  & 0.85 & \textcolor{darkgreen}{\bf100.00} & 99.98 & 100.00 & \textcolor{darkgreen}{\bf100.00} & \textcolor{darkgreen}{\bf100.00} & \textcolor{darkgreen}{\bf100.00} & 100.00 & \textcolor{darkgreen}{\bf100.00} \\
\cline{2-11}
 & \multirow[t]{5}{*}{0.70} & 0.70 & 4.95 & 0.78 & \textcolor{red}{\bf 5.23} & 4.98 & 3.62 & 4.94 & \textcolor{red}{\bf 5.15} & 4.80 \\
 &  & 0.80 & \uline{44.46} & 20.01 & 45.52 & \uline{\textcolor{darkgreen}{\bf44.62}} & 39.24 & \textcolor{darkgreen}{\bf44.64} & 45.44 & 43.87 \\
 &  & 0.85 & \uline{81.65} & 57.64 & 82.33 & \uuline{\textcolor{darkgreen}{\bf81.74}} & 77.67 & \textcolor{darkgreen}{\bf81.69} & 82.38 & 81.02 \\
 &  & 0.90 & \uline{98.25} & 91.78 & 98.36 & \uuline{\textcolor{darkgreen}{\bf98.27}} & 97.50 & \textcolor{darkgreen}{\bf98.26} & 98.40 & 98.12 \\
 &  & 0.95 & \uline{\textcolor{darkgreen}{\bf99.99}} & 99.83 & 99.99 & \textcolor{darkgreen}{\bf99.99} & 99.98 & \textcolor{darkgreen}{\bf99.99} & 99.99 & \textcolor{darkgreen}{\bf99.99} \\
\cline{2-11}
 & \multirow[t]{3}{*}{0.90} & 0.90 & 4.92 & 0.73 & \textcolor{red}{\bf 5.17} & 4.96 & 2.93 & 4.61 & \textcolor{red}{\bf 5.43} & 3.92 \\
 &  & 0.95 & \uuline{31.96} & 11.14 & 32.77 & \uuline{\textcolor{darkgreen}{\bf32.05}} & 23.49 & \textcolor{darkgreen}{\bf31.02} & 33.80 & 27.26 \\
 &  & 0.99 & \uuline{93.34} & 74.43 & 93.54 & \uuline{\textcolor{darkgreen}{\bf93.40}} & 87.46 & \textcolor{darkgreen}{\bf92.97} & 93.05 & 90.49 \\
\cline{1-11} \cline{2-11}
\bottomrule
\end{tabular}

CX-S: conditional exact tests based on total successes, UX: unconditional exact test, Asymp.: asymptotic test, GB: generalized version of Boschloo's test
\end{table}

\begin{figure}
	
	\centering
	\includegraphics[width=.85\linewidth]{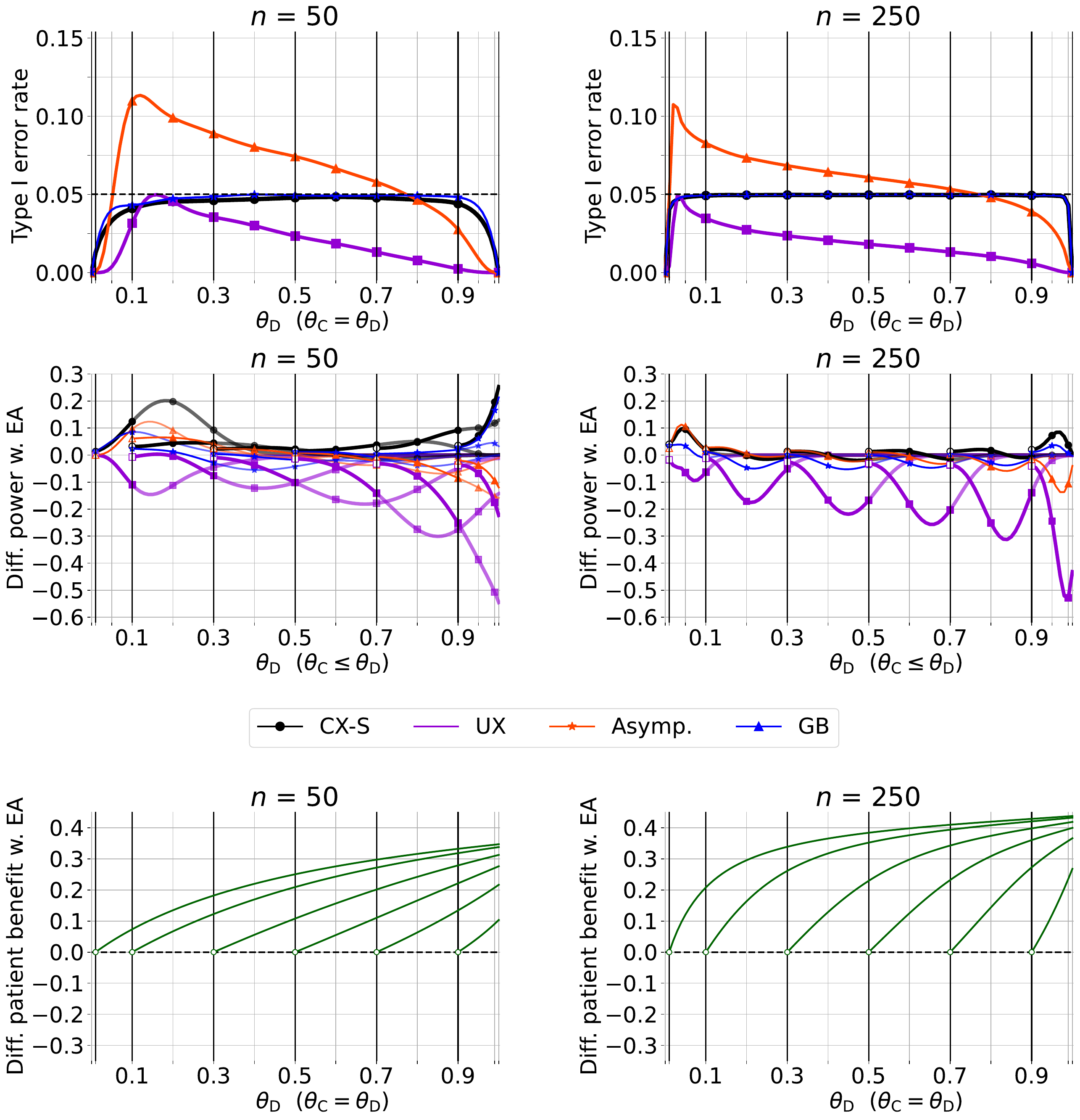}
	
	\caption{Bayesian response-adaptive design based on~\citet{THALL2007859}; type I error rate profiles~(top), power difference curves compared to the same test under equal allocation~(subtracted, middle), and patient benefit difference curves with respect to equal allocation~(subtracted, bottom) for trial sizes~$n\in\{50,250\}$. We consider the conditional exact, unconditional exact~(UX), asymptotic~(Asymp.), and generalized version of Boschloo's~(GB) Wald tests. For all tests, the significance level was set to~$0.05$.}
	\label{fig:TW_Bayesian RAR_plots}

\end{figure}

\begin{table}[]
    \centering
\caption{Rejection rates~$(\%)$ for the tuned Bayesian RAR procedure from~\citet{THALL2007859}~(burn-in equal to~6 participants per arm). The bold values indicate highest power or type I error inflation, depending on the parameter configuration considered. The underlined values indicate higher power than under that test for the equal allocation procedure. The twice underlined values indicate higher power than the test with highest power for the equal allocation procedure. Both upper and lower significance levels were set to~$2.5\%$.}\label{RR_BRAR}
\scriptsize \begin{tabular}{lllrrrrrrrr}
\toprule
 &  &  & \multicolumn{4}{c}{BRAR} & \multicolumn{4}{c}{Equal allocation} \\
 &  &  & CX-S & UX & Asymp. & GB & CX-S & UX & Asymp. & GB \\
$\Iend$ & $\theta_\text{C}$ & $\theta_{\text{D}}$ &  &  &  &  &  &  &  &  \\
\midrule
\multirow[t]{32}{*}{50} & \multirow[t]{6}{*}{0.01} & 0.01 & 1.28 & 0.00 & 0.04 & 1.48 & 0.00 & 0.02 & 0.02 & 0.00 \\
 &  & 0.10 & \uline{15.18} & 7.58 & 28.69 & \uline{\textcolor{darkgreen}{\bf16.32}} & 2.79 & \textcolor{darkgreen}{\bf18.58} & 19.05 & 7.81 \\
 &  & 0.20 & \uline{54.02} & 52.93 & 76.55 & \uline{\textcolor{darkgreen}{\bf54.44}} & 34.25 & \textcolor{darkgreen}{\bf64.17} & 67.38 & 49.64 \\
 &  & 0.30 & \uline{85.72} & 84.94 & 94.47 & \uline{\textcolor{darkgreen}{\bf86.08}} & 76.47 & \textcolor{darkgreen}{\bf89.35} & 92.26 & 84.89 \\
 &  & 0.40 & \uline{97.11} & 96.22 & 99.14 & \textcolor{darkgreen}{\bf97.29} & 95.50 & \textcolor{darkgreen}{\bf97.92} & 98.82 & 97.37 \\
 &  & 0.50 & \uline{99.64} & 99.41 & 99.92 & \textcolor{darkgreen}{\bf99.67} & 99.53 & \textcolor{darkgreen}{\bf99.78} & 99.89 & 99.75 \\
\cline{2-11}
 & \multirow[t]{6}{*}{0.10} & 0.10 & 4.10 & 3.15 & \textcolor{red}{\bf 10.99} & 4.31 & 0.88 & 3.90 & 4.87 & 1.91 \\
 &  & 0.25 & \uline{22.59} & 21.11 & 36.32 & \textcolor{darkgreen}{\bf23.21} & 18.04 & \textcolor{darkgreen}{\bf24.65} & 30.66 & 23.79 \\
 &  & 0.40 & \uline{62.57} & 56.83 & 75.82 & \textcolor{darkgreen}{\bf63.47} & 59.05 & \textcolor{darkgreen}{\bf69.03} & 75.60 & 68.98 \\
 &  & 0.50 & \uline{84.58} & 79.29 & 91.51 & \textcolor{darkgreen}{\bf85.30} & 83.26 & \textcolor{darkgreen}{\bf89.55} & 92.49 & 89.54 \\
 &  & 0.60 & 95.83 & 92.49 & 98.01 & \textcolor{darkgreen}{\bf96.16} & 95.95 & \textcolor{darkgreen}{\bf97.88} & 98.52 & \textcolor{darkgreen}{\bf97.88} \\
 &  & 0.70 & 99.35 & 98.07 & 99.71 & \textcolor{darkgreen}{\bf99.41} & 99.53 & \textcolor{darkgreen}{\bf99.76} & 99.85 & \textcolor{darkgreen}{\bf99.76} \\
\cline{2-11}
 & \multirow[t]{6}{*}{0.30} & 0.30 & 4.61 & 3.55 & \textcolor{red}{\bf 8.89} & 4.86 & 2.29 & 4.48 & \textcolor{red}{\bf 6.42} & 4.48 \\
 &  & 0.45 & \uline{16.14} & 11.53 & 23.61 & \textcolor{darkgreen}{\bf16.93} & 13.46 & \textcolor{darkgreen}{\bf18.20} & 22.15 & \textcolor{darkgreen}{\bf18.20} \\
 &  & 0.60 & \uline{50.65} & 37.08 & 59.75 & \textcolor{darkgreen}{\bf51.49} & 50.59 & \textcolor{darkgreen}{\bf53.50} & 62.36 & \textcolor{darkgreen}{\bf53.50} \\
 &  & 0.70 & 76.69 & 61.45 & 82.24 & \textcolor{darkgreen}{\bf77.01} & 78.22 & \textcolor{darkgreen}{\bf79.30} & 85.96 & \textcolor{darkgreen}{\bf79.30} \\
 &  & 0.80 & 93.49 & 82.65 & 95.08 & \textcolor{darkgreen}{\bf93.64} & 94.61 & \textcolor{darkgreen}{\bf95.39} & 97.25 & \textcolor{darkgreen}{\bf95.39} \\
 &  & 0.90 & 99.22 & 94.97 & 99.33 & \textcolor{darkgreen}{\bf99.29} & 99.53 & \textcolor{darkgreen}{\bf99.76} & 99.85 & \textcolor{darkgreen}{\bf99.76} \\
\cline{2-11}
 & \multirow[t]{6}{*}{0.50} & 0.50 & 4.79 & 2.34 & \textcolor{red}{\bf 7.42} & 4.95 & 3.28 & 3.73 & \textcolor{red}{\bf 6.51} & 3.73 \\
 &  & 0.65 & \uuline{16.71} & 8.34 & 20.87 & \uuline{\textcolor{darkgreen}{\bf16.89}} & 13.92 & \textcolor{darkgreen}{\bf16.56} & 21.82 & \textcolor{darkgreen}{\bf16.56} \\
 &  & 0.70 & \uline{26.96} & 14.06 & 31.56 & \textcolor{darkgreen}{\bf27.21} & 23.25 & \textcolor{darkgreen}{\bf28.18} & 33.97 & \textcolor{darkgreen}{\bf28.18} \\
 &  & 0.80 & \uline{56.08} & 33.16 & 59.77 & \textcolor{darkgreen}{\bf56.76} & 51.04 & \textcolor{darkgreen}{\bf60.64} & 65.60 & \textcolor{darkgreen}{\bf60.64} \\
 &  & 0.90 & \uline{85.92} & 62.02 & 86.26 & \textcolor{darkgreen}{\bf86.56} & 83.26 & \textcolor{darkgreen}{\bf89.55} & 92.49 & 89.54 \\
 &  & 0.95 & \uline{95.64} & 76.36 & 94.63 & \textcolor{darkgreen}{\bf95.92} & 95.15 & \textcolor{darkgreen}{\bf97.32} & 98.35 & 97.31 \\
\cline{2-11}
 & \multirow[t]{5}{*}{0.70} & 0.70 & 4.78 & 1.31 & \textcolor{red}{\bf 5.79} & 4.93 & 2.29 & 4.48 & \textcolor{red}{\bf 6.42} & 4.48 \\
 &  & 0.80 & \uuline{11.62} & 3.36 & 12.45 & \uuline{\textcolor{darkgreen}{\bf12.06}} & 6.87 & \textcolor{darkgreen}{\bf11.17} & 15.26 & 11.15 \\
 &  & 0.90 & \uuline{39.64} & 13.86 & 37.77 & \uuline{\textcolor{darkgreen}{\bf40.63}} & 30.49 & \textcolor{darkgreen}{\bf39.05} & 46.29 & 38.64 \\
 &  & 0.95 & \uuline{64.96} & 25.50 & 58.98 & \uuline{\textcolor{darkgreen}{\bf66.04}} & 54.92 & \textcolor{darkgreen}{\bf64.19} & 71.05 & 62.60 \\
 &  & 0.99 & \uline{88.28} & 38.64 & 76.93 & \uline{\textcolor{darkgreen}{\bf89.20}} & 76.47 & \textcolor{darkgreen}{\bf89.35} & 92.26 & 84.89 \\
\cline{2-11}
 & \multirow[t]{3}{*}{0.90} & 0.90 & 4.41 & 0.24 & 2.75 & 4.82 & 0.88 & 3.90 & 4.87 & 1.91 \\
 &  & 0.95 & \uuline{8.46} & 0.42 & 4.32 & \uuline{\textcolor{darkgreen}{\bf9.29}} & 1.30 & \textcolor{darkgreen}{\bf7.21} & 8.10 & 3.17 \\
 &  & 0.99 & \uuline{22.37} & 1.05 & 9.59 & \uuline{\textcolor{darkgreen}{\bf24.38}} & 2.79 & \textcolor{darkgreen}{\bf18.58} & 19.05 & 7.81 \\
\cline{1-11} \cline{2-11}
\multirow[t]{32}{*}{250} & \multirow[t]{6}{*}{0.01} & 0.01 & 4.06 & 0.40 & 4.69 & 4.06 & 0.10 & 2.15 & 2.15 & 0.50 \\
 &  & 0.05 & \uline{39.37} & 39.01 & 56.04 & \uline{\textcolor{darkgreen}{\bf39.57}} & 29.83 & \textcolor{darkgreen}{\bf45.48} & 45.49 & 36.25 \\
 &  & 0.10 & \uline{89.41} & 86.62 & 94.75 & \textcolor{darkgreen}{\bf89.52} & 87.46 & \textcolor{darkgreen}{\bf92.97} & 93.05 & 90.49 \\
 &  & 0.13 & \uline{97.78} & 96.70 & 99.12 & \textcolor{darkgreen}{\bf97.81} & 97.42 & \textcolor{darkgreen}{\bf98.83} & 98.88 & 98.25 \\
 &  & 0.17 & \uline{\textcolor{darkgreen}{\bf99.83}} & 99.68 & 99.95 & \textcolor{darkgreen}{\bf99.83} & 99.82 & \textcolor{darkgreen}{\bf99.93} & 99.94 & 99.88 \\
 &  & 0.20 & \textcolor{darkgreen}{\bf99.98} & 99.96 & 100.00 & \textcolor{darkgreen}{\bf99.98} & 99.98 & \textcolor{darkgreen}{\bf99.99} & 100.00 & \textcolor{darkgreen}{\bf99.99} \\
\cline{2-11}
 & \multirow[t]{6}{*}{0.10} & 0.10 & 4.92 & 3.46 & \textcolor{red}{\bf 8.27} & 4.95 & 2.93 & 4.61 & \textcolor{red}{\bf 5.43} & 3.92 \\
 &  & 0.20 & 53.64 & 43.00 & 62.44 & \textcolor{darkgreen}{\bf53.73} & 53.82 & \textcolor{darkgreen}{\bf60.13} & 61.93 & 58.40 \\
 &  & 0.25 & 83.93 & 75.07 & 88.75 & \textcolor{darkgreen}{\bf84.00} & 85.48 & \textcolor{darkgreen}{\bf88.73} & 89.46 & 88.00 \\
 &  & 0.30 & 96.73 & 93.11 & 98.03 & \textcolor{darkgreen}{\bf96.75} & 97.50 & \textcolor{darkgreen}{\bf98.26} & 98.40 & 98.12 \\
 &  & 0.35 & \textcolor{darkgreen}{\bf99.61} & 98.80 & 99.80 & \textcolor{darkgreen}{\bf99.61} & 99.76 & \textcolor{darkgreen}{\bf99.85} & 99.87 & 99.84 \\
 &  & 0.40 & \textcolor{darkgreen}{\bf99.97} & 99.86 & 99.99 & \textcolor{darkgreen}{\bf99.97} & \textcolor{darkgreen}{\bf99.99} & \textcolor{darkgreen}{\bf99.99} & 99.99 & \textcolor{darkgreen}{\bf99.99} \\
\cline{2-11}
 & \multirow[t]{6}{*}{0.30} & 0.30 & 4.96 & 2.35 & \textcolor{red}{\bf 6.84} & 4.98 & 3.62 & 4.94 & \textcolor{red}{\bf 5.15} & 4.80 \\
 &  & 0.40 & 33.69 & 21.12 & 38.83 & \textcolor{darkgreen}{\bf33.76} & 33.84 & \textcolor{darkgreen}{\bf37.78} & 39.29 & 37.74 \\
 &  & 0.50 & 85.86 & 73.00 & 88.55 & \textcolor{darkgreen}{\bf85.92} & 87.62 & \textcolor{darkgreen}{\bf89.70} & 90.42 & \textcolor{darkgreen}{\bf89.70} \\
 &  & 0.55 & 96.54 & 90.53 & 97.39 & \textcolor{darkgreen}{\bf96.57} & 97.43 & \textcolor{darkgreen}{\bf98.08} & 98.16 & \textcolor{darkgreen}{\bf98.08} \\
 &  & 0.60 & \textcolor{darkgreen}{\bf99.48} & 97.78 & 99.63 & \textcolor{darkgreen}{\bf99.48} & 99.72 & \textcolor{darkgreen}{\bf99.81} & 99.81 & \textcolor{darkgreen}{\bf99.81} \\
 &  & 0.65 & \textcolor{darkgreen}{\bf99.95} & 99.65 & 99.97 & \textcolor{darkgreen}{\bf99.95} & 99.98 & \textcolor{darkgreen}{\bf99.99} & 99.99 & \textcolor{darkgreen}{\bf99.99} \\
\cline{2-11}
 & \multirow[t]{6}{*}{0.50} & 0.50 & 4.96 & 1.81 & \textcolor{red}{\bf 6.08} & 4.98 & 3.67 & 4.97 & 4.97 & 4.97 \\
 &  & 0.60 & \uline{31.91} & 17.02 & 34.76 & \textcolor{darkgreen}{\bf32.00} & 30.59 & \textcolor{darkgreen}{\bf35.17} & 35.31 & \textcolor{darkgreen}{\bf35.17} \\
 &  & 0.70 & 86.12 & 69.38 & 87.43 & \textcolor{darkgreen}{\bf86.17} & 87.62 & \textcolor{darkgreen}{\bf89.70} & 90.42 & \textcolor{darkgreen}{\bf89.70} \\
 &  & 0.75 & 97.16 & 89.65 & 97.47 & \textcolor{darkgreen}{\bf97.17} & 98.06 & \textcolor{darkgreen}{\bf98.48} & 98.67 & \textcolor{darkgreen}{\bf98.48} \\
 &  & 0.80 & 99.71 & 97.93 & 99.74 & \textcolor{darkgreen}{\bf99.72} & 99.88 & \textcolor{darkgreen}{\bf99.92} & 99.93 & \textcolor{darkgreen}{\bf99.92} \\
 &  & 0.85 & \textcolor{darkgreen}{\bf99.99} & 99.77 & 99.99 & \textcolor{darkgreen}{\bf99.99} & \textcolor{darkgreen}{\bf100.00} & \textcolor{darkgreen}{\bf100.00} & 100.00 & \textcolor{darkgreen}{\bf100.00} \\
\cline{2-11}
 & \multirow[t]{5}{*}{0.70} & 0.70 & 4.96 & 1.32 & \textcolor{red}{\bf 5.34} & 4.98 & 3.62 & 4.94 & \textcolor{red}{\bf 5.15} & 4.80 \\
 &  & 0.80 & \uline{40.93} & 19.54 & 41.15 & \textcolor{darkgreen}{\bf41.04} & 39.24 & \textcolor{darkgreen}{\bf44.64} & 45.44 & 43.87 \\
 &  & 0.85 & 77.25 & 51.43 & 76.70 & \textcolor{darkgreen}{\bf77.32} & 77.67 & \textcolor{darkgreen}{\bf81.69} & 82.38 & 81.02 \\
 &  & 0.90 & 96.76 & 84.27 & 96.35 & \textcolor{darkgreen}{\bf96.78} & 97.50 & \textcolor{darkgreen}{\bf98.26} & 98.40 & 98.12 \\
 &  & 0.95 & \textcolor{darkgreen}{\bf99.92} & 97.95 & 99.87 & \textcolor{darkgreen}{\bf99.92} & 99.98 & \textcolor{darkgreen}{\bf99.99} & 99.99 & \textcolor{darkgreen}{\bf99.99} \\
\cline{2-11}
 & \multirow[t]{3}{*}{0.90} & 0.90 & 4.93 & 0.58 & 3.89 & 4.96 & 2.93 & 4.61 & \textcolor{red}{\bf 5.43} & 3.92 \\
 &  & 0.95 & \uline{30.72} & 6.59 & 24.94 & \uline{\textcolor{darkgreen}{\bf30.73}} & 23.49 & \textcolor{darkgreen}{\bf31.02} & 33.80 & 27.26 \\
 &  & 0.99 & \uline{91.11} & 40.18 & 82.44 & \uline{\textcolor{darkgreen}{\bf91.16}} & 87.46 & \textcolor{darkgreen}{\bf92.97} & 93.05 & 90.49 \\
\cline{1-11} \cline{2-11}
\bottomrule
\end{tabular}

CX-S: conditional exact tests based on total successes, UX: unconditional exact test, Asymp.: asymptotic test, GB: generalized version of Boschloo's test
\end{table}

\begin{table}[]
    \centering
\caption{Rejection rates~$(\%)$ for the constrained Markov decision process procedure targeting highest power~(CMDP-P, burn-in equal to~6 participants per arm). The bold values indicate highest power or type I error inflation, depending on the parameter configuration considered. The underlined values indicate higher power than under that test for the equal allocation procedure. The twice underlined values indicate higher power than the test with highest power for the equal allocation procedure. Both upper and lower significance levels were set to~$2.5\%$.}
\label{tab:RR_CMDPP}
\scriptsize \begin{tabular}{lllrrrrrrrr}
\toprule
 &  &  & \multicolumn{4}{c}{CMDP-P} & \multicolumn{4}{c}{Equal allocation} \\
 &  &  & CX-S & UX & Asymp. & GB & CX-S & UX & Asymp. & GB \\
$\Iend$ & $\theta_\text{C}$ & $\theta_{\text{D}}$ &  &  &  &  &  &  &  &  \\
\midrule
\multirow[t]{32}{*}{50} & \multirow[t]{6}{*}{0.01} & 0.01 & 0.34 & 0.03 & 0.03 & 0.34 & 0.00 & 0.02 & 0.02 & 0.00 \\
 &  & 0.10 & \uline{15.56} & \uuline{\textcolor{darkgreen}{\bf22.57}} & 22.90 & \uline{17.25} & 2.79 & \textcolor{darkgreen}{\bf18.58} & 19.05 & 7.81 \\
 &  & 0.20 & \uline{58.02} & \uuline{\textcolor{darkgreen}{\bf70.14}} & 71.42 & \uline{62.09} & 34.25 & \textcolor{darkgreen}{\bf64.17} & 67.38 & 49.64 \\
 &  & 0.30 & \uline{87.45} & \uuline{\textcolor{darkgreen}{\bf91.98}} & 93.29 & \uuline{90.13} & 76.47 & \textcolor{darkgreen}{\bf89.35} & 92.26 & 84.89 \\
 &  & 0.40 & \uline{97.38} & \uuline{98.25} & 98.90 & \uuline{\textcolor{darkgreen}{\bf98.28}} & 95.50 & \textcolor{darkgreen}{\bf97.92} & 98.82 & 97.37 \\
 &  & 0.50 & 99.47 & \textcolor{darkgreen}{\bf99.70} & 99.88 & 99.64 & 99.53 & \textcolor{darkgreen}{\bf99.78} & 99.89 & 99.75 \\
\cline{2-11}
 & \multirow[t]{6}{*}{0.10} & 0.10 & 3.64 & 4.68 & 4.96 & 4.36 & 0.88 & 3.90 & 4.87 & 1.91 \\
 &  & 0.25 & \uuline{25.26} & \uuline{\textcolor{darkgreen}{\bf28.78}} & 29.45 & \uuline{28.66} & 18.04 & \textcolor{darkgreen}{\bf24.65} & 30.66 & 23.79 \\
 &  & 0.40 & \uline{68.24} & \uuline{\textcolor{darkgreen}{\bf72.65}} & 73.04 & \uuline{71.52} & 59.05 & \textcolor{darkgreen}{\bf69.03} & 75.60 & 68.98 \\
 &  & 0.50 & \uline{88.73} & \uuline{\textcolor{darkgreen}{\bf90.88}} & 91.02 & \uuline{90.06} & 83.26 & \textcolor{darkgreen}{\bf89.55} & 92.49 & 89.54 \\
 &  & 0.60 & \uline{97.53} & \uuline{\textcolor{darkgreen}{\bf98.09}} & 98.12 & 97.80 & 95.95 & \textcolor{darkgreen}{\bf97.88} & 98.52 & \textcolor{darkgreen}{\bf97.88} \\
 &  & 0.70 & \uline{99.70} & \uuline{\textcolor{darkgreen}{\bf99.79}} & 99.79 & 99.73 & 99.53 & \textcolor{darkgreen}{\bf99.76} & 99.85 & \textcolor{darkgreen}{\bf99.76} \\
\cline{2-11}
 & \multirow[t]{6}{*}{0.30} & 0.30 & 4.34 & 4.94 & 4.97 & 4.98 & 2.29 & 4.48 & \textcolor{red}{\bf 6.42} & 4.48 \\
 &  & 0.45 & \uline{17.66} & \uuline{\textcolor{darkgreen}{\bf19.28}} & 19.32 & \uuline{18.59} & 13.46 & \textcolor{darkgreen}{\bf18.20} & 22.15 & \textcolor{darkgreen}{\bf18.20} \\
 &  & 0.60 & \uuline{53.78} & \uuline{\textcolor{darkgreen}{\bf57.13}} & 57.19 & \uuline{55.69} & 50.59 & \textcolor{darkgreen}{\bf53.50} & 62.36 & \textcolor{darkgreen}{\bf53.50} \\
 &  & 0.70 & \uuline{79.36} & \uuline{\textcolor{darkgreen}{\bf82.11}} & 82.15 & \uuline{81.12} & 78.22 & \textcolor{darkgreen}{\bf79.30} & 85.96 & \textcolor{darkgreen}{\bf79.30} \\
 &  & 0.80 & \uline{95.19} & \uuline{\textcolor{darkgreen}{\bf96.17}} & 96.18 & \uuline{95.81} & 94.61 & \textcolor{darkgreen}{\bf95.39} & 97.25 & \textcolor{darkgreen}{\bf95.39} \\
 &  & 0.90 & \uline{99.69} & \uuline{\textcolor{darkgreen}{\bf99.79}} & 99.79 & 99.74 & 99.53 & \textcolor{darkgreen}{\bf99.76} & 99.85 & \textcolor{darkgreen}{\bf99.76} \\
\cline{2-11}
 & \multirow[t]{6}{*}{0.50} & 0.50 & 3.80 & 4.92 & 4.93 & 4.56 & 3.28 & 3.73 & \textcolor{red}{\bf 6.51} & 3.73 \\
 &  & 0.65 & \uline{15.97} & \uuline{\textcolor{darkgreen}{\bf18.56}} & 18.60 & \uuline{17.70} & 13.92 & \textcolor{darkgreen}{\bf16.56} & 21.82 & \textcolor{darkgreen}{\bf16.56} \\
 &  & 0.70 & \uline{27.03} & \uuline{\textcolor{darkgreen}{\bf30.15}} & 30.20 & \uuline{29.07} & 23.25 & \textcolor{darkgreen}{\bf28.18} & 33.97 & \textcolor{darkgreen}{\bf28.18} \\
 &  & 0.80 & \uline{58.92} & \uuline{\textcolor{darkgreen}{\bf61.99}} & 62.07 & \uuline{60.88} & 51.04 & \textcolor{darkgreen}{\bf60.64} & 65.60 & \textcolor{darkgreen}{\bf60.64} \\
 &  & 0.90 & \uline{88.69} & \uuline{\textcolor{darkgreen}{\bf90.87}} & 91.01 & \uuline{90.06} & 83.26 & \textcolor{darkgreen}{\bf89.55} & 92.49 & 89.54 \\
 &  & 0.95 & \uline{96.62} & \uuline{\textcolor{darkgreen}{\bf97.80}} & 98.03 & \uuline{97.46} & 95.15 & \textcolor{darkgreen}{\bf97.32} & 98.35 & 97.31 \\
\cline{2-11}
 & \multirow[t]{5}{*}{0.70} & 0.70 & 4.30 & 4.94 & 4.97 & 4.98 & 2.29 & 4.48 & \textcolor{red}{\bf 6.42} & 4.48 \\
 &  & 0.80 & \uline{11.14} & \uuline{\textcolor{darkgreen}{\bf12.80}} & 12.90 & \uuline{12.62} & 6.87 & \textcolor{darkgreen}{\bf11.17} & 15.26 & 11.15 \\
 &  & 0.90 & \uuline{39.17} & \uuline{\textcolor{darkgreen}{\bf43.76}} & 44.38 & \uuline{43.24} & 30.49 & \textcolor{darkgreen}{\bf39.05} & 46.29 & 38.64 \\
 &  & 0.95 & \uuline{64.77} & \uuline{69.21} & 70.84 & \uuline{\textcolor{darkgreen}{\bf69.61}} & 54.92 & \textcolor{darkgreen}{\bf64.19} & 71.05 & 62.60 \\
 &  & 0.99 & \uline{87.44} & \uuline{\textcolor{darkgreen}{\bf92.01}} & 93.27 & \uuline{90.13} & 76.47 & \textcolor{darkgreen}{\bf89.35} & 92.26 & 84.89 \\
\cline{2-11}
 & \multirow[t]{3}{*}{0.90} & 0.90 & 3.63 & 4.67 & 4.95 & 4.36 & 0.88 & 3.90 & 4.87 & 1.91 \\
 &  & 0.95 & \uline{6.48} & \uuline{\textcolor{darkgreen}{\bf8.66}} & 9.08 & \uuline{7.50} & 1.30 & \textcolor{darkgreen}{\bf7.21} & 8.10 & 3.17 \\
 &  & 0.99 & \uline{15.57} & \uuline{\textcolor{darkgreen}{\bf22.57}} & 22.88 & \uline{17.26} & 2.79 & \textcolor{darkgreen}{\bf18.58} & 19.05 & 7.81 \\
\cline{1-11} \cline{2-11}
\multirow[t]{32}{*}{250} & \multirow[t]{6}{*}{0.01} & 0.01 & 1.32 & 0.67 & 0.67 & 1.32 & 0.10 & 2.15 & 2.15 & 0.50 \\
 &  & 0.05 & \uline{\textcolor{darkgreen}{\bf42.23}} & 35.87 & 35.92 & \uline{\textcolor{darkgreen}{\bf42.23}} & 29.83 & \textcolor{darkgreen}{\bf45.48} & 45.49 & 36.25 \\
 &  & 0.10 & \uline{\textcolor{darkgreen}{\bf92.64}} & 90.37 & 90.41 & \uline{\textcolor{darkgreen}{\bf92.64}} & 87.46 & \textcolor{darkgreen}{\bf92.97} & 93.05 & 90.49 \\
 &  & 0.13 & \uline{\textcolor{darkgreen}{\bf98.79}} & 98.27 & 98.29 & \uline{\textcolor{darkgreen}{\bf98.79}} & 97.42 & \textcolor{darkgreen}{\bf98.83} & 98.88 & 98.25 \\
 &  & 0.17 & \uline{\textcolor{darkgreen}{\bf99.93}} & 99.89 & 99.90 & \uline{\textcolor{darkgreen}{\bf99.93}} & 99.82 & \textcolor{darkgreen}{\bf99.93} & 99.94 & 99.88 \\
 &  & 0.20 & \uline{\textcolor{darkgreen}{\bf99.99}} & \textcolor{darkgreen}{\bf99.99} & 99.99 & \textcolor{darkgreen}{\bf99.99} & 99.98 & \textcolor{darkgreen}{\bf99.99} & 100.00 & \textcolor{darkgreen}{\bf99.99} \\
\cline{2-11}
 & \multirow[t]{6}{*}{0.10} & 0.10 & 4.87 & 4.21 & 4.24 & 4.91 & 2.93 & 4.61 & \textcolor{red}{\bf 5.43} & 3.92 \\
 &  & 0.20 & \uuline{61.29} & \uuline{61.20} & 61.43 & \uuline{\textcolor{darkgreen}{\bf61.35}} & 53.82 & \textcolor{darkgreen}{\bf60.13} & 61.93 & 58.40 \\
 &  & 0.25 & \uuline{89.33} & \uuline{\textcolor{darkgreen}{\bf89.38}} & 89.52 & \uuline{89.36} & 85.48 & \textcolor{darkgreen}{\bf88.73} & 89.46 & 88.00 \\
 &  & 0.30 & \uuline{98.38} & \uuline{98.38} & 98.42 & \uuline{\textcolor{darkgreen}{\bf98.39}} & 97.50 & \textcolor{darkgreen}{\bf98.26} & 98.40 & 98.12 \\
 &  & 0.35 & \uuline{99.86} & \uuline{99.86} & 99.87 & \uuline{\textcolor{darkgreen}{\bf99.87}} & 99.76 & \textcolor{darkgreen}{\bf99.85} & 99.87 & 99.84 \\
 &  & 0.40 & \textcolor{darkgreen}{\bf99.99} & \textcolor{darkgreen}{\bf99.99} & 99.99 & \textcolor{darkgreen}{\bf99.99} & \textcolor{darkgreen}{\bf99.99} & \textcolor{darkgreen}{\bf99.99} & 99.99 & \textcolor{darkgreen}{\bf99.99} \\
\cline{2-11}
 & \multirow[t]{6}{*}{0.30} & 0.30 & 4.96 & 4.93 & 4.99 & 4.97 & 3.62 & 4.94 & \textcolor{red}{\bf 5.15} & 4.80 \\
 &  & 0.40 & \uuline{38.02} & \uuline{\textcolor{darkgreen}{\bf38.07}} & 38.21 & \uuline{38.04} & 33.84 & \textcolor{darkgreen}{\bf37.78} & 39.29 & 37.74 \\
 &  & 0.50 & \uline{89.69} & \uuline{\textcolor{darkgreen}{\bf90.24}} & 90.28 & 89.69 & 87.62 & \textcolor{darkgreen}{\bf89.70} & 90.42 & \textcolor{darkgreen}{\bf89.70} \\
 &  & 0.55 & \uline{97.91} & \uuline{\textcolor{darkgreen}{\bf98.19}} & 98.19 & 97.91 & 97.43 & \textcolor{darkgreen}{\bf98.08} & 98.16 & \textcolor{darkgreen}{\bf98.08} \\
 &  & 0.60 & \uline{99.76} & \uuline{\textcolor{darkgreen}{\bf99.82}} & 99.82 & 99.76 & 99.72 & \textcolor{darkgreen}{\bf99.81} & 99.81 & \textcolor{darkgreen}{\bf99.81} \\
 &  & 0.65 & \uline{\textcolor{darkgreen}{\bf99.99}} & \textcolor{darkgreen}{\bf99.99} & 99.99 & \textcolor{darkgreen}{\bf99.99} & 99.98 & \textcolor{darkgreen}{\bf99.99} & 99.99 & \textcolor{darkgreen}{\bf99.99} \\
\cline{2-11}
 & \multirow[t]{6}{*}{0.50} & 0.50 & 4.08 & 4.99 & 4.99 & 4.08 & 3.67 & 4.97 & 4.97 & 4.97 \\
 &  & 0.60 & \uline{32.67} & \uuline{\textcolor{darkgreen}{\bf35.48}} & 35.51 & 32.67 & 30.59 & \textcolor{darkgreen}{\bf35.17} & 35.31 & \textcolor{darkgreen}{\bf35.17} \\
 &  & 0.70 & \uline{89.61} & \uuline{\textcolor{darkgreen}{\bf90.23}} & 90.27 & 89.61 & 87.62 & \textcolor{darkgreen}{\bf89.70} & 90.42 & \textcolor{darkgreen}{\bf89.70} \\
 &  & 0.75 & \uuline{98.52} & \uuline{\textcolor{darkgreen}{\bf98.57}} & 98.58 & \uuline{98.52} & 98.06 & \textcolor{darkgreen}{\bf98.48} & 98.67 & \textcolor{darkgreen}{\bf98.48} \\
 &  & 0.80 & \uline{\textcolor{darkgreen}{\bf99.92}} & \textcolor{darkgreen}{\bf99.92} & 99.92 & \textcolor{darkgreen}{\bf99.92} & 99.88 & \textcolor{darkgreen}{\bf99.92} & 99.93 & \textcolor{darkgreen}{\bf99.92} \\
 &  & 0.85 & \textcolor{darkgreen}{\bf100.00} & \textcolor{darkgreen}{\bf100.00} & 100.00 & \textcolor{darkgreen}{\bf100.00} & \textcolor{darkgreen}{\bf100.00} & \textcolor{darkgreen}{\bf100.00} & 100.00 & \textcolor{darkgreen}{\bf100.00} \\
\cline{2-11}
 & \multirow[t]{5}{*}{0.70} & 0.70 & 4.95 & 4.93 & 5.00 & 4.97 & 3.62 & 4.94 & \textcolor{red}{\bf 5.15} & 4.80 \\
 &  & 0.80 & \uuline{45.01} & \uuline{44.76} & 45.09 & \uuline{\textcolor{darkgreen}{\bf45.06}} & 39.24 & \textcolor{darkgreen}{\bf44.64} & 45.44 & 43.87 \\
 &  & 0.85 & \uuline{82.25} & \uuline{82.04} & 82.25 & \uuline{\textcolor{darkgreen}{\bf82.26}} & 77.67 & \textcolor{darkgreen}{\bf81.69} & 82.38 & 81.02 \\
 &  & 0.90 & \uuline{\textcolor{darkgreen}{\bf98.39}} & \uuline{98.38} & 98.41 & \uuline{\textcolor{darkgreen}{\bf98.39}} & 97.50 & \textcolor{darkgreen}{\bf98.26} & 98.40 & 98.12 \\
 &  & 0.95 & \uline{\textcolor{darkgreen}{\bf99.99}} & \textcolor{darkgreen}{\bf99.99} & 99.99 & \textcolor{darkgreen}{\bf99.99} & 99.98 & \textcolor{darkgreen}{\bf99.99} & 99.99 & \textcolor{darkgreen}{\bf99.99} \\
\cline{2-11}
 & \multirow[t]{3}{*}{0.90} & 0.90 & 4.90 & 4.21 & 4.24 & 4.90 & 2.93 & 4.61 & \textcolor{red}{\bf 5.43} & 3.92 \\
 &  & 0.95 & \uuline{\textcolor{darkgreen}{\bf31.65}} & 28.63 & 28.70 & \uuline{\textcolor{darkgreen}{\bf31.65}} & 23.49 & \textcolor{darkgreen}{\bf31.02} & 33.80 & 27.26 \\
 &  & 0.99 & \uline{\textcolor{darkgreen}{\bf92.64}} & 90.37 & 90.41 & \uline{\textcolor{darkgreen}{\bf92.64}} & 87.46 & \textcolor{darkgreen}{\bf92.97} & 93.05 & 90.49 \\
\cline{1-11} \cline{2-11}
\bottomrule
\end{tabular}

CX-S: conditional exact tests based on total successes, UX: unconditional exact test, Asymp.: asymptotic test, GB: generalized version of Boschloo's test
\end{table}

\begin{figure}[h!]
	
	\centering
	\includegraphics[width=.85\linewidth]{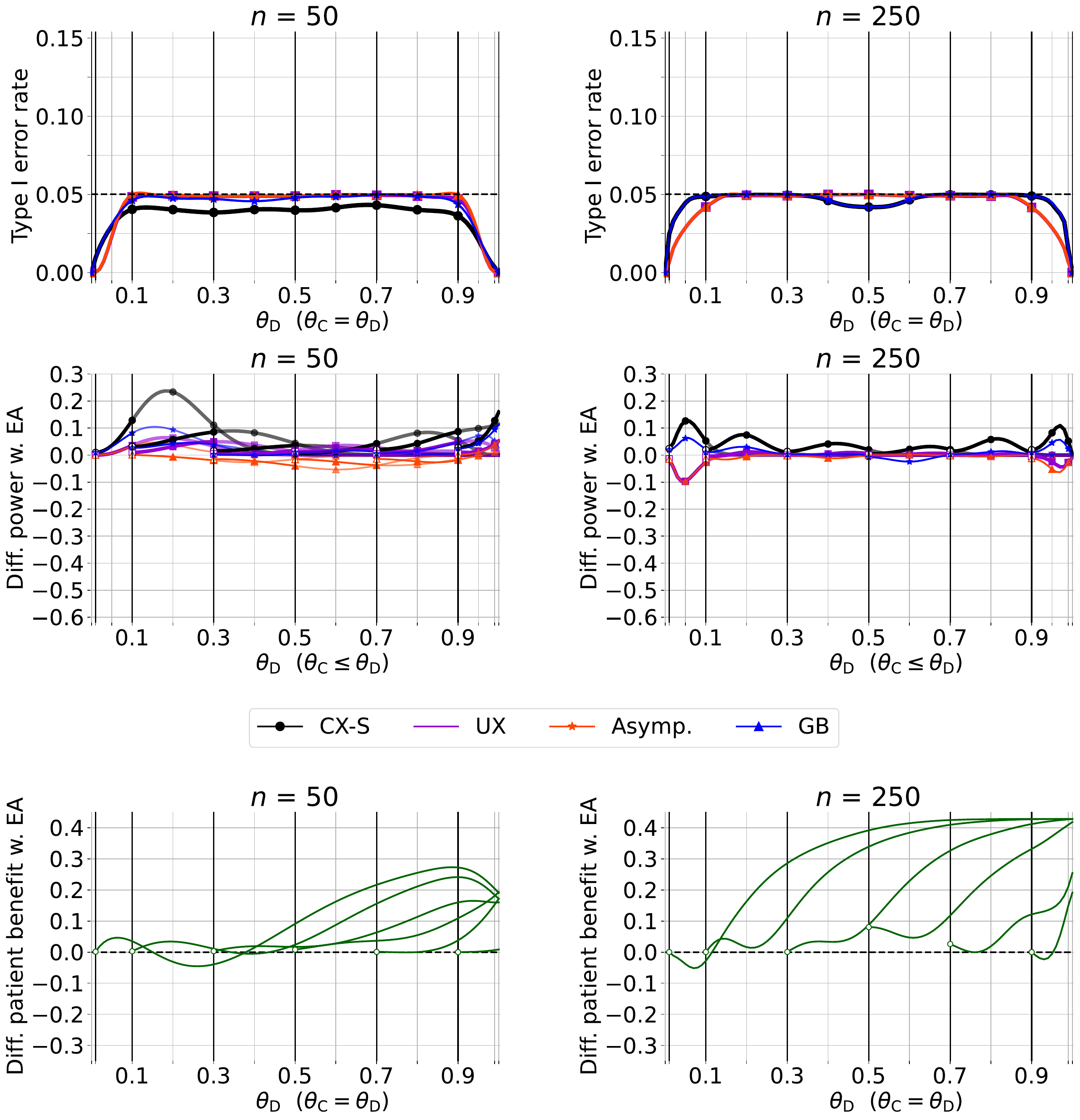}
	
	\caption{Constrained Markov decision process constraining patient benefit and targetting high power; type I error rate profiles~(top), power difference curves compared to the same test under equal allocation~(subtracted, middle), and patient benefit difference curves with respect to equal allocation~(subtracted, bottom) for trial sizes~$n\in\{50,250\}$. We consider the conditional exact, unconditional exact~(UX), asymptotic~(Asymp.), and generalized version of Boschloo's~(GB) Wald tests. For all tests, the significance level was set to~$0.05$.}
	\label{fig:CMDP-EP_plots}
\end{figure}

\begin{table}[h!]
    \centering
 \caption{Rejection rates~$(\%)$ for the constrained Markov decision process procedure protecting expected outcomes~(CMDP-BP, burn-in equal to~6 participants per arm). The bold values indicate highest power or type I error inflation, depending on the parameter configuration considered. The underlined values indicate higher power than under that test for the equal allocation procedure. The twice underlined values indicate higher power than the test with highest power for the equal allocation procedure. Both upper and lower significance levels were set to~$2.5\%$.}\label{tab:RR_CMDP_EP}
\scriptsize \begin{tabular}{lllrrrrrrrr}
\toprule
 &  &  & \multicolumn{4}{c}{CMDP-BP} & \multicolumn{4}{c}{Equal allocation} \\
 &  &  & CX-S & UX & Asymp. & GB & CX-S & UX & Asymp. & GB \\
$\Iend$ & $\theta_\text{C}$ & $\theta_{\text{D}}$ &  &  &  &  &  &  &  &  \\
\midrule
\multirow[t]{32}{*}{50} & \multirow[t]{6}{*}{0.01} & 0.01 & 0.95 & 0.02 & 0.02 & 0.95 & 0.00 & 0.02 & 0.02 & 0.00 \\
 &  & 0.10 & \uline{15.63} & \uuline{\textcolor{darkgreen}{\bf22.14}} & 22.32 & \uline{15.96} & 2.79 & \textcolor{darkgreen}{\bf18.58} & 19.05 & 7.81 \\
 &  & 0.20 & \uline{57.59} & \uuline{\textcolor{darkgreen}{\bf70.52}} & 71.23 & \uline{58.99} & 34.25 & \textcolor{darkgreen}{\bf64.17} & 67.38 & 49.64 \\
 &  & 0.30 & \uline{87.54} & \uuline{\textcolor{darkgreen}{\bf92.65}} & 93.37 & \uline{88.95} & 76.47 & \textcolor{darkgreen}{\bf89.35} & 92.26 & 84.89 \\
 &  & 0.40 & \uline{97.65} & \uuline{\textcolor{darkgreen}{\bf98.61}} & 98.97 & \uuline{98.28} & 95.50 & \textcolor{darkgreen}{\bf97.92} & 98.82 & 97.37 \\
 &  & 0.50 & \uline{99.70} & \uuline{99.80} & 99.90 & \uuline{\textcolor{darkgreen}{\bf99.84}} & 99.53 & \textcolor{darkgreen}{\bf99.78} & 99.89 & 99.75 \\
\cline{2-11}
 & \multirow[t]{6}{*}{0.10} & 0.10 & 4.04 & 4.84 & 4.99 & 4.61 & 0.88 & 3.90 & 4.87 & 1.91 \\
 &  & 0.25 & \uuline{25.39} & \uuline{\textcolor{darkgreen}{\bf29.04}} & 29.40 & \uuline{28.29} & 18.04 & \textcolor{darkgreen}{\bf24.65} & 30.66 & 23.79 \\
 &  & 0.40 & \uline{67.30} & \uuline{\textcolor{darkgreen}{\bf72.73}} & 72.94 & \uuline{70.73} & 59.05 & \textcolor{darkgreen}{\bf69.03} & 75.60 & 68.98 \\
 &  & 0.50 & \uline{87.61} & \uuline{\textcolor{darkgreen}{\bf90.87}} & 90.94 & \uuline{89.69} & 83.26 & \textcolor{darkgreen}{\bf89.55} & 92.49 & 89.54 \\
 &  & 0.60 & \uline{96.99} & \uuline{\textcolor{darkgreen}{\bf98.08}} & 98.09 & 97.73 & 95.95 & \textcolor{darkgreen}{\bf97.88} & 98.52 & \textcolor{darkgreen}{\bf97.88} \\
 &  & 0.70 & \uline{99.62} & \uuline{\textcolor{darkgreen}{\bf99.79}} & 99.79 & 99.74 & 99.53 & \textcolor{darkgreen}{\bf99.76} & 99.85 & \textcolor{darkgreen}{\bf99.76} \\
\cline{2-11}
 & \multirow[t]{6}{*}{0.30} & 0.30 & 3.84 & 4.88 & 4.90 & 4.71 & 2.29 & 4.48 & \textcolor{red}{\bf 6.42} & 4.48 \\
 &  & 0.45 & \uline{16.52} & \uuline{\textcolor{darkgreen}{\bf19.10}} & 19.13 & 18.20 & 13.46 & \textcolor{darkgreen}{\bf18.20} & 22.15 & \textcolor{darkgreen}{\bf18.20} \\
 &  & 0.60 & \uuline{53.52} & \uuline{\textcolor{darkgreen}{\bf56.97}} & 57.02 & \uuline{55.66} & 50.59 & \textcolor{darkgreen}{\bf53.50} & 62.36 & \textcolor{darkgreen}{\bf53.50} \\
 &  & 0.70 & \uuline{79.71} & \uuline{\textcolor{darkgreen}{\bf82.06}} & 82.09 & \uuline{81.30} & 78.22 & \textcolor{darkgreen}{\bf79.30} & 85.96 & \textcolor{darkgreen}{\bf79.30} \\
 &  & 0.80 & \uline{95.37} & \uuline{\textcolor{darkgreen}{\bf96.18}} & 96.19 & \uuline{95.89} & 94.61 & \textcolor{darkgreen}{\bf95.39} & 97.25 & \textcolor{darkgreen}{\bf95.39} \\
 &  & 0.90 & \uline{99.70} & \uuline{\textcolor{darkgreen}{\bf99.79}} & 99.79 & 99.74 & 99.53 & \textcolor{darkgreen}{\bf99.76} & 99.85 & \textcolor{darkgreen}{\bf99.76} \\
\cline{2-11}
 & \multirow[t]{6}{*}{0.50} & 0.50 & 3.99 & 4.89 & 4.91 & 4.79 & 3.28 & 3.73 & \textcolor{red}{\bf 6.51} & 3.73 \\
 &  & 0.65 & \uline{16.39} & \uuline{\textcolor{darkgreen}{\bf18.58}} & 18.62 & \uuline{18.15} & 13.92 & \textcolor{darkgreen}{\bf16.56} & 21.82 & \textcolor{darkgreen}{\bf16.56} \\
 &  & 0.70 & \uline{27.47} & \uuline{\textcolor{darkgreen}{\bf30.21}} & 30.26 & \uuline{29.48} & 23.25 & \textcolor{darkgreen}{\bf28.18} & 33.97 & \textcolor{darkgreen}{\bf28.18} \\
 &  & 0.80 & \uline{59.14} & \uuline{\textcolor{darkgreen}{\bf62.05}} & 62.13 & \uuline{61.04} & 51.04 & \textcolor{darkgreen}{\bf60.64} & 65.60 & \textcolor{darkgreen}{\bf60.64} \\
 &  & 0.90 & \uline{88.70} & \uuline{\textcolor{darkgreen}{\bf90.88}} & 91.02 & \uuline{90.16} & 83.26 & \textcolor{darkgreen}{\bf89.55} & 92.49 & 89.54 \\
 &  & 0.95 & \uline{96.62} & \uuline{\textcolor{darkgreen}{\bf97.79}} & 98.03 & \uuline{97.51} & 95.15 & \textcolor{darkgreen}{\bf97.32} & 98.35 & 97.31 \\
\cline{2-11}
 & \multirow[t]{5}{*}{0.70} & 0.70 & 4.31 & 4.95 & 4.98 & 4.98 & 2.29 & 4.48 & \textcolor{red}{\bf 6.42} & 4.48 \\
 &  & 0.80 & \uline{11.14} & \uuline{\textcolor{darkgreen}{\bf12.80}} & 12.90 & \uuline{12.66} & 6.87 & \textcolor{darkgreen}{\bf11.17} & 15.26 & 11.15 \\
 &  & 0.90 & \uuline{39.15} & \uuline{\textcolor{darkgreen}{\bf43.75}} & 44.40 & \uuline{43.39} & 30.49 & \textcolor{darkgreen}{\bf39.05} & 46.29 & 38.64 \\
 &  & 0.95 & \uuline{64.77} & \uuline{69.19} & 70.88 & \uuline{\textcolor{darkgreen}{\bf69.70}} & 54.92 & \textcolor{darkgreen}{\bf64.19} & 71.05 & 62.60 \\
 &  & 0.99 & \uline{87.44} & \uuline{\textcolor{darkgreen}{\bf92.00}} & 93.30 & \uuline{90.13} & 76.47 & \textcolor{darkgreen}{\bf89.35} & 92.26 & 84.89 \\
\cline{2-11}
 & \multirow[t]{3}{*}{0.90} & 0.90 & 3.63 & 4.69 & 4.97 & 4.36 & 0.88 & 3.90 & 4.87 & 1.91 \\
 &  & 0.95 & \uline{6.47} & \uuline{\textcolor{darkgreen}{\bf8.67}} & 9.11 & \uuline{7.49} & 1.30 & \textcolor{darkgreen}{\bf7.21} & 8.10 & 3.17 \\
 &  & 0.99 & \uline{15.52} & \uuline{\textcolor{darkgreen}{\bf22.57}} & 22.90 & \uline{17.21} & 2.79 & \textcolor{darkgreen}{\bf18.58} & 19.05 & 7.81 \\
\cline{1-11} \cline{2-11}
\multirow[t]{32}{*}{250} & \multirow[t]{6}{*}{0.01} & 0.01 & 2.46 & 0.66 & 0.66 & 2.46 & 0.10 & 2.15 & 2.15 & 0.50 \\
 &  & 0.05 & \uline{\textcolor{darkgreen}{\bf42.45}} & 35.80 & 35.82 & \uline{\textcolor{darkgreen}{\bf42.45}} & 29.83 & \textcolor{darkgreen}{\bf45.48} & 45.49 & 36.25 \\
 &  & 0.10 & \uline{\textcolor{darkgreen}{\bf92.72}} & 90.34 & 90.36 & \uline{\textcolor{darkgreen}{\bf92.72}} & 87.46 & \textcolor{darkgreen}{\bf92.97} & 93.05 & 90.49 \\
 &  & 0.13 & \uline{\textcolor{darkgreen}{\bf98.82}} & 98.28 & 98.28 & \uline{\textcolor{darkgreen}{\bf98.82}} & 97.42 & \textcolor{darkgreen}{\bf98.83} & 98.88 & 98.25 \\
 &  & 0.17 & \uline{\textcolor{darkgreen}{\bf99.93}} & 99.90 & 99.90 & \uline{\textcolor{darkgreen}{\bf99.93}} & 99.82 & \textcolor{darkgreen}{\bf99.93} & 99.94 & 99.88 \\
 &  & 0.20 & \uline{\textcolor{darkgreen}{\bf99.99}} & \textcolor{darkgreen}{\bf99.99} & 99.99 & \textcolor{darkgreen}{\bf99.99} & 99.98 & \textcolor{darkgreen}{\bf99.99} & 100.00 & \textcolor{darkgreen}{\bf99.99} \\
\cline{2-11}
 & \multirow[t]{6}{*}{0.10} & 0.10 & 4.87 & 4.19 & 4.20 & 4.92 & 2.93 & 4.61 & \textcolor{red}{\bf 5.43} & 3.92 \\
 &  & 0.20 & \uuline{61.23} & \uuline{61.27} & 61.38 & \uuline{\textcolor{darkgreen}{\bf61.33}} & 53.82 & \textcolor{darkgreen}{\bf60.13} & 61.93 & 58.40 \\
 &  & 0.25 & \uuline{89.31} & \uuline{\textcolor{darkgreen}{\bf89.42}} & 89.48 & \uuline{89.37} & 85.48 & \textcolor{darkgreen}{\bf88.73} & 89.46 & 88.00 \\
 &  & 0.30 & \uuline{98.38} & \uuline{\textcolor{darkgreen}{\bf98.39}} & 98.41 & \uuline{\textcolor{darkgreen}{\bf98.39}} & 97.50 & \textcolor{darkgreen}{\bf98.26} & 98.40 & 98.12 \\
 &  & 0.35 & \uuline{99.86} & \uuline{99.86} & 99.87 & \uuline{\textcolor{darkgreen}{\bf99.87}} & 99.76 & \textcolor{darkgreen}{\bf99.85} & 99.87 & 99.84 \\
 &  & 0.40 & \textcolor{darkgreen}{\bf99.99} & \textcolor{darkgreen}{\bf99.99} & 99.99 & \textcolor{darkgreen}{\bf99.99} & \textcolor{darkgreen}{\bf99.99} & \textcolor{darkgreen}{\bf99.99} & 99.99 & \textcolor{darkgreen}{\bf99.99} \\
\cline{2-11}
 & \multirow[t]{6}{*}{0.30} & 0.30 & 4.95 & 4.92 & 4.94 & 4.99 & 3.62 & 4.94 & \textcolor{red}{\bf 5.15} & 4.80 \\
 &  & 0.40 & \uuline{37.93} & \uuline{\textcolor{darkgreen}{\bf38.14}} & 38.19 & \uuline{38.06} & 33.84 & \textcolor{darkgreen}{\bf37.78} & 39.29 & 37.74 \\
 &  & 0.50 & \uline{89.58} & \uuline{\textcolor{darkgreen}{\bf90.27}} & 90.28 & 89.65 & 87.62 & \textcolor{darkgreen}{\bf89.70} & 90.42 & \textcolor{darkgreen}{\bf89.70} \\
 &  & 0.55 & \uline{97.89} & \uuline{\textcolor{darkgreen}{\bf98.19}} & 98.19 & 97.90 & 97.43 & \textcolor{darkgreen}{\bf98.08} & 98.16 & \textcolor{darkgreen}{\bf98.08} \\
 &  & 0.60 & \uline{99.76} & \uuline{\textcolor{darkgreen}{\bf99.82}} & 99.82 & 99.76 & 99.72 & \textcolor{darkgreen}{\bf99.81} & 99.81 & \textcolor{darkgreen}{\bf99.81} \\
 &  & 0.65 & \uline{\textcolor{darkgreen}{\bf99.99}} & \textcolor{darkgreen}{\bf99.99} & 99.99 & \textcolor{darkgreen}{\bf99.99} & 99.98 & \textcolor{darkgreen}{\bf99.99} & 99.99 & \textcolor{darkgreen}{\bf99.99} \\
\cline{2-11}
 & \multirow[t]{6}{*}{0.50} & 0.50 & 4.19 & 4.99 & 4.99 & 4.19 & 3.67 & 4.97 & 4.97 & 4.97 \\
 &  & 0.60 & \uline{32.72} & \uuline{\textcolor{darkgreen}{\bf35.45}} & 35.47 & 32.73 & 30.59 & \textcolor{darkgreen}{\bf35.17} & 35.31 & \textcolor{darkgreen}{\bf35.17} \\
 &  & 0.70 & \uline{89.64} & \uuline{\textcolor{darkgreen}{\bf90.17}} & 90.20 & 89.67 & 87.62 & \textcolor{darkgreen}{\bf89.70} & 90.42 & \textcolor{darkgreen}{\bf89.70} \\
 &  & 0.75 & \uuline{98.52} & \uuline{\textcolor{darkgreen}{\bf98.56}} & 98.57 & \uuline{98.53} & 98.06 & \textcolor{darkgreen}{\bf98.48} & 98.67 & \textcolor{darkgreen}{\bf98.48} \\
 &  & 0.80 & \uline{\textcolor{darkgreen}{\bf99.92}} & \textcolor{darkgreen}{\bf99.92} & 99.92 & \textcolor{darkgreen}{\bf99.92} & 99.88 & \textcolor{darkgreen}{\bf99.92} & 99.93 & \textcolor{darkgreen}{\bf99.92} \\
 &  & 0.85 & \textcolor{darkgreen}{\bf100.00} & \textcolor{darkgreen}{\bf100.00} & 100.00 & \textcolor{darkgreen}{\bf100.00} & \textcolor{darkgreen}{\bf100.00} & \textcolor{darkgreen}{\bf100.00} & 100.00 & \textcolor{darkgreen}{\bf100.00} \\
\cline{2-11}
 & \multirow[t]{5}{*}{0.70} & 0.70 & 4.97 & 4.90 & 4.95 & 4.99 & 3.62 & 4.94 & \textcolor{red}{\bf 5.15} & 4.80 \\
 &  & 0.80 & \uuline{44.99} & \uuline{44.70} & 44.97 & \uuline{\textcolor{darkgreen}{\bf45.05}} & 39.24 & \textcolor{darkgreen}{\bf44.64} & 45.44 & 43.87 \\
 &  & 0.85 & \uuline{82.21} & \uuline{82.05} & 82.19 & \uuline{\textcolor{darkgreen}{\bf82.25}} & 77.67 & \textcolor{darkgreen}{\bf81.69} & 82.38 & 81.02 \\
 &  & 0.90 & \uuline{98.39} & \uuline{98.38} & 98.40 & \uuline{\textcolor{darkgreen}{\bf98.40}} & 97.50 & \textcolor{darkgreen}{\bf98.26} & 98.40 & 98.12 \\
 &  & 0.95 & \uline{\textcolor{darkgreen}{\bf99.99}} & \textcolor{darkgreen}{\bf99.99} & 99.99 & \textcolor{darkgreen}{\bf99.99} & 99.98 & \textcolor{darkgreen}{\bf99.99} & 99.99 & \textcolor{darkgreen}{\bf99.99} \\
\cline{2-11}
 & \multirow[t]{3}{*}{0.90} & 0.90 & 4.89 & 4.17 & 4.19 & 4.90 & 2.93 & 4.61 & \textcolor{red}{\bf 5.43} & 3.92 \\
 &  & 0.95 & \uuline{31.70} & 28.51 & 28.56 & \uuline{\textcolor{darkgreen}{\bf31.76}} & 23.49 & \textcolor{darkgreen}{\bf31.02} & 33.80 & 27.26 \\
 &  & 0.99 & \uline{92.63} & 90.30 & 90.33 & \uline{\textcolor{darkgreen}{\bf92.74}} & 87.46 & \textcolor{darkgreen}{\bf92.97} & 93.05 & 90.49 \\
\cline{1-11} \cline{2-11}
\bottomrule
\end{tabular}

CX-S: conditional exact tests based on total successes, UX: unconditional exact test, Asymp.: asymptotic test, GB: generalized version of Boschloo's test
\end{table}
\FloatBarrier
\section{Rejection rates  for randomization test}

\autoref{tab:randomization_test} shows simulation-based unconditional type I error rate and power for the randomization-based Wald test~\citep{SIMON2011767} for several trial sizes and parameter values. 
The rejection rates were obtained by sampling a trial realisation from the specific trial design and parameter values, and then re-randomizing the allocations according to the specific designs to determine the critical value of the Wald statistic given the simulated sequence of outcome variables in the trial. The number of simulations was set to 1,000, and the 
 uncertainty surrounding the estimated rejection rates is indicated by a 95\% confidence interval based on a normal approximation. Highest and lowest power values (determined by overlapping confidence intervals) are indicated by overlined bold~(green) and underlined bold~(red) values, respectively. 

As the randomization-based Wald test depends on the way trial participants are sequentially randomized to treatment, an assumption had to be made about the randomization procedure for equal allocation. For this comparison, we use a permuted block design with blocks of 10 for equal allocation and with blocks of size 12 for the burn-in period. We note here that the permuted block design is not the only choice of randomization procedure for equal allocation, and other randomization methods might lead to a lower degree of predictability~\citep{berger2021roadmap}.

The randomization-based Wald test is an exact test by default; hence, any type I error rate inflation past the 5\% level in \autoref{tab:randomization_test} is due to simulation noise. The randomization-based Wald test shows a relative performance of the designs that looks quite different from the comparison in~\autoref{summary_table}. Mainly, the CMDP-EP and CMDP-P procedures often give the worst performance in terms of power, while curiously, the tempered DBCD Neyman allocation procedure often gives the best performance in terms of power, followed by the DBCD Neyman allocation and equal allocation procedures. An explanation could be that the CMDP procedures have allocation probabilities in the set~$\{0.05, 0.5, 0.95\}$. Hence, while the CMDP procedures are not deterministic, their paths might concentrate probabilistically, leading to low power under the randomization test. 
Instead, DBCD Neyman allocation procedures have a larger range of allocation probabilities, and hence, the aforementioned clustering of allocation paths might not occur for such procedures. The low power for randomization tests under aggressive and close to deterministic RA procedures was already recognized in~\citet{villar2018}. Comparing the maximum power values under the randomization-based Wald test with the power values in \autoref{summary_table}, the randomization-based test gives lower power for~$n=50$ and small or large control success rates, while roughly giving a similar performance for control success rates closer to~$0.5$. For~$n=250$, the randomization-based Wald test often gives comparable maximum power to the maximum values in~\autoref{summary_table}.
\begin{table}[]
    \centering

    \caption{Rejection rate of the randomization-based Wald test under several allocation procedures, parameter values, and trial sizes~$n.$ The margins of error are based on a 95\% confidence interval based on a normal approximation.
    }
    \label{tab:randomization_test}
    \scriptsize\begin{tabular}{lllrrrrrr}
\toprule
 &  &  & NA & TNA & CMDP-P & CMDP-BP & BRAR & EA \\
$\Iend$ & $\theta_\text{C}$ & $\theta_{\text{D}}$ &  &  &  &  &  &  \\
\midrule
\multirow[t]{32}{*}{50} & \multirow[t]{6}{*}{0.01} & 0.01 & 0.8 +/- 0.6 & 1.3 +/- 0.7 & 0.0 +/- 0.0 & 0.5 +/- 0.4 & 1.6 +/- 0.8 & 0.0 +/- 0.0 \\
 &  & 0.1 & 8.8 +/- 1.8 & $\uline{\textcolor{red}{\bf8.0}}$ +/- 1.7 & $\uline{\textcolor{red}{\bf5.6}}$ +/- 1.4 & $\uline{\textcolor{red}{\bf7.1}}$ +/- 1.6 & $\overline{\textcolor{darkgreen}{\bf14.4}}$ +/- 2.2 & $\overline{\textcolor{darkgreen}{\bf11.8}}$ +/- 2.0 \\
 &  & 0.2 & 39.8 +/- 3.0 & 40.1 +/- 3.0 & $\uline{\textcolor{red}{\bf25.2}}$ +/- 2.7 & $\uline{\textcolor{red}{\bf21.2}}$ +/- 2.5 & $\overline{\textcolor{darkgreen}{\bf49.3}}$ +/- 3.1 & $\overline{\textcolor{darkgreen}{\bf51.5}}$ +/- 3.1 \\
 &  & 0.3 & 77.4 +/- 2.6 & 77.5 +/- 2.6 & $\uline{\textcolor{red}{\bf50.3}}$ +/- 3.1 & $\uline{\textcolor{red}{\bf46.0}}$ +/- 3.1 & $\overline{\textcolor{darkgreen}{\bf81.1}}$ +/- 2.4 & $\overline{\textcolor{darkgreen}{\bf83.8}}$ +/- 2.3 \\
 &  & 0.4 & 95.0 +/- 1.4 & $\overline{\textcolor{darkgreen}{\bf95.7}}$ +/- 1.3 & $\uline{\textcolor{red}{\bf71.4}}$ +/- 2.8 & $\uline{\textcolor{red}{\bf66.0}}$ +/- 2.9 & 94.1 +/- 1.5 & $\overline{\textcolor{darkgreen}{\bf97.5}}$ +/- 1.0 \\
 &  & 0.5 & $\overline{\textcolor{darkgreen}{\bf99.5}}$ +/- 0.4 & $\overline{\textcolor{darkgreen}{\bf99.4}}$ +/- 0.5 & $\uline{\textcolor{red}{\bf86.2}}$ +/- 2.1 & $\uline{\textcolor{red}{\bf83.4}}$ +/- 2.3 & 99.0 +/- 0.6 & $\overline{\textcolor{darkgreen}{\bf99.9}}$ +/- 0.2 \\
\cline{2-9}
 & \multirow[t]{6}{*}{0.1} & 0.1 & 3.7 +/- 1.2 & 3.1 +/- 1.1 & 2.4 +/- 0.9 & 2.6 +/- 1.0 & 4.6 +/- 1.3 & 3.2 +/- 1.1 \\
 &  & 0.25 & $\overline{\textcolor{darkgreen}{\bf20.7}}$ +/- 2.5 & $\overline{\textcolor{darkgreen}{\bf22.1}}$ +/- 2.6 & $\uline{\textcolor{red}{\bf14.2}}$ +/- 2.2 & $\uline{\textcolor{red}{\bf12.4}}$ +/- 2.0 & $\overline{\textcolor{darkgreen}{\bf22.0}}$ +/- 2.6 & $\overline{\textcolor{darkgreen}{\bf25.2}}$ +/- 2.7 \\
 &  & 0.4 & $\overline{\textcolor{darkgreen}{\bf64.0}}$ +/- 3.0 & $\overline{\textcolor{darkgreen}{\bf64.2}}$ +/- 3.0 & $\uline{\textcolor{red}{\bf40.7}}$ +/- 3.0 & $\uline{\textcolor{red}{\bf38.5}}$ +/- 3.0 & 60.8 +/- 3.0 & $\overline{\textcolor{darkgreen}{\bf67.5}}$ +/- 2.9 \\
 &  & 0.5 & $\overline{\textcolor{darkgreen}{\bf86.6}}$ +/- 2.1 & $\overline{\textcolor{darkgreen}{\bf87.0}}$ +/- 2.1 & $\uline{\textcolor{red}{\bf66.7}}$ +/- 2.9 & $\uline{\textcolor{red}{\bf62.0}}$ +/- 3.0 & 82.5 +/- 2.4 & $\overline{\textcolor{darkgreen}{\bf87.7}}$ +/- 2.0 \\
 &  & 0.6 & $\overline{\textcolor{darkgreen}{\bf97.3}}$ +/- 1.0 & $\overline{\textcolor{darkgreen}{\bf97.0}}$ +/- 1.1 & 84.0 +/- 2.3 & $\uline{\textcolor{red}{\bf74.3}}$ +/- 2.7 & 94.1 +/- 1.5 & $\overline{\textcolor{darkgreen}{\bf97.3}}$ +/- 1.0 \\
 &  & 0.7 & $\overline{\textcolor{darkgreen}{\bf99.9}}$ +/- 0.2 & $\overline{\textcolor{darkgreen}{\bf99.8}}$ +/- 0.3 & 96.2 +/- 1.2 & $\uline{\textcolor{red}{\bf91.6}}$ +/- 1.7 & 99.0 +/- 0.6 & $\overline{\textcolor{darkgreen}{\bf99.7}}$ +/- 0.3 \\
\cline{2-9}
 & \multirow[t]{6}{*}{0.3} & 0.3 & 4.5 +/- 1.3 & 5.3 +/- 1.4 & 4.0 +/- 1.2 & 3.8 +/- 1.2 & 5.6 +/- 1.4 & 4.7 +/- 1.3 \\
 &  & 0.45 & $\overline{\textcolor{darkgreen}{\bf19.0}}$ +/- 2.4 & $\overline{\textcolor{darkgreen}{\bf17.2}}$ +/- 2.3 & $\uline{\textcolor{red}{\bf10.9}}$ +/- 1.9 & $\uline{\textcolor{red}{\bf9.9}}$ +/- 1.9 & $\uline{\textcolor{red}{\bf13.6}}$ +/- 2.1 & $\overline{\textcolor{darkgreen}{\bf16.5}}$ +/- 2.3 \\
 &  & 0.6 & $\overline{\textcolor{darkgreen}{\bf56.4}}$ +/- 3.1 & $\overline{\textcolor{darkgreen}{\bf53.1}}$ +/- 3.1 & 36.0 +/- 3.0 & $\uline{\textcolor{red}{\bf29.6}}$ +/- 2.8 & 44.8 +/- 3.1 & $\overline{\textcolor{darkgreen}{\bf52.4}}$ +/- 3.1 \\
 &  & 0.7 & $\overline{\textcolor{darkgreen}{\bf79.4}}$ +/- 2.5 & $\overline{\textcolor{darkgreen}{\bf83.3}}$ +/- 2.3 & 62.9 +/- 3.0 & $\uline{\textcolor{red}{\bf53.9}}$ +/- 3.1 & 71.0 +/- 2.8 & $\overline{\textcolor{darkgreen}{\bf81.3}}$ +/- 2.4 \\
 &  & 0.8 & $\overline{\textcolor{darkgreen}{\bf95.6}}$ +/- 1.3 & $\overline{\textcolor{darkgreen}{\bf95.2}}$ +/- 1.3 & 81.4 +/- 2.4 & $\uline{\textcolor{red}{\bf74.7}}$ +/- 2.7 & 90.7 +/- 1.8 & $\overline{\textcolor{darkgreen}{\bf93.7}}$ +/- 1.5 \\
 &  & 0.9 & $\overline{\textcolor{darkgreen}{\bf99.7}}$ +/- 0.3 & $\overline{\textcolor{darkgreen}{\bf99.8}}$ +/- 0.3 & 95.7 +/- 1.3 & $\uline{\textcolor{red}{\bf92.2}}$ +/- 1.7 & 98.2 +/- 0.8 & $\overline{\textcolor{darkgreen}{\bf99.8}}$ +/- 0.3 \\
\cline{2-9}
 & \multirow[t]{6}{*}{0.5} & 0.5 & 4.1 +/- 1.2 & 5.8 +/- 1.5 & 3.5 +/- 1.1 & 2.6 +/- 1.0 & 5.4 +/- 1.4 & 4.4 +/- 1.3 \\
 &  & 0.65 & $\overline{\textcolor{darkgreen}{\bf17.2}}$ +/- 2.3 & $\overline{\textcolor{darkgreen}{\bf17.9}}$ +/- 2.4 & $\uline{\textcolor{red}{\bf12.2}}$ +/- 2.0 & $\uline{\textcolor{red}{\bf9.4}}$ +/- 1.8 & $\overline{\textcolor{darkgreen}{\bf16.5}}$ +/- 2.3 & $\overline{\textcolor{darkgreen}{\bf16.9}}$ +/- 2.3 \\
 &  & 0.7 & $\overline{\textcolor{darkgreen}{\bf29.5}}$ +/- 2.8 & $\overline{\textcolor{darkgreen}{\bf31.0}}$ +/- 2.9 & 18.6 +/- 2.4 & $\uline{\textcolor{red}{\bf13.5}}$ +/- 2.1 & $\overline{\textcolor{darkgreen}{\bf27.0}}$ +/- 2.8 & $\overline{\textcolor{darkgreen}{\bf26.6}}$ +/- 2.7 \\
 &  & 0.8 & $\overline{\textcolor{darkgreen}{\bf58.5}}$ +/- 3.1 & $\overline{\textcolor{darkgreen}{\bf57.6}}$ +/- 3.1 & $\uline{\textcolor{red}{\bf38.8}}$ +/- 3.0 & $\uline{\textcolor{red}{\bf37.5}}$ +/- 3.0 & 50.4 +/- 3.1 & $\overline{\textcolor{darkgreen}{\bf54.5}}$ +/- 3.1 \\
 &  & 0.9 & $\overline{\textcolor{darkgreen}{\bf90.3}}$ +/- 1.8 & $\overline{\textcolor{darkgreen}{\bf89.6}}$ +/- 1.9 & $\uline{\textcolor{red}{\bf65.6}}$ +/- 2.9 & $\uline{\textcolor{red}{\bf62.8}}$ +/- 3.0 & 80.6 +/- 2.5 & $\overline{\textcolor{darkgreen}{\bf87.6}}$ +/- 2.0 \\
 &  & 0.95 & $\overline{\textcolor{darkgreen}{\bf96.4}}$ +/- 1.2 & $\overline{\textcolor{darkgreen}{\bf96.9}}$ +/- 1.1 & $\uline{\textcolor{red}{\bf77.7}}$ +/- 2.6 & $\uline{\textcolor{red}{\bf77.7}}$ +/- 2.6 & 93.4 +/- 1.5 & $\overline{\textcolor{darkgreen}{\bf97.1}}$ +/- 1.0 \\
\cline{2-9}
 & \multirow[t]{5}{*}{0.7} & 0.7 & 5.5 +/- 1.4 & 3.5 +/- 1.1 & 4.2 +/- 1.2 & 3.9 +/- 1.2 & 3.7 +/- 1.2 & 3.9 +/- 1.2 \\
 &  & 0.8 & $\overline{\textcolor{darkgreen}{\bf10.4}}$ +/- 1.9 & $\overline{\textcolor{darkgreen}{\bf10.9}}$ +/- 1.9 & $\overline{\textcolor{darkgreen}{\bf7.7}}$ +/- 1.7 & $\uline{\textcolor{red}{\bf6.2}}$ +/- 1.5 & $\overline{\textcolor{darkgreen}{\bf10.0}}$ +/- 1.9 & $\overline{\textcolor{darkgreen}{\bf9.6}}$ +/- 1.8 \\
 &  & 0.9 & 32.9 +/- 2.9 & $\overline{\textcolor{darkgreen}{\bf39.8}}$ +/- 3.0 & $\uline{\textcolor{red}{\bf20.8}}$ +/- 2.5 & $\uline{\textcolor{red}{\bf21.7}}$ +/- 2.6 & $\overline{\textcolor{darkgreen}{\bf34.8}}$ +/- 3.0 & $\overline{\textcolor{darkgreen}{\bf37.6}}$ +/- 3.0 \\
 &  & 0.95 & 55.7 +/- 3.1 & $\overline{\textcolor{darkgreen}{\bf67.1}}$ +/- 2.9 & $\uline{\textcolor{red}{\bf34.9}}$ +/- 3.0 & $\uline{\textcolor{red}{\bf33.3}}$ +/- 2.9 & 60.9 +/- 3.0 & 57.9 +/- 3.1 \\
 &  & 0.99 & 76.9 +/- 2.6 & $\overline{\textcolor{darkgreen}{\bf91.2}}$ +/- 1.8 & $\uline{\textcolor{red}{\bf51.7}}$ +/- 3.1 & $\uline{\textcolor{red}{\bf47.6}}$ +/- 3.1 & 85.1 +/- 2.2 & 86.4 +/- 2.1 \\
\cline{2-9}
 & \multirow[t]{3}{*}{0.9} & 0.9 & 2.6 +/- 1.0 & 4.7 +/- 1.3 & 2.4 +/- 1.0 & 2.7 +/- 1.0 & 4.1 +/- 1.2 & 3.2 +/- 1.1 \\
 &  & 0.95 & $\uline{\textcolor{red}{\bf4.5}}$ +/- 1.3 & $\overline{\textcolor{darkgreen}{\bf8.7}}$ +/- 1.8 & $\uline{\textcolor{red}{\bf3.6}}$ +/- 1.2 & $\uline{\textcolor{red}{\bf3.8}}$ +/- 1.2 & $\overline{\textcolor{darkgreen}{\bf6.6}}$ +/- 1.5 & $\uline{\textcolor{red}{\bf3.7}}$ +/- 1.2 \\
 &  & 0.99 & 9.8 +/- 1.8 & $\overline{\textcolor{darkgreen}{\bf23.0}}$ +/- 2.6 & $\uline{\textcolor{red}{\bf5.1}}$ +/- 1.4 & $\uline{\textcolor{red}{\bf5.6}}$ +/- 1.4 & $\overline{\textcolor{darkgreen}{\bf22.3}}$ +/- 2.6 & 12.2 +/- 2.0 \\
\cline{1-9} \cline{2-9}
\multirow[t]{32}{*}{250} & \multirow[t]{6}{*}{0.01} & 0.01 & 2.9 +/- 1.0 & 2.9 +/- 1.0 & 2.3 +/- 0.9 & 3.7 +/- 1.2 & 3.6 +/- 1.2 & 1.3 +/- 0.7 \\
 &  & 0.05 & $\overline{\textcolor{darkgreen}{\bf34.1}}$ +/- 2.9 & 32.0 +/- 2.9 & $\uline{\textcolor{red}{\bf21.2}}$ +/- 2.5 & $\uline{\textcolor{red}{\bf22.6}}$ +/- 2.6 & $\overline{\textcolor{darkgreen}{\bf37.5}}$ +/- 3.0 & $\overline{\textcolor{darkgreen}{\bf39.0}}$ +/- 3.0 \\
 &  & 0.1 & $\overline{\textcolor{darkgreen}{\bf88.9}}$ +/- 1.9 & $\overline{\textcolor{darkgreen}{\bf86.3}}$ +/- 2.1 & $\uline{\textcolor{red}{\bf67.2}}$ +/- 2.9 & $\uline{\textcolor{red}{\bf69.7}}$ +/- 2.8 & $\overline{\textcolor{darkgreen}{\bf87.3}}$ +/- 2.1 & $\overline{\textcolor{darkgreen}{\bf89.6}}$ +/- 1.9 \\
 &  & 0.13 & $\overline{\textcolor{darkgreen}{\bf97.7}}$ +/- 0.9 & $\overline{\textcolor{darkgreen}{\bf96.5}}$ +/- 1.1 & $\uline{\textcolor{red}{\bf86.7}}$ +/- 2.1 & $\uline{\textcolor{red}{\bf87.1}}$ +/- 2.1 & $\overline{\textcolor{darkgreen}{\bf96.5}}$ +/- 1.1 & $\overline{\textcolor{darkgreen}{\bf98.2}}$ +/- 0.8 \\
 &  & 0.17 & $\overline{\textcolor{darkgreen}{\bf100.0}}$ +/- 0.0 & $\overline{\textcolor{darkgreen}{\bf99.9}}$ +/- 0.2 & $\uline{\textcolor{red}{\bf97.5}}$ +/- 1.0 & $\uline{\textcolor{red}{\bf96.7}}$ +/- 1.1 & $\overline{\textcolor{darkgreen}{\bf99.7}}$ +/- 0.3 & $\overline{\textcolor{darkgreen}{\bf99.8}}$ +/- 0.3 \\
 &  & 0.2 & $\overline{\textcolor{darkgreen}{\bf100.0}}$ +/- 0.0 & $\overline{\textcolor{darkgreen}{\bf100.0}}$ +/- 0.0 & $\uline{\textcolor{red}{\bf99.0}}$ +/- 0.6 & $\uline{\textcolor{red}{\bf98.5}}$ +/- 0.8 & $\overline{\textcolor{darkgreen}{\bf100.0}}$ +/- 0.0 & $\overline{\textcolor{darkgreen}{\bf100.0}}$ +/- 0.0 \\
\cline{2-9}
 & \multirow[t]{6}{*}{0.1} & 0.1 & 4.9 +/- 1.3 & 5.9 +/- 1.5 & 3.7 +/- 1.2 & 3.9 +/- 1.2 & 4.1 +/- 1.2 & 4.2 +/- 1.2 \\
 &  & 0.2 & $\overline{\textcolor{darkgreen}{\bf57.8}}$ +/- 3.1 & $\overline{\textcolor{darkgreen}{\bf57.3}}$ +/- 3.1 & $\overline{\textcolor{darkgreen}{\bf54.4}}$ +/- 3.1 & $\uline{\textcolor{red}{\bf49.3}}$ +/- 3.1 & $\uline{\textcolor{red}{\bf51.7}}$ +/- 3.1 & $\overline{\textcolor{darkgreen}{\bf58.5}}$ +/- 3.1 \\
 &  & 0.25 & $\overline{\textcolor{darkgreen}{\bf89.4}}$ +/- 1.9 & $\overline{\textcolor{darkgreen}{\bf88.4}}$ +/- 2.0 & $\uline{\textcolor{red}{\bf80.6}}$ +/- 2.5 & $\uline{\textcolor{red}{\bf81.6}}$ +/- 2.4 & $\uline{\textcolor{red}{\bf79.8}}$ +/- 2.5 & 85.0 +/- 2.2 \\
 &  & 0.3 & $\overline{\textcolor{darkgreen}{\bf98.1}}$ +/- 0.8 & $\overline{\textcolor{darkgreen}{\bf97.9}}$ +/- 0.9 & $\uline{\textcolor{red}{\bf95.9}}$ +/- 1.2 & $\uline{\textcolor{red}{\bf95.3}}$ +/- 1.3 & $\uline{\textcolor{red}{\bf94.6}}$ +/- 1.4 & $\overline{\textcolor{darkgreen}{\bf98.3}}$ +/- 0.8 \\
 &  & 0.35 & $\overline{\textcolor{darkgreen}{\bf99.7}}$ +/- 0.3 & $\overline{\textcolor{darkgreen}{\bf99.8}}$ +/- 0.3 & $\overline{\textcolor{darkgreen}{\bf99.4}}$ +/- 0.5 & $\overline{\textcolor{darkgreen}{\bf99.3}}$ +/- 0.5 & $\overline{\textcolor{darkgreen}{\bf99.6}}$ +/- 0.4 & $\overline{\textcolor{darkgreen}{\bf99.9}}$ +/- 0.2 \\
 &  & 0.4 & $\overline{\textcolor{darkgreen}{\bf100.0}}$ +/- 0.0 & $\overline{\textcolor{darkgreen}{\bf100.0}}$ +/- 0.0 & $\overline{\textcolor{darkgreen}{\bf100.0}}$ +/- 0.0 & $\overline{\textcolor{darkgreen}{\bf100.0}}$ +/- 0.0 & $\overline{\textcolor{darkgreen}{\bf99.9}}$ +/- 0.2 & $\overline{\textcolor{darkgreen}{\bf99.9}}$ +/- 0.2 \\
\cline{2-9}
 & \multirow[t]{6}{*}{0.3} & 0.3 & 5.1 +/- 1.4 & 4.3 +/- 1.3 & 4.6 +/- 1.3 & 4.0 +/- 1.2 & 4.2 +/- 1.2 & 4.4 +/- 1.3 \\
 &  & 0.4 & $\overline{\textcolor{darkgreen}{\bf37.2}}$ +/- 3.0 & $\overline{\textcolor{darkgreen}{\bf37.9}}$ +/- 3.0 & $\overline{\textcolor{darkgreen}{\bf32.9}}$ +/- 2.9 & $\overline{\textcolor{darkgreen}{\bf36.9}}$ +/- 3.0 & $\uline{\textcolor{red}{\bf31.7}}$ +/- 2.9 & $\overline{\textcolor{darkgreen}{\bf35.7}}$ +/- 3.0 \\
 &  & 0.5 & $\overline{\textcolor{darkgreen}{\bf90.1}}$ +/- 1.9 & $\overline{\textcolor{darkgreen}{\bf89.6}}$ +/- 1.9 & $\overline{\textcolor{darkgreen}{\bf88.2}}$ +/- 2.0 & $\overline{\textcolor{darkgreen}{\bf86.2}}$ +/- 2.1 & $\uline{\textcolor{red}{\bf82.4}}$ +/- 2.4 & $\overline{\textcolor{darkgreen}{\bf88.6}}$ +/- 2.0 \\
 &  & 0.55 & $\overline{\textcolor{darkgreen}{\bf98.6}}$ +/- 0.7 & $\overline{\textcolor{darkgreen}{\bf98.6}}$ +/- 0.7 & $\overline{\textcolor{darkgreen}{\bf97.0}}$ +/- 1.1 & $\uline{\textcolor{red}{\bf96.6}}$ +/- 1.1 & $\uline{\textcolor{red}{\bf94.1}}$ +/- 1.5 & $\overline{\textcolor{darkgreen}{\bf97.4}}$ +/- 1.0 \\
 &  & 0.6 & $\overline{\textcolor{darkgreen}{\bf99.7}}$ +/- 0.3 & $\uline{\textcolor{red}{\bf99.5}}$ +/- 0.4 & $\overline{\textcolor{darkgreen}{\bf99.7}}$ +/- 0.3 & $\overline{\textcolor{darkgreen}{\bf99.8}}$ +/- 0.3 & $\uline{\textcolor{red}{\bf99.2}}$ +/- 0.6 & $\overline{\textcolor{darkgreen}{\bf100.0}}$ +/- 0.0 \\
 &  & 0.65 & $\overline{\textcolor{darkgreen}{\bf100.0}}$ +/- 0.0 & $\overline{\textcolor{darkgreen}{\bf100.0}}$ +/- 0.0 & $\overline{\textcolor{darkgreen}{\bf100.0}}$ +/- 0.0 & $\overline{\textcolor{darkgreen}{\bf100.0}}$ +/- 0.0 & $\overline{\textcolor{darkgreen}{\bf99.9}}$ +/- 0.2 & $\overline{\textcolor{darkgreen}{\bf100.0}}$ +/- 0.0 \\
\cline{2-9}
 & \multirow[t]{6}{*}{0.5} & 0.5 & 5.0 +/- 1.4 & 5.2 +/- 1.4 & 3.5 +/- 1.1 & 3.5 +/- 1.1 & 4.3 +/- 1.3 & 4.7 +/- 1.3 \\
 &  & 0.6 & $\overline{\textcolor{darkgreen}{\bf36.1}}$ +/- 3.0 & $\overline{\textcolor{darkgreen}{\bf36.1}}$ +/- 3.0 & $\overline{\textcolor{darkgreen}{\bf32.5}}$ +/- 2.9 & $\overline{\textcolor{darkgreen}{\bf31.1}}$ +/- 2.9 & $\uline{\textcolor{red}{\bf29.0}}$ +/- 2.8 & $\overline{\textcolor{darkgreen}{\bf32.2}}$ +/- 2.9 \\
 &  & 0.7 & $\overline{\textcolor{darkgreen}{\bf88.8}}$ +/- 2.0 & $\overline{\textcolor{darkgreen}{\bf90.0}}$ +/- 1.9 & $\overline{\textcolor{darkgreen}{\bf87.7}}$ +/- 2.0 & $\overline{\textcolor{darkgreen}{\bf86.9}}$ +/- 2.1 & $\uline{\textcolor{red}{\bf83.9}}$ +/- 2.3 & $\overline{\textcolor{darkgreen}{\bf87.8}}$ +/- 2.0 \\
 &  & 0.75 & $\overline{\textcolor{darkgreen}{\bf98.3}}$ +/- 0.8 & $\overline{\textcolor{darkgreen}{\bf98.2}}$ +/- 0.8 & $\overline{\textcolor{darkgreen}{\bf98.0}}$ +/- 0.9 & $\overline{\textcolor{darkgreen}{\bf97.0}}$ +/- 1.1 & $\uline{\textcolor{red}{\bf95.9}}$ +/- 1.2 & $\overline{\textcolor{darkgreen}{\bf97.7}}$ +/- 0.9 \\
 &  & 0.8 & $\overline{\textcolor{darkgreen}{\bf100.0}}$ +/- 0.0 & $\overline{\textcolor{darkgreen}{\bf99.9}}$ +/- 0.2 & $\overline{\textcolor{darkgreen}{\bf99.9}}$ +/- 0.2 & $\overline{\textcolor{darkgreen}{\bf99.9}}$ +/- 0.2 & $\uline{\textcolor{red}{\bf99.3}}$ +/- 0.5 & $\overline{\textcolor{darkgreen}{\bf100.0}}$ +/- 0.0 \\
 &  & 0.85 & $\overline{\textcolor{darkgreen}{\bf100.0}}$ +/- 0.0 & $\overline{\textcolor{darkgreen}{\bf100.0}}$ +/- 0.0 & $\overline{\textcolor{darkgreen}{\bf100.0}}$ +/- 0.0 & $\overline{\textcolor{darkgreen}{\bf100.0}}$ +/- 0.0 & $\overline{\textcolor{darkgreen}{\bf99.9}}$ +/- 0.2 & $\overline{\textcolor{darkgreen}{\bf100.0}}$ +/- 0.0 \\
\cline{2-9}
 & \multirow[t]{5}{*}{0.7} & 0.7 & 5.9 +/- 1.5 & 5.0 +/- 1.4 & 4.8 +/- 1.3 & 5.5 +/- 1.4 & 3.9 +/- 1.2 & 4.1 +/- 1.2 \\
 &  & 0.8 & $\overline{\textcolor{darkgreen}{\bf45.1}}$ +/- 3.1 & $\overline{\textcolor{darkgreen}{\bf44.8}}$ +/- 3.1 & $\overline{\textcolor{darkgreen}{\bf40.7}}$ +/- 3.0 & $\uline{\textcolor{red}{\bf37.4}}$ +/- 3.0 & $\uline{\textcolor{red}{\bf38.5}}$ +/- 3.0 & $\overline{\textcolor{darkgreen}{\bf43.1}}$ +/- 3.1 \\
 &  & 0.85 & $\overline{\textcolor{darkgreen}{\bf82.7}}$ +/- 2.3 & $\overline{\textcolor{darkgreen}{\bf81.4}}$ +/- 2.4 & $\overline{\textcolor{darkgreen}{\bf79.8}}$ +/- 2.5 & $\uline{\textcolor{red}{\bf75.6}}$ +/- 2.7 & $\uline{\textcolor{red}{\bf73.6}}$ +/- 2.7 & $\overline{\textcolor{darkgreen}{\bf79.6}}$ +/- 2.5 \\
 &  & 0.9 & $\overline{\textcolor{darkgreen}{\bf98.2}}$ +/- 0.8 & $\overline{\textcolor{darkgreen}{\bf98.4}}$ +/- 0.8 & $\uline{\textcolor{red}{\bf94.6}}$ +/- 1.4 & $\uline{\textcolor{red}{\bf96.1}}$ +/- 1.2 & $\uline{\textcolor{red}{\bf95.5}}$ +/- 1.3 & $\overline{\textcolor{darkgreen}{\bf98.0}}$ +/- 0.9 \\
 &  & 0.95 & $\overline{\textcolor{darkgreen}{\bf99.9}}$ +/- 0.2 & $\overline{\textcolor{darkgreen}{\bf100.0}}$ +/- 0.0 & $\overline{\textcolor{darkgreen}{\bf99.9}}$ +/- 0.2 & $\overline{\textcolor{darkgreen}{\bf99.9}}$ +/- 0.2 & $\overline{\textcolor{darkgreen}{\bf99.8}}$ +/- 0.3 & $\overline{\textcolor{darkgreen}{\bf99.8}}$ +/- 0.3 \\
\cline{2-9}
 & \multirow[t]{3}{*}{0.9} & 0.9 & 5.3 +/- 1.4 & 5.3 +/- 1.4 & 4.6 +/- 1.3 & 4.7 +/- 1.3 & 5.1 +/- 1.4 & 4.2 +/- 1.2 \\
 &  & 0.95 & $\overline{\textcolor{darkgreen}{\bf28.4}}$ +/- 2.8 & $\overline{\textcolor{darkgreen}{\bf33.0}}$ +/- 2.9 & $\uline{\textcolor{red}{\bf19.2}}$ +/- 2.4 & $\uline{\textcolor{red}{\bf21.9}}$ +/- 2.6 & $\overline{\textcolor{darkgreen}{\bf28.5}}$ +/- 2.8 & 25.6 +/- 2.7 \\
 &  & 0.99 & 87.3 +/- 2.1 & $\overline{\textcolor{darkgreen}{\bf92.9}}$ +/- 1.6 & 72.7 +/- 2.8 & $\uline{\textcolor{red}{\bf66.9}}$ +/- 2.9 & 89.0 +/- 1.9 & $\overline{\textcolor{darkgreen}{\bf92.1}}$ +/- 1.7 \\
\cline{1-9} \cline{2-9}
\bottomrule
\end{tabular}

    NA: DBCD Neyman allocation, TNA: tempered DBCD Neyman allocation, CMDP-P: constrained Markov decision process maximizing power, CMDP-BP: constrained Markov decision process maximizing power constraining patient benefit, BRAR: Bayesian response-adaptive randomization, EA: equal allocation.
\end{table}

\end{appendix}

\end{document}